\documentclass[11pt]{article}
\usepackage{bbding}
\fussy
\usepackage{pdfsync}
\usepackage{framed, xcolor}
\usepackage{tocvsec2}
\usepackage{datetime}
\usepackage{pifont}
\usepackage{pdflscape}
\usepackage{subfigure}          
\usepackage{colortbl}
\usepackage{booktabs}
\usepackage{pdfsync}
\usepackage[font=scriptsize,bf]{caption}
\usepackage{tikz,subfigure}
\usepackage[active]{srcltx}
\usepackage[margin=1in]{geometry}
\usepackage{algorithm}
\usepackage{algorithmic}
\usepackage{epsfig,amssymb,amsfonts,amsmath,amsthm}
\usepackage{multirow}
\usepackage[numbers,sort&compress,sectionbib]{natbib}
\bibliographystyle{abbrvnat}

\usepackage{amsfonts}
\usepackage{xspace}
\usepackage{tabularx}
\usepackage{pstricks}
\usepackage{setspace}
\usepackage{xcolor}

\usepackage{rotating}
\usepackage{minitoc}

\usepackage{tikz}
\usepackage{enumerate}

\usepackage{hyperref}
\usetikzlibrary{chains,fit,shapes,arrows}
\usetikzlibrary{shapes.arrows}

\hypersetup{colorlinks=true, linkcolor=red}

\newcommand{\phiw}{\phi_{w}}
\newcommand{\epsI}{\eps_{I}}

\newcommand{\SDDM}{\mathrm{SDDM}}
\newcommand{\SDD}{\mathrm{SDD}}

\newcommand{\SPSD}{\mathrm{SPSD}}
\newcommand{\MDBD}{\mathrm{MDBD}}

\newcommand{\GL}{\mathcal{T}}

\newcommand{\BNTap}{\mathcal{B}_{N,T}(\alpha,p)}
\newcommand{\GLB}{\mathrm{P}_{\mathcal{B}}}

\newcommand{\mGg}{\mathcal{G}_{\gamma}}

\newcommand{\mKLC}{\mathrm{\mathbf{mKLC}}}
\newcommand{\pKLC}{\mathrm{\mathbf{pKLC}}}
\newcommand{\iSS}{\mathrm{\mathbf{SS}}}

\newcommand{\mPS}{\mathrm{\mathbf{mPS}}}
\newcommand{\mSS}{\mathrm{\mathbf{mSS}}}
\newcommand{\fSS}{\mathrm{\mathbf{fSS}}}

\newcommand{\InitSS}{\mathrm{\mathbf{InitSS}}}
\newcommand{\SqrSS}{\mathrm{\mathbf{SqrSS}}}
\newcommand{\pSqrSS}{\mathrm{\mathbf{pSqrSS}}}
\newcommand{\IndSS}{\mathrm{\mathbf{IndSS}}}
\newcommand{\pIndSS}{\mathrm{\mathbf{pIndSS}}}

\newcommand{\PwrSS}{\mathrm{\mathbf{PwrSS}}}
\newcommand{\SSMDBD}{\mathrm{\mathbf{SS\_MDBD}}}
\newcommand{\pSSMDBD}{\mathrm{\mathbf{pSS\_MDBD}}}

\newcommand{\AppDscrPDF}{\mathrm{\mathbf{AppDscrPDF}}}

\newcommand{\pPS}{\mathbf{pPS}}
\newcommand{\PS}{\mathbf{PS}}
\newcommand{\STOVP}{\mathbf{STOVP}}

\newcommand{\wD}{\widetilde{D}}
\newcommand{\wA}{\widetilde{A}}
\newcommand{\wM}{\widetilde{M}}
\newcommand{\wO}{\widetilde{O}}

\newcommand{\hM}{\widehat{M}}
\newcommand{\hA}{\widehat{A}}
\newcommand{\hB}{\widehat{B}}

\newcommand{\hO}{\widehat{O}}
\newcommand{\hw}{\widehat{w}}

\newcommand{\Vp}{\mathbf{V}(p)}

\newcommand{\Di}{D^{-1}}
\newcommand{\Dhp}{D^{1/2}}
\newcommand{\Dhm}{D^{-1/2}}

\newcommand{\mD}{\mathbf{D}}
\newcommand{\mL}{\mathbf{L}}
\newcommand{\mB}{\mathbf{B}}

\newcommand{\mBNp}{\mathbf{B}_{N}(p)}

\newcommand{\diag}{\mathrm{diag}}
\newcommand{\prm}{\prime}

\newcommand{\remove}[1]{}

\newcommand{\N}{\mathbb{N}}
\newcommand{\R}{\mathbb{R}}

\newcommand{\rot}{\mathrm{T}}

\newcommand{\poly}{\operatorname{poly}}

\newcommand{\eps}{\epsilon}

\renewcommand{\leq}{\leqslant}
\renewcommand{\geq}{\geqslant}

\newcommand{\lemref}[1]{Lemma~\ref{lem:#1}}

\renewcommand{\eps}{\varepsilon}


\newcommand{\mylemma}[2]{\begin{lem}\label{lem:#1}#2\end{lem}}

\newtheorem{problem}{Problem}
\newtheorem{thm}{Theorem}  
\newtheorem{fact}[thm]{Fact}
\newtheorem{lem}[thm]{Lemma}

\newtheorem{clm}[thm]{Claim}
\newtheorem{cor}[thm]{Corollary}

\newtheorem{rem}[thm]{Remark}

\numberwithin{thm}{section}

\newcommand{\mat}[1]{\boldsymbol{\mathbf{#1}}}

\title{An Efficient Parallel Algorithm for Spectral Sparsification of Laplacian and SDDM Matrix Polynomials}

\author{
Gorav Jindal \qquad Pavel Kolev\footnote{This work has been funded by the Cluster of Excellence ``Multimodal Computing and Interaction" within the Excellence Initiative of the German Federal Government.}\\
Max-Planck-Institut f\"{u}r Informatik, Saarbr\"{u}cken, Germany\\
\{gjindal,pkolev\}@mpi-inf.mpg.de
}

\date{}

\begin{document}

\maketitle

\begin{abstract}
A mixture of discrete Binomial distributions ($\MDBD$), denoted by $\BNTap$, is a set of pairs $\{(B(p_{i},N),\alpha_{i})\}_{i=1}^{T}$, where $B(\cdot,\cdot)$ denotes the Binomial distribution, all $p_{i}\in(0,1)$ are distinct, $\sum_{i=1}^{T}\alpha_{i}\leq1$ and all $\alpha_{i}\in(0,1)$. A vector $\gamma$ is induced by $\MDBD$ if $\gamma_{i}=\sum_{j=1}^{T}\alpha_{j}\cdot B_{N,i}(p_{j})$ for all $i\in[0:N]$, where $B_{N,i}(p)={N \choose i}p^{i}(1-p)^{N-i}$.

We prove for ``large'' class $\mathcal{C}$ of continuous probability density functions (p.d.f.), that for every $w\in\mathcal{C}$ there exists $\MDBD$ with $T\geq N\sqrt{\phiw/\delta}$ that $\delta$-approximates a \emph{discretized} p.d.f. $\widehat{w}(i/N)\triangleq w(i/N)/[\sum_{\ell=0}^{N}w(\ell/N)]$ for all $i\in[3:N-3]$, where $\phiw\geq\max_{x\in[0,1]}|w(x)|$. Moreover, we propose an efficient parallel algorithm that on input p.d.f. $w\in\mathcal{C}$ and parameter $\delta>0$, outputs $\MDBD$ that induces a vector $\gamma$ which $\delta$-approximates $\widehat{w}$. Also, we give an efficient parallel algorithm that on input a discretized p.d.f. $\widehat{w}$ induced by $\MDBD$ with $T=N+1$ and the corresponding vector $p$, outputs exactly the coefficients $\alpha$.

Cheng et al.~\cite{CCLPT15} proposed the first sequential algorithm that on input a discretized p.d.f. $\beta$, $B=D-M$ that is either Laplacian or $\SDDM$ matrix and parameter $\eps\in(0,1)$, outputs in time $\hO(\eps^{-2} m N^2)$\footnote{$\hO(\cdot)$ notation hides $\poly(\log n,\log N)$ factors.} a spectral sparsifier of a matrix-polynomial $D-\hM_{N} \approx_{\eps} D-D\sum_{i=0}^{N}\beta_{i}(\Di M)^i$. However, given $\MDBD$ $\BNTap$ that induces a discretized p.d.f. $\gamma$, to apply the algorithm in~\cite{CCLPT15} one has to explicitly precompute in $O(NT)$ time the vector $\gamma$. Instead, we give two algorithms (sequential and parallel) that bypass this explicit precomputation.

We propose a faster sequential algorithm that on input $\MDBD$ $\BNTap$ with $N=2^k$ for $k\in\N_+$ outputs in $\hO(\eps^{-2}m + \eps^{-4}nT)$ time the desired spectral sparsifier. Moreover, our algorithm is parallelizable and runs in $\hO(\eps^{-2}m + \eps^{-4}nT)$ work and $O(\log N\cdot\poly(\log n)+\log T)$ depth. Our main algorithmic contribution is to propose the first efficient parallel algorithm that on input continuous p.d.f. $w\in\mathcal{C}$, matrix $B=D-M$ as above, outputs a spectral sparsifier of matrix-polynomial whose coefficients approximate component-wise the discretized p.d.f. $\hw$.

Our results yield the first efficient and parallel algorithm that runs in nearly linear work and poly-logarithmic depth and analyzes the long term behaviour of Markov chains in non-trivial settings. In addition, we strengthen the Spielman and Peng's~\cite{PS14} parallel $\SDD$ solver by introducing a simple parallel preprocessing step.
\end{abstract}

\thispagestyle{empty}

\setcounter{page}{0}

\newpage
\tableofcontents
\newpage

\section{Introduction\label{sec:Intro}}

In their seminal work Spielman and Teng~\cite{ST14} introduced the
notion of spectral sparsifiers and proposed the first nearly linear
time algorithm for spectral sparsification. In consecutive work, Spielman
and Srivastava~\cite{SS08} proved that spectral sparsifiers with
$O(\eps^{-2}n\log n)$ edges exist and can be computed
in $\wO(m\log^{c}n\cdot\log (w_{max}/w_{min}))$\footnote{The $\wO(\cdot)$ notation hides $O(\poly(\log\log n))$ factors.} time for any undirected graph $G=(V,E,w)$. The computational bottleneck of their algorithm is to approximate
the solutions of logarithmically many $\SDD$\footnote{$\SDD$ is the class of symmetric and diagonally dominant matrices.} systems.

Recently, Koutis, Miller and Peng~\cite{KMP11} developed
an improved solver for $\SDD$ systems that works in $\wO(m\log n\cdot\log (1/\eps))$
time. In a survey result~\cite[Theorem 3]{KL13} Kelner and Levin showed that in $\wO(m\log^{2}n)$ time all effective resistances can be approximated up to a constant factor. This yields a $(1\pm\eps)$-spectral sparsifier with only a constant factor blow-up of non-zero edges $O(\eps^{-2}n\log n)$. Although there are faster by a $\mathrm{poly}\log$-factor sparsification algorithms~\cite{KLP12} they output spectral sparsifiers with $\mathrm{poly}\log$-factor more edges.

Spielman and Peng~\cite{PS14} introduced the notion of
\emph{sparse approximate inverse chain} of $\SDDM$\footnote{$\SDDM$ is the class of positive definite $\SDD$ matrices with non-positive off-diagonal entries.} matrices. They proposed the first parallel algorithm that finds such chains and runs in work $\wO(m\log^{3}n\cdot\log^{2}\kappa)$ and depth $O(\log^{c}n\cdot\log\kappa)$, where $\kappa$
is the condition number of the $\SDDM$ matrix with $m$ non-zero entries and dimension $n$. Furthermore, they showed that in $\wO(\eps^{-2}m\log^{3}n)$
 time a spectral sparsifier $\wD-\wA\approx_{\eps}D-A\Di A$
can be computed with $nnz(\wA)\leq O(\eps^{-2}n\log n)$.
In a follow up work, Cheng et al.~\cite{CCLPT14} designed an algorithm that computes a sparse approximate generalized chain $\widetilde{C}$ such that
$\widetilde{C}\widetilde{C}^{\rot}\approx_{\eps}M^{p}$ for any $\SDDM$ matrix $M$ and $|p|\leq1$. The chain $\widetilde{C}$ is constructed iteratively and it involves a normalization step that produces a sparsifier $D-\widetilde{M_{i+1}}\approx_{\eps}D-\widetilde{M_{i}}\Di \widetilde{M_{i}}$ that is expressed in terms of the original diagonal matrix $D$, for all iterations $i$.

Sinclair and Jerrum~\cite{SJ89} analyzed Markov chains with transition matrices $W=[I+\Di A]/2$, corresponding to lazy random walks. They proved that these walks converge fast to stationary distribution, defined by $\pi_u=d_u/(\sum_{u\in V}d_u)$, after $O(\phi_{G}^{-2}\log(\min_{u\in V}\pi_u^{-1}))$\footnote{Graph conductance $\phi_{G}\triangleq\min_{\mu(S)\leq \mu(V)/2}\phi(S)$, where $\phi(S)\triangleq|w(S,\overline{S})|/\mu(S)$ and $\mu(S)\triangleq\sum_{u\in S}d_u$. } steps. Andersen et al.~\cite{ACL06} gave an efficient local clustering algorithm that relies on a lazy variation of PageRank, the transition matrix of which is defined by $\sum_{t=0}^{\infty}\alpha(1-\alpha)^{t}W^t$, where $\alpha>0$ is a parameter. Their local algorithm uses a truncated (finite summation) version of the preceding transition matrix.

Recently, Cheng et al.~\cite{CCLPT15} initiated the study of computing spectral sparsifiers $D-\hA\approx_{\eps}D-D\sum_{i=1}^{N}\xi_{i}(\Di A)^{i}$ of random walk Laplacian matrix polynomials, where $\xi$ is a probability distribution over $[1:N]$, $D-A$ is a Laplacian matrix and $\sum_{i=1}^{N}\xi_{i}(\Di A)^{i}$ is a random walk transition matrix. These matrix polynomials capture the long term behaviour of Markov chains. Moreover, a sparsifier of a matrix polynomial yields a multiplicative approximation of the expected generalized ``escaping probability''~\cite{OT12,KS14} of random walks. Cheng et al.~\cite{CCLPT15} gave the first sequential algorithm that computes a spectral sparsifier of a random walk Laplacian matrix polynomial and runs in time $\wO(\eps^{-2}\cdot mN^{2}\cdot\log^{c_{1}}n\cdot\log^{c_{2}}N)$ for some small constants $c_{1},c_{2}$.

\section{Our Results}

The lazy random walk length $N$ in the regime of interest in~\cite{SJ89,ACL06,OT12} is of order $N=\Theta(\poly(n))$. The quadratic runtime dependance on $N$ makes the algorithm in~\cite{CCLPT15} prohibitively expensive for analysing the long term behaviour of Markov chains. In this paper, we overcome this issue for ``large'' class of probability distributions $\gamma$ over $[0:N]$ that are induced by mixture of discrete Binomial distributions ($\MDBD$) with $N=2^k$ for $k\in\N_+$. Our results are summarized as follows.

In Subsection~\ref{subsec:RPMDBD}, we analyze the representational power of $\MDBD$. In Subsection~\ref{subsec:SSMP}, we give a sequential and a parallel algorithm for computing a spectral sparsifier of matrix polynomials induced by $\MDBD$. In Subsection~\ref{subsec:Apps}, we propose the first parallel algorithm that runs in nearly linear work and poly-logarithmic depth and analyzes the long term behaviour of Markov chains in non-trivial settings. In Subsection~\ref{subsec:FSSMS}, we strengthen the Spielman and Peng's~\cite{PS14} parallel $\SDD$ solver.

\subsection{Representational Power of $\MDBD$}\label{subsec:RPMDBD}

Let $B(p,N)$ be Binomial distribution for some parameters $p\in(0,1)$ and $N\in\N_+$. $\MDBD$ is a set of pairs $\{(B(p_{i},N),\alpha_{i})\}_{i=1}^{T}$, denoted by $\BNTap$, that satisfies the following two conditions:

1. (distinctness) $\, p_{i}\in(0,1)$ and $p_{i}\neq p_{j}$ for all $i\neq j\in[1:T]$;

2. (positive linear combination) $\, \sum_{i=1}^{T}\alpha_{i}\leq1$ and $\alpha_{i}\in(0,1)$ for all $i\in[1:T]$.

We prove in Section~\ref{sec:APMDBD} that for every function $w$ in a ``large'' class of continuous p.d.f., there exists $\MDBD$ that induces a component-wise approximation of $w$.

\begin{thm}[$\MDBD$ Yields a Component-Wise Approximation]\label{thmMDBD}
Let $w(x)$ be a four times differentiable p.d.f., $\eps_{I}>0$ a parameter and $I=[0,1]$ an interval. Suppose there is an integer $N_{0}\in\N$ and reals $\mu\in(0,1)$ and $\phiw\geq1$ such that:\\
1) $\max_{x\in I}|w^{\prm\prm}(x)|\leq 2\phiw \cdot N_{0}^{2}$,$\,\,\,\,\,\,\,$ 2) $\max_{x\in I}|w^{\prm}(x)|\leq\frac{1}{2}\phiw\cdot N_{0}$,$\,\,\,\,\,\,\,\,\,\,\,\,\,\,$3) $\max_{x\in I}|w(x)|\leq \phiw$,\\
4) $\max_{x\in I}|b_{2}(x)|\leq\frac{1}{2}\mu\cdot N_{0}^{2}$ ,$\quad\,\,\,\,\,\,$5) $\max_{x\in I}|b_{1}(x)|\leq\frac{1}{2}\mu\cdot N_{0}$,\\
where the functions $b_1,b_2$ are defined by $b_{1}(x)=\frac{1}{w(x)}[-w(x)+(1-2x)w^{\prm}(x)+\frac{1}{2}x(1-x)w^{\prm\prm}(x)]$
and $b_{2}(x)=\frac{1}{w(x)}[w(x)-3(1-2x)w^{\prm}(x)+(1-6x+6x^{2})w^{\prm\prm}(x)+\frac{5}{6}x(1-x)(1-2x)w^{\prm\prm\prm}(x)+\frac{1}{8}x^{2}(1-x)^{2}w^{\prm\mathrm{v}}(x)].$
Then for every $N\geq N_{0}$, any $T\geq\Omega(N\sqrt{\phiw/\eps_{I}})$ and all $i\in[3:N-3]$ there is $\eta_{i}\in[-\mu,\mu]$ such that
\begin{equation}\label{eq:fin_Approx}
\left|(1+\eta_{i})\frac{w(i/N)}{N} - \sum_{j=1}^{T}F_{i}(j/[T+1])\right| \leq \frac{\eps_{I}}{N},
\end{equation}
where $F_{i}(x)\triangleq (w(x)/[T+1])\cdot B_{N,i}(x)$ and $B_{N,i}(x)\triangleq{N \choose i}x^{i}(1-x)^{N-i}$.
\end{thm}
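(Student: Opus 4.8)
The plan is to choose $\eta_{i}$ so that the left-hand side of \eqref{eq:fin_Approx} becomes \emph{exactly} a quadrature error, then bound that error and separately check the constraint $\eta_{i}\in[-\mu,\mu]$. Concretely, I would define $\eta_{i}$ by $(1+\eta_{i})\,w(i/N)/N=\int_{0}^{1}w(x)B_{N,i}(x)\,dx$, so that the quantity inside the absolute value in \eqref{eq:fin_Approx} equals $\int_{0}^{1}w(x)B_{N,i}(x)\,dx-\sum_{j=1}^{T}F_{i}(j/[T+1])$. The first task is to recognize this difference as a quadrature error. Writing $h=1/(T+1)$, $x_{j}=jh$, and $g(x)=w(x)B_{N,i}(x)$, we have $F_{i}(x_{j})=h\,g(x_{j})$, so the sum is $h\sum_{j=1}^{T}g(x_{j})$. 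Since $B_{N,i}(x)\sim x^{i}$ near $0$ and $\sim(1-x)^{N-i}$ near $1$, the restriction $i\in[3:N-3]$ makes $B_{N,i}$ and its first two derivatives vanish at both endpoints; in particular $g(0)=g(1)=0$, so $h\sum_{j=1}^{T}g(x_{j})=h\sum_{j=0}^{T+1}g(x_{j})$ is precisely the composite trapezoidal approximation of $\int_{0}^{1}g$.

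For the quadrature error I would apply $\bigl|\,h\sum_{j=0}^{T+1}g(x_{j})-\int_{0}^{1}g\,\bigr|\le\frac{h^{2}}{12}\int_{0}^{1}|g^{\prm\prm}(x)|\,dx$ and expand $g^{\prm\prm}=w^{\prm\prm}B_{N,i}+2w^{\prm}B_{N,i}^{\prm}+wB_{N,i}^{\prm\prm}$. The elementary facts I need, uniform over $i\in[3:N-3]$, are $\int_{0}^{1}B_{N,i}(x)\,dx=1/(N+1)$ (a Beta integral), $\int_{0}^{1}|B_{N,i}^{\prm}(x)|\,dx=2\max_{x}B_{N,i}(x)=O(1)$ (total variation of a unimodal function whose peak is $O(1)$ for such $i$), and $\int_{0}^{1}|B_{N,i}^{\prm\prm}(x)|\,dx=O(N)$ (total variation of $B_{N,i}^{\prm}$, using $\max_{x}|B_{N,i}^{\prm}|=O(N)$ and the sign pattern of $B_{N,i}^{\prm}$ guaranteed by $i\in[3:N-3]$). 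Hypotheses (1)--(3) bound $\max|w^{\prm\prm}|,\max|w^{\prm}|,\max|w|$ by $2\phiw N_{0}^{2},\tfrac12\phiw N_{0},\phiw$, hence (since $N_{0}\le N$) by the same bounds with $N_{0}$ replaced by $N$; so each of the three terms integrates to $O(\phiw N)$, giving $\int_{0}^{1}|g^{\prm\prm}|=O(\phiw N)$. The error is therefore $O(\phiw N/T^{2})$, which is at most $\eps_{I}/N$ exactly when $T\ge\Omega(N\sqrt{\phiw/\eps_{I}})$; this is the origin of the stated lower bound on $T$.

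It remains to verify $|\eta_{i}|\le\mu$, which I would obtain from an asymptotic expansion of $\int_{0}^{1}w\,B_{N,i}$ in powers of $1/N$. Using the exact moments $\int_{0}^{1}x^{r}B_{N,i}(x)\,dx=\binom{N}{i}(i+r)!(N-i)!/(N+r+1)!$ (equivalently, the moments of a $\mathrm{Beta}(i+1,N-i+1)$ law scaled by $1/(N+1)$), the central moments about $x_{0}=i/N$ satisfy $\mu_{1},\mu_{2}=O(1/N)$ and $\mu_{3},\mu_{4}=O(1/N^{2})$, with $\mu_{1}=(1-2x_{0})/N+O(1/N^{2})$ and $\mu_{2}=x_{0}(1-x_{0})/N+O(1/N^{2})$. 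Taylor expanding $w$ to fourth order at $x_{0}$, multiplying by the exact prefactor $N/(N+1)=1-1/N+O(1/N^{2})$, and collecting powers of $1/N$ gives $\eta_{i}=b_{1}(x_{0})/N+b_{2}(x_{0})/N^{2}+(\text{remainder})$; the $1/N$ coefficient reproduces $b_{1}(x)=\frac1{w}[-w+(1-2x)w^{\prm}+\frac12x(1-x)w^{\prm\prm}]$ (its three contributions being $-1$ from the prefactor, $(w^{\prm}/w)(1-2x_{0})$ from $\mu_{1}$, and $(w^{\prm\prm}/2w)x_{0}(1-x_{0})$ from $\mu_{2}$), and the $1/N^{2}$ coefficient reproduces $b_{2}$. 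Hypotheses (4)--(5) then give $|b_{1}(x_{0})|/N\le\mu/2$ and $|b_{2}(x_{0})|/N^{2}\le\mu/2$ for $N\ge N_{0}$, which together with the lower-order remainder yields $|\eta_{i}|\le\mu$.

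The hardest part is the $1/N^{2}$ bookkeeping behind $b_{2}$: it requires the third and fourth central moments of $B_{N,i}$ expanded to the correct order and combined with the third- and fourth-order Taylor coefficients of $w$, and the remainder must be controlled even though $w$ is assumed only four times differentiable, so no fifth derivative is available. I would carry the fourth-order term with an integral-form remainder and estimate the deviation of $w^{(4)}(t)$ from $w^{(4)}(x_{0})$ using continuity of $w^{(4)}$, so that its main part is absorbed into the $w^{(4)}$ term of $b_{2}$ and only a genuinely lower-order remainder survives. By contrast, the quadrature step is routine once the total-variation estimates $\int_{0}^{1}|B_{N,i}^{\prm}|=O(1)$ and $\int_{0}^{1}|B_{N,i}^{\prm\prm}|=O(N)$ are established uniformly in $i\in[3:N-3]$.
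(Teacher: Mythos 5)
Your proposal is correct and follows essentially the same route as the paper's proof: split the error into a trapezoid-rule quadrature error for the sum versus $\int_{0}^{1}w(x)B_{N,i}(x)\,dx$, bound it via the same three-term decomposition of $(wB_{N,i})^{\prm\prm}$ using $\int_{0}^{1}B_{N,i}=\frac{1}{N+1}$ and $\int_{0}^{1}|B_{N,i}^{(p)}|=O(N^{p-1})$, and absorb the remaining discrepancy into $\eta_{i}$ controlled by hypotheses (4)--(5). The only divergence is that the paper imports the expansion $\int_{0}^{1}wB_{N,i}=\frac{w(i/N)}{N}[1+b_{1}(i/N)/N+b_{2}(i/N)/N^{2}+O(N^{-3})]$ as a black box from Hald's theorem (Theorem~\ref{thm_Hald68}), whereas you propose to re-derive it from the central moments of the $\mathrm{Beta}(i+1,N-i+1)$ law; your plan for that derivation is sound and, regarding the fourth-order remainder, somewhat more careful than the citation the paper relies on.
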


For every function $w$ that satisfies the hypothesis in Theorem~\ref{thmMDBD} we associate $\MDBD$ $\BNTap$ that is defined by $p_j=j/(T+1)$ and $\alpha_j=w(p_j)/(T+1)$ for all $j\in[1:T]$. Moreover, in Section~\ref{sec:approxDiscrPDFs} we give an efficient parallel algorithm that on input a continuous p.d.f. $w$ (satisfying the conditions in Theorem~\ref{thm_AppDscrPDF}) and integer $N\in\N_+$, outputs $\MDBD$ that induces a discretized p.d.f. which approximates component-wise a desired discretized p.d.f. $\hw$.

\begin{thm}[An Efficient Parallel Algorithm for Finding $\MDBD$]\label{thm_AppDscrPDF}
Let $w(x)=C\cdot f(x)$ be a p.d.f. that satisfies the hypothesis of Theorem~\ref{thmMDBD}. Suppose also $a)\;0\leq f(x)\leq1$,\,\,\,\,$b)\;\frac{1}{2}[f(0)+f(1)] \geq \Omega(1)$,\,\,\,\,$c)\;1\leq C\leq o(N)$,\,\,\,\,$ and $\,\,\,\,$d)\;\int_{0}^{1}\left|f^{(2)}(x)\right|dx\leq o(N)$. Then there is a parallel algorithm $\AppDscrPDF$ that on input $w(x)$ as above, integer $N\in\N_+$ and parameter $\eps_{I}>0$, outputs in $O(N\sqrt{\phiw/\epsI})$ work and $O(\log(N\sqrt{\phiw/\epsI}))$ depth $\MDBD$ $\BNTap$ that induces a discretized probability distribution $\gamma/[1-\delta_w]$ over $[0:N]$ such that for all $i\in[3:N-3]$
\[
\frac{\gamma_{i}}{1-\delta_{w}}\in \left[(1+2\eta_{i})\cdot\hw(i/N) \pm \frac{2\eps_{I}}{S_{N+1,N}}\right],
\]
where the target discretized p.d.f. is defined by $\hw(i/N)\triangleq w(i/N)/S_{N+1,N}$ for all $i\in[0:N]$, $\delta_w\triangleq 1-\frac{S_{T,T+1}/(T+1)}{S_{N+1,N}/N}$, $S_{N+1,N}\triangleq\sum_{j=0}^{N}w(j/N)$ and $S_{T,T+1}\triangleq\sum_{k=1}^{T}w(j/[T+1])$. Moreover, it holds that $\delta_w\in[0,o(1)]$.
\end{thm}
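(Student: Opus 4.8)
The plan is to feed Theorem~\ref{thm_AppDscrPDF} the $\MDBD$ that is canonically associated with $w$ right after Theorem~\ref{thmMDBD}, namely $\BNTap$ with $p_j=j/(T+1)$, $\alpha_j=w(p_j)/(T+1)$ for $j\in[1:T]$ and $T=\lceil c\,N\sqrt{\phiw/\epsI}\rceil$, and then to check that the single scalar renormalization by $1-\delta_w$ produces the claimed component-wise guarantee. The algorithm $\AppDscrPDF$ merely evaluates these $T$ pairs in parallel (one processor per $j$); assuming unit-cost evaluation of $w$ this is $O(T)=O(N\sqrt{\phiw/\epsI})$ work, and a parallel prefix-sum to form the normalizer $S_{T,T+1}$ adds only $O(\log T)=O(\log(N\sqrt{\phiw/\epsI}))$ depth. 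Because the induced vector is exactly $\gamma_i=\sum_{j=1}^{T}F_i(j/[T+1])$, Theorem~\ref{thmMDBD} applies verbatim and gives, for each $i\in[3:N-3]$, some $\eta_i\in[-\mu,\mu]$ and $|\theta_i|\le1$ with $\gamma_i=(1+\eta_i)w(i/N)/N+\theta_i\,\epsI/N$ via \eqref{eq:fin_Approx}.

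Next I would reduce the target bound to controlling two scalars. Using $\sum_i B_{N,i}(p_j)=1$ we get $\sum_{i=0}^{N}\gamma_i=\sum_{j=1}^{T}\alpha_j=S_{T,T+1}/(T+1)$, so setting $q\triangleq S_{N+1,N}/N$ and $r\triangleq S_{T,T+1}/(T+1)$ we have exactly $1-\delta_w=r/q$ and $\gamma_i/(1-\delta_w)=(q/r)\gamma_i$. Substituting the estimate from Theorem~\ref{thmMDBD} and using the identities $w(i/N)/N=q\,\hw(i/N)$ and $1/N=q/S_{N+1,N}$ gives
\[
\frac{\gamma_i}{1-\delta_w}\in\left[\frac{q^2}{r}(1+\eta_i)\,\hw(i/N)\pm\frac{q^2}{r}\cdot\frac{\epsI}{S_{N+1,N}}\right].
\]
Everything then reduces to showing that $\kappa\triangleq q^2/r$ obeys $\kappa\le2$ (turning the radius into $2\epsI/S_{N+1,N}$) and $\kappa=1+o(1)$ (so that $\kappa(1+\eta_i)$ can be rewritten as $1+2\eta_i'$ for a relabeled $\eta_i'\in[-\mu,\mu]$, valid once $N$ is large enough that the $o(1)$ slack drops below $\mu$), together with $\delta_w\in[0,o(1)]$ and the admissibility of the $\MDBD$.

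The heart of the argument, and the step I expect to be most delicate, is the quadrature bookkeeping for $q$ and $r$ under hypotheses (a)--(d). Both approximate $\int_0^1 w=1$: the trapezoidal identity gives $q=1+\tfrac{w(0)+w(1)}{2N}+O(\tfrac{1}{N^2}\int_0^1|w''|)$, and since $r$ omits the two endpoints, $r=1-\tfrac{w(0)+w(1)}{2(T+1)}+O(\tfrac{1}{T^2}\int_0^1|w''|)$. Writing $w=Cf$, conditions (c) ($C\le o(N)$) and (d) ($\int|f''|\le o(N)$) make both error terms $O(C\!\int|f''|/N^2)=o(1)$, while condition (b) ($f(0)+f(1)=\Omega(1)$) makes the endpoint corrections positive and of strictly larger order than the errors (their ratio is $\int|f''|/(N(f(0)+f(1)))=o(1)$). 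Hence $r\le1\le q$, which yields both $\delta_w=(q-r)/q\ge0$ and $\sum_j\alpha_j=r\le1$; and $q,r=1+o(1)$ force $\delta_w=o(1)$ and $\kappa=q^2/r=1+o(1)\le2$. Admissibility is then immediate: (a) and (c) give $0<\alpha_j=Cf(p_j)/(T+1)\le C/(T+1)<1$ since $C\le o(N)\le T$, the $p_j$ are distinct points of $(0,1)$, and $\sum_j\alpha_j=r\le1$. The only real obstacle is quantitative: I must verify that the positive endpoint corrections genuinely dominate the second-derivative error terms (this is exactly what (b)--(d) buy), since this domination is what simultaneously secures $\delta_w\ge0$, the validity $\sum_j\alpha_j\le1$, and the $\kappa=1+o(1)$ estimate.
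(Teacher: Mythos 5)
Your overall strategy is the paper's: instantiate Theorem~\ref{thmMDBD} at the equispaced nodes $p_j=j/(T+1)$ with $T=\lceil N\sqrt{\phiw/\epsI}\rceil$, and control the two normalizing sums $q=S_{N+1,N}/N$ and $r=S_{T,T+1}/(T+1)$ by trapezoid-rule error bounds under hypotheses (a)--(d); your quadrature bookkeeping (endpoint corrections of order $Cd/N$ dominating the $C\int|f''|/N^2$ error, hence $r\le 1\le q$ and $q,r=1+o(1)$) is exactly the content of the paper's Lemma~\ref{lem_RatioLambda}, and your work/depth accounting matches.

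There is, however, one concrete slip: you take $\alpha_j=w(p_j)/(T+1)$, whereas the paper sets $\alpha_j=w(p_j)\cdot N/[S_{N+1,N}(T+1)]$, i.e.\ your weights are missing the factor $1/q$. This matters because the theorem asserts that $\gamma/[1-\delta_w]$ is a discretized \emph{probability distribution} over $[0:N]$. With the paper's weights one gets $\sum_j\alpha_j=\frac{S_{T,T+1}/(T+1)}{S_{N+1,N}/N}=r/q=1-\delta_w$ exactly, hence $\sum_i\gamma_i/(1-\delta_w)=1$. With your weights $\sum_i\gamma_i=r$, so $\sum_i\gamma_i/(1-\delta_w)=q=S_{N+1,N}/N$, which is $1+o(1)$ but not $1$; that part of the conclusion fails for the object you construct, and your proposal never checks it. The symptom is visible in your own algebra: the spurious factor $\kappa=q^2/r$ (rather than $q/r=1/(1-\delta_w)$) is precisely the extra $q$ you failed to divide out of $\alpha_j$. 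The repair is one line --- rescale $\alpha_j$ by $N/S_{N+1,N}$ --- after which $\kappa$ collapses to $1/(1-\delta_w)$, the bound $1/(1-\delta_w)\le 2$ and the absorption of the $o(1)$ into $2\eta_i$ go through as you describe, and your argument coincides with the paper's proof (admissibility of the $\alpha_j$ is unaffected since the rescaling factor is $1/q=1+o(1)$).
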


In Appendix~\ref{appsec:AGPD}, we illustrate the representational power of $\MDBD$ by applying Theorem~\ref{thm_AppDscrPDF} for two canonical continuous p.d.f.: the Uniform distribution and the Exponential Families.

\paragraph*{Exact Recovery}
Interestingly, in case when a discretized p.d.f. $\hw$ is induced by $\MDBD$ $\BNTap$ with exactly $T=N+1$ distinct Binomial distributions, we give in Section~\ref{sec:StBVls} an efficient parallel algorithm that on input vectors $p$ and $\hw$, outputs the vector $\alpha$ in $O(N\log^{2}N)$ work and $O(\log^{c}N)$ depth for some constant $c\in\N_+$.

\begin{thm}[Canonical Instances Admit Exact Recovery]\label{thm_my_Inv_Bnp}
Suppose $p\in(0,1)^{N+1}$ is a vector
such that $0<p_{i}\neq p_{j}<1$ for every $i\neq j$ and $\hw\in(0,1)^{N+1}$ is a discretized p.d.f. that is induced by $\MDBD$ $\mathcal{B}_{N,N+1}(\alpha,p)$
that satisfies $\hw(i)=\sum_{j=1}^{N+1}\alpha_{j}\cdot B_{N,i}(p_{j})$
for every $i\in[0:N]$. Then there is a parallel algorithm that
on input the vectors $p$ and $\hw$, outputs the vector $\alpha\in(0,1)^{N+1}$ in $O(N\log^{2}N)$ work and $O(\log^{c}n)$ depth, for some constant $c\in\N_+$.
\end{thm}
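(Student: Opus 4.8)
The plan is to reduce the recovery of $\alpha$ to the solution of a transposed Vandermonde linear system and then to invoke fast parallel polynomial arithmetic. First I would rewrite the defining relation $\hw(i)=\sum_{j=1}^{N+1}\alpha_j B_{N,i}(p_j)$ as a matrix equation $M\alpha=\hw$, where $M_{ij}=B_{N,i}(p_j)=\binom{N}{i}p_j^{i}(1-p_j)^{N-i}$ for $i\in[0:N]$ and $j\in[1:N+1]$. Setting $t_j\triangleq p_j/(1-p_j)$ and factoring, $M_{ij}=\binom{N}{i}(1-p_j)^{N}t_j^{i}$, so $M=D_1VD_2$ with $D_1=\diag_i(\binom{N}{i})$, $D_2=\diag_j((1-p_j)^{N})$, and Vandermonde matrix $V_{ij}=t_j^{i}$. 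Since the map $p\mapsto p/(1-p)$ is strictly increasing on $(0,1)$, the hypothesis $p_i\neq p_j$ gives distinct positive nodes $t_j$, so $V$ is invertible; as $D_1,D_2$ have strictly positive diagonals, $M$ is invertible and the planted $\alpha$ is its unique solution. Recovering $\alpha$ thus amounts to solving $Vy=c$ with $c=D_1^{-1}\hw$ and returning $\alpha=D_2^{-1}y$.

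Second, I would solve the transposed Vandermonde system $\sum_{j}t_j^{i}y_j=c_i$ through its generating-function reformulation. Writing $C(s)=\sum_{i=0}^{N}c_i s^{i}$ and the master polynomial $\Phi(s)=\prod_{j=1}^{N+1}(1-t_j s)$, the system is equivalent to $\sum_j y_j/(1-t_j s)\equiv C(s)\pmod{s^{N+1}}$, hence $\sum_j y_j/(1-t_j s)=Q(s)/\Phi(s)$ with $Q(s)=[C(s)\Phi(s)]_{\le N}$ the degree-$\le N$ truncation. Extracting the residue at $s=1/t_j$ yields the closed form $y_j=-t_j\,Q(1/t_j)/\Phi'(1/t_j)$. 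Thus the algorithm is: (i) form the nodes $t_j$ and $\Phi$ via a subproduct tree; (ii) compute $c$, obtaining the scalings $\binom{N}{i}$ by a prefix product of the ratios $(N-i+1)/i$; (iii) compute $Q$ by one polynomial multiplication followed by a truncation; (iv) batch-evaluate $Q$ and $\Phi'$ at the points $1/t_j$ by fast multipoint evaluation; (v) assemble $y_j$ and rescale to $\alpha_j=y_j/(1-p_j)^{N}$.

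Every step above is a standard fast polynomial primitive, and the crux is to carry them out in near-linear work and poly-logarithmic depth. FFT-based multiplication of degree-$N$ polynomials runs in $O(N\log N)$ work and $O(\log N)$ depth; the subproduct tree has $O(\log N)$ levels, each a batch of independent products of total size $O(N)$, giving $O(N\log^2 N)$ work and $O(\log^2 N)$ depth; and fast multipoint evaluation descends the same tree using remaindering, which reduces to multiplication and power-series inversion via Newton iteration, again $O(N\log^2 N)$ work and $O(\log^{c}N)$ depth. The diagonal scalings and residue assembly are embarrassingly parallel, $O(N)$ work and $O(\log N)$ depth. Summing the stages gives the claimed $O(N\log^2 N)$ work and $O(\log^{c}N)$ depth.

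I expect the main obstacle to be the parallel depth of the multipoint-evaluation stage rather than correctness: the polynomial division/remainder steps must be realized through Newton iteration for power-series inverses so that each of the $O(\log N)$ tree levels costs only poly-logarithmic depth, and one must verify that the accumulated depth across levels remains poly-logarithmic while the work telescopes to $O(N\log^2 N)$. The algebraic reduction itself is routine once the factorization $M=D_1VD_2$ is in place, and exactness is immediate because the whole computation is performed over $\R$ and the invertibility of $V$ guarantees that the recovered vector equals the planted $\alpha\in(0,1)^{N+1}$.
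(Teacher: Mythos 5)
Your proof is correct and follows essentially the same route as the paper: both factor the Bernstein matrix as a diagonal matrix times a Vandermonde matrix in the nodes $p_j/(1-p_j)$ times another diagonal matrix, and reduce the recovery of $\alpha$ to a transposed Vandermonde solve in $O(N\log^2 N)$ work and polylogarithmic depth. The only differences are cosmetic: the paper establishes invertibility by a root-counting argument on the Bernstein basis and cites Gohberg--Olshevsky as a black box for the parallel transposed Vandermonde solver, whereas you get invertibility directly from the factorization and spell out the solver (generating-function reformulation, subproduct tree, multipoint evaluation) explicitly.
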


\subsection{Spectral Sparsification of Matrix Polynomials induced by $\MDBD$}\label{subsec:SSMP}

A matrix $B$ is $\GL$-matrix if it is either Laplacian or $\SDDM$ matrix. To highlight that an algorithm $\mathcal{A}$ preserves the matrix type, we write that the algorithm $\mathcal{A}$ on input a $\GL$-matrix $B$ outputs a matrix $B^\prm$ that is also $\GL$-matrix.

Moreover, we say that a matrix $X$ is a spectral sparsifier of a matrix $Y$ if it
satisfies $(1-\eps)Y\preceq X\preceq(1+\eps)Y$, for short $X\approx_{\eps}Y$,
where the partial relation $X\succeq0$ stands for $X$ is symmetric
positive semi-definite (SPSD) matrix.

We denote by $nnz(A)$ or $m_A$ the number of non-zero entries of matrix $A$. When we write ``$B=D-M$ is $\GL$-matrix'' we assume that $D$ is positive diagonal matrix and $B\in\R^{n\times n}$. All algorithms presented in this paper output spectral sparsifiers \emph{with high probability}.

\paragraph{Sequential Algorithms}
Cheng et al.~\cite[Theorem 1.5]{arxivCCLPT15} gave an algorithm that on input a Laplacian matrix $L=D-A$, even integer $N\in\N_+$ and parameter $\eps>0$, outputs in time $O(\eps^{-2}m_L\log^{3}n\cdot\log^2{N})$ a spectral sparsifier $D-\hA\approx_{\eps}D-D(\Di A)^N$ of a matrix-monomial such that $nnz(\hA)\leq O(\eps^{-2}n\log n)$. In Section~\ref{sec:CIA}, we give for $N=2^k$ and $k\in\N_+$ a $O(\log^{2}N)$-factor faster algorithm that computes a spectral sparsifier of $\GL$-matrix monomials. Furthermore, for any $\GL$-matrix $D-M$ such that $M$ is $\SPSD$ matrix, we prove that the initial sparsification step dominates the algorithm's runtime.

\begin{thm}[Power Method for Monomials]\label{thmMyGLN}
There is an algorithm $\PwrSS$ that on input $\GL$-matrix $B=D-M$, $N=2^{k}$ for $k\in\N_+$ and $\eps\in(0,1)$, outputs a spectral sparsifier
$D-\hM_{N}\approx_{\eps}D-D(\Di M)^{N}$ that is $\GL$-matrix with $nnz(\hM_{N})\leq O(\eps^{-2}n\log n)$.
The algorithm runs in time
\[
\begin{cases}
\wO(m_{B}\log^{2}n+\eps^{-4}\cdot n\log^{4}n\cdot\log^{5}N) & \text{, if \ensuremath{M} is \ensuremath{\SPSD} matrix};\\
\wO(\eps^{-2}m_{B}\log^{3}n + \eps^{-4}\cdot n\log^{4}n\cdot\log^{5}N) & \text{, otherwise.}
\end{cases}
\]
\end{thm}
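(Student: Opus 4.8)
The plan is to realize the target monomial by the power method, i.e.\ by $k=\log_{2}N$ repeated squarings of the random-walk operator, performing a fresh spectral sparsification after each doubling so that the intermediate matrices never carry more non-zeros than a single sparsifier. First I would pass to the symmetrically normalized matrix $W=\Dhm M\Dhm$. A direct computation gives $\Dhp W^{N}\Dhp=D(\Di M)^{N}$, hence $D-D(\Di M)^{N}=\Dhp(I-W^{N})\Dhp$; since $X\approx_{\eps}Y$ is preserved under the congruence $Z\mapsto\Dhp Z\Dhp$, it suffices to sparsify $I-W^{N}$. In the original coordinates this becomes a recursion on matrices $M_{i}$: set $M_{0}=M$ and $M_{i+1}=M_{i}\Di M_{i}$; a one-line induction yields the invariant $\Di M_{i}=(\Di M)^{2^{i}}$, so $M_{k}=D(\Di M)^{N}$ and $D-M_{k}$ is exactly the desired matrix. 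Each map $D-M_{i}\mapsto D-M_{i}\Di M_{i}$ preserves the $\GL$-class (the two-step walk of a Laplacian/$\SDDM$ matrix is again of the same type), so the class is maintained along the whole chain.

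The algorithm $\PwrSS$ would run this recursion on sparsifiers rather than on exact matrices: given $D-\wM_{i}\approx D-M_{i}$ with $\wO((\eps')^{-2}n)$ non-zeros, I would apply the squaring-sparsification primitive of Peng and Spielman~\cite{PS14} (as used in~\cite{CCLPT15}) to produce, in work proportional to the current sparsity, a $\GL$-sparsifier $D-\wM_{i+1}\approx_{\eps'}D-\wM_{i}\Di\wM_{i}$ of the same size $\wO((\eps')^{-2}n)$. The point is that this step is \emph{stable}: the error from exactly squaring an $\eps'$-approximation is controlled by the conditioning of $I+W^{2^{i}}$, and for every $i\ge1$ the block $W^{2^{i}}$ is an \emph{even} power, hence $W^{2^{i}}\succeq0$ and $I+W^{2^{i}}\succeq I$; consequently the fresh error accumulates only additively over the $k$ steps. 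Choosing $\eps'=\Theta(\eps/\log N)$ therefore yields $D-\wM_{k}\approx_{\eps}D-M_{k}$, and one final re-sparsification to accuracy $\eps$ reduces the non-zero count to the claimed $O(\eps^{-2}n\log n)$. Since every primitive outputs a $\GL$-matrix, the final $D-\hM_{N}$ is a $\GL$-matrix as required.

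For the runtime, the work splits into a single initial sparsification of $B=D-M$ (which carries the $m_{B}$ factor) plus the $k=\log_{2}N$ squarings, each acting on a matrix of size $\wO((\eps')^{-2}n)=\wO(\eps^{-2}n\log^{2}N)$; the squarings contribute the $\wO(\eps^{-4}n\log^{4}n\log^{5}N)$ term and never again involve $m_{B}$. Isolating this one initial sparsification from the iterated squarings is precisely what strips the extra $\log^{2}N$ factor off the $m_{B}$ term relative to~\cite{CCLPT15}. The two cases mirror the conditioning of the only delicate step, $i=0$: if $M$ is $\SPSD$ then $0\preceq W\preceq I$, so $I+W\succeq I$ already at the first squaring, the chain stays well-conditioned throughout, and a constant-factor initial sparsifier (cost $\wO(m_{B}\log^{2}n)$, with no $\eps^{-2}$) suffices; if $M$ is not $\SPSD$ then $W$ may have eigenvalues near $-1$, so $I+W$ can be near-singular and the first squaring can amplify error, forcing an $\eps$-accurate Spielman--Srivastava-type initial sparsifier (cost $\wO(\eps^{-2}m_{B}\log^{3}n)$). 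The depth is dominated by $k$ rounds of a poly-logarithmic-depth sparsifier, giving the $O(\log N\cdot\poly(\log n))$ bound.

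The hard part will be the stability analysis: proving that the per-step errors accumulate \emph{additively}, so that $\eps'=\Theta(\eps/\log N)$ is enough, rather than \emph{multiplicatively}, which would force a prohibitive $\eps'=\Theta(\eps/N)$. This amounts to quantifying how the squaring primitive propagates a spectral error through $I-W^{2^{i}}=(I-W^{2^{i-1}})(I+W^{2^{i-1}})$, isolating the single conditioning-sensitive step $i=0$, and verifying that all later steps carry the benign factor $I+W^{2^{i}}\succeq I$. Pinning down this dichotomy is exactly what justifies treating the $\SPSD$ and general cases separately.
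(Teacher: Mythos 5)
Your architecture coincides with the paper's: sparsify the two-hop matrix once, then run $\log_2 N$ rounds of square-and-resparsify at accuracy $\eps'=\Theta(\eps/\log N)$, justify additive error accumulation by the positivity of even powers ($D(\Di M)^{2^i}$ is $\SPSD$ for $i\geq1$, the paper's Lemma~\ref{lem_DM_2k}), and finish with one clean-up sparsification (Lemmas~\ref{lem_SqrSSREC} and~\ref{lem_IndSS}). The first gap is the stability step you defer. The statement you need --- if $M$ is $\SPSD$ and $D-\wM\approx_{\eps}D-M$ then $D-\wM\Di\wM\approx_{\eps}D-M\Di M$ --- does not follow from the factorization $I-W^{2}=(I-W)(I+W)$ together with $I+W\succeq I$: the relation $\approx_{\eps}$ is a statement about quadratic forms and is not preserved under products of non-commuting symmetric matrices, so ``the error is controlled by the conditioning of $I+W$'' is not an argument. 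The paper proves this as \lemref{lemSchurRec} by a block-matrix/Schur-complement computation (Claim~\ref{clm_Schur} and Lemma~\ref{lem_Schur}), using Lemma~\ref{lem_DPM} to convert $D-\wM\approx_{\eps}D-M$ into $D+\wM\approx_{\eps}D+M$ under the hypothesis $M\succeq0$. Since the entire theorem rests on this lemma, it must be supplied, not just named.

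The second gap is your handling of the non-$\SPSD$ case, which is wrong in mechanism even though the cost you quote is right. You propose to still sparsify $B=D-M$ first and merely demand $\eps$-accuracy before the first squaring. But when $I+W$ is singular or near-singular (e.g.\ a bipartite Laplacian), no finite accuracy of a sparsifier of $D-M$ guarantees that $D-\wM\Di\wM$ approximates $D-M\Di M$ at all; there is no condition-number-dependent substitute for \lemref{lemSchurRec} to invoke. The paper instead sparsifies $D-M\Di M$ \emph{directly on the dense input} via the clique decomposition of \lemref{lemGLStruct} and algorithm $\mPS$ (Lemma~\ref{lemmPS}); that is the true source of the $\wO(\eps^{-2}m_{B}\log^{3}n)$ term. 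Relatedly, in the $\SPSD$ case a ``constant-factor'' initial sparsifier of $D-M$ does not suffice: a constant relative error introduced before the first squaring propagates unreduced to the final answer, since later re-sparsifications only add error relative to their own input. What actually yields the $\wO(m_{B}\log^{2}n)$ bound is that the Kelner--Levin-based routine (Theorem~\ref{thm_KL13}, used in Lemma~\ref{lem_my_GL}) returns an $\eps/4$-accurate sparsifier in time independent of $\eps$, with only the output size scaling as $\eps^{-2}$.
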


Using Theorem~\ref{thmMyGLN}, we give in Section~\ref{sec:SSBGLPM} an algorithm
that runs by $\Theta(N^2)$-factor faster than~\cite[Theorem 2]{CCLPT15} and computes a spectral sparsifier of a single Binomial $\GL$-matrix polynomials of the form $D-D\sum_{i=0}^{N}B_{N,i}(p)\cdot(\Di M)^{i}=D-DW_{p}^{N}$, where $W_{p}=(1-p)I+p\Di M$ and $p\in(0,1)$.

\begin{thm}[Single Binomial Matrix Polynomials]\label{thm_LazySS}
There is an algorithm $\mathrm{\mathbf{LazySS}}$
that on input $\GL$-matrix $B=D-M$, number $N=2^{k}$
for $k\in\N_+$, and parameters $\eps,p\in(0,1)$, outputs a spectral sparsifier
\[
D-\hM_{p,N} \approx_{\eps} D-DW_{p}^{N}=D - D\sum_{i=0}^{N} B_{N,i}(p) \cdot(\Di M)^{i}
\]
that is $\GL$-matrix with at most $O(\eps^{-2}n\log n)$ non-zero entries. The algorithm $\mathrm{\mathbf{LazySS}}$ runs in time
\[
\begin{cases}
\wO(m_{B}\log^{2}n+\eps^{-4}\cdot n\log^{4}n\cdot\log^{5}N) & \text{, if \ensuremath{p\in(0,1/2]}};\\
\wO(\eps^{-2}m_{B}\log^{3}n\cdot\log^{2}N+\eps^{-4}\cdot n\log^{4}n\cdot\log^{5}N) & \text{, otherwise.}
\end{cases}
\]
\end{thm}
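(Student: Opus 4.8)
The plan is to reduce $D-DW_{p}^{N}$ to a single matrix \emph{monomial} of a reweighted $\GL$-matrix and then invoke $\PwrSS$ from Theorem~\ref{thmMyGLN}. The starting point is the ``laziness as self-loops'' identity: writing $W_{p}=(1-p)I+p\Di M=\wD^{-1}\wM$ for the modified diagonal $\wD\triangleq\frac{1}{p}D$ and the entrywise-nonnegative $\wM\triangleq\frac{1-p}{p}D+M$, I observe that $\wD-\wM=\frac{1}{p}D-\frac{1-p}{p}D-M=D-M=B$. Thus the \emph{same} $\GL$-matrix $B$, equipped with the larger diagonal $\wD$ and with $\wM\geq0$, has $W_{p}$ as its normalized walk operator, and the binomial theorem gives $\wD-\wD W_{p}^{N}=\frac{1}{p}\bigl(D-D\sum_{i=0}^{N}B_{N,i}(p)(\Di M)^{i}\bigr)$. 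In particular the target polynomial is exactly $p$ times a monomial of the representation $(\wD,\wM)$.

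First I would run $\PwrSS$ on $B$ with the representation $(\wD,\wM)$ and accuracy $\eps$. By Theorem~\ref{thmMyGLN} this returns, as a $\GL$-matrix with $nnz\leq O(\eps^{-2}n\log n)$, a sparsifier $\wD-\hM_{N}\approx_{\eps}\wD-\wD W_{p}^{N}$. Scaling by $p>0$ preserves both the $\approx_{\eps}$ relation and the $\GL$ property and does not change the number of non-zeros, so $D-p\hM_{N}=p(\wD-\hM_{N})\approx_{\eps}p(\wD-\wD W_{p}^{N})=D-DW_{p}^{N}$. Setting $\hM_{p,N}\triangleq p\hM_{N}$ yields the claimed sparsifier; forming $(\wD,\wM)$ from $B$ and the final scaling cost only $O(m_{B}+n)$, which is absorbed into the stated bounds.

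The two runtime regimes come from a spectral dichotomy that decides which branch of Theorem~\ref{thmMyGLN} applies, namely whether $\wM$ is $\SPSD$. Since $\wD^{-1/2}\wM\wD^{-1/2}$ is symmetric and similar to $\wD^{-1}\wM=W_{p}$, the matrix $\wM$ is $\SPSD$ iff every eigenvalue of $W_{p}$ is nonnegative. For a $\GL$-matrix the normalized walk $\Di M$ has real spectrum contained in $[-1,1]$, so the eigenvalues of $W_{p}=(1-p)I+p\Di M$ lie in $[1-2p,\,1]$; hence $\wM\succeq0\iff 1-2p\geq0\iff p\leq1/2$. For $p\in(0,1/2]$ we are therefore in the $\SPSD$ branch and, using $m_{\wD-\wM}=m_{B}$, inherit the first runtime of Theorem~\ref{thmMyGLN} verbatim.

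The main obstacle is the regime $p>1/2$, where $W_{p}$ is indefinite and $\wM$ is not $\SPSD$. The indefinite base is what the ``otherwise'' branch of Theorem~\ref{thmMyGLN} is designed for, but controlling the \emph{relative} error of $W_{p}^{N}$ across the $O(\log N)$ squaring stages of the power method is delicate once $W_{p}$ has negative eigenvalues: to keep the accumulated multiplicative error below $\eps$ one must set the per-stage accuracy to $\eps/O(\log N)$, and the resulting $(\log N/\eps)$ overhead in the sparsification cost is what produces the extra $\log^{2}N$ factor in the second runtime bound. I expect verifying this error-accumulation bookkeeping, and confirming that the intermediate sparsifiers remain $\GL$-matrices throughout the indefinite chain, to be the most technical part of the argument.
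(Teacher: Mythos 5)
Your proposal is correct and is essentially the paper's own argument: the paper likewise reduces $D-DW_{p}^{N}$ to a monomial of the $\GL$-matrix $D-DW_{p}=p(D-M)$ (keeping the original diagonal $D$ rather than your rescaled $\wD=D/p$, a difference only by the scalar $1/p$), establishes via Lemma~\ref{lemDDWWp} that $DW_{p}$ is $\SPSD$ exactly in the regime $p\in(0,1/2]$, and then invokes Theorem~\ref{thmMyGLN}. The only caveat is that your closing speculation about extra error-accumulation bookkeeping for $p>1/2$ is unnecessary, since Theorem~\ref{thmMyGLN} already absorbs that case in its ``otherwise'' branch.
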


In Section~\ref{sec:SSGLBM}, we give our main sequential algorithm that builds upon Theorem~\ref{thm_LazySS} and computes a spectral sparsifier of $\GL$-matrix polynomials induced by $\MDBD$.

\begin{thm}[Mixture of Binomial Matrix Polynomials]\label{thm_SS_MGL}
There is an algorithm $\SSMDBD$ that on input $\GL$-matrix $B=D-M$, integer $N=2^{k}$, $k\in\N_+$, $\MDBD$ $\BNTap$ with $\delta=1-\sum_{i=1}^{T}\alpha_i$ and parameter $\eps\in(0,1)$, outputs a spectral sparsifier $D-\hM \approx_{\eps}D-D\sum_{i=0}^{N}(\gamma_{i}/[1-\delta])\cdot(\Di M)^{i}$ that is $\GL$-matrix with $O(\eps^{-2}n\log n)$ non-zero entries, where $\gamma$ is a discretized p.d.f. that satisfies $\gamma_{i}=\sum_{j=1}^{T}\alpha_{j}\cdot B_{N,i}(p_{j})$ for all $i$. The algorithm runs in time
\[
\begin{cases}
\wO(m_B\log^{2}n + \eps^{-4}\cdot nT\cdot\log^{4}n\cdot\log^{5}N) & \text{, if $M$ is $\SPSD$ matrix};\\
\wO(\eps^{-2}m_B\log^{3}n + \eps^{-4}\cdot nT\cdot\log^{4}n\cdot\log^{5}N) & \text{, otherwise.}
\end{cases}
\]
\end{thm}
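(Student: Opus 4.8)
The plan is to reduce the mixture case to $T$ applications of the single-Binomial algorithm $\mathrm{\mathbf{LazySS}}$ of Theorem~\ref{thm_LazySS}, exploiting that the target matrix is a convex combination of single-Binomial matrices. Writing $\hat\alpha_j \triangleq \alpha_j/(1-\delta)$, so that $\sum_{j=1}^{T}\hat\alpha_j = 1$, and using $\gamma_i = \sum_{j=1}^{T}\alpha_j B_{N,i}(p_j)$ together with $\sum_{i=0}^{N}B_{N,i}(p)(\Di M)^i = W_p^{N}$, I would first establish the identity
\[
D - D\sum_{i=0}^{N}\frac{\gamma_i}{1-\delta}(\Di M)^i \;=\; \sum_{j=1}^{T}\hat\alpha_j\bigl(D - DW_{p_j}^{N}\bigr),
\]
which holds because $\sum_j\hat\alpha_j = 1$ makes the diagonal contributions combine back to $D$. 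Thus the object to be sparsified is a positive (indeed convex) combination of exactly the single-Binomial matrices handled by Theorem~\ref{thm_LazySS}.

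The second ingredient is that spectral approximation is preserved under non-negative linear combinations: if $X_j \approx_{\eps_1} Y_j$ and $c_j\ge0$, then $\sum_j c_j X_j \approx_{\eps_1}\sum_j c_j Y_j$, which is immediate from the Loewner-order definition of $\approx_{\eps_1}$. Running $\mathrm{\mathbf{LazySS}}$ with accuracy $\eps_1=\Theta(\eps)$ once per $j$ yields $\GL$-matrices $D-\hM_{p_j,N}\approx_{\eps_1}D-DW_{p_j}^{N}$ with $O(\eps_1^{-2}n\log n)$ non-zeros each; setting $D-\hM^{(mid)}\triangleq\sum_j\hat\alpha_j(D-\hM_{p_j,N})$ gives $D-\hM^{(mid)}\approx_{\eps_1}\sum_j\hat\alpha_j(D-DW_{p_j}^{N})$. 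Since $\GL$-matrices of the form $D-(\cdot)$ are closed under convex combination (non-positive off-diagonals, diagonal dominance, and positive definiteness are all preserved), $D-\hM^{(mid)}$ is again a $\GL$-matrix, but with up to $O(T\eps_1^{-2}n\log n)$ non-zeros. A final spectral sparsification of $D-\hM^{(mid)}$ to accuracy $\eps_2=\Theta(\eps)$ produces the output $D-\hM$ with $O(\eps^{-2}n\log n)$ non-zeros, and composing the two guarantees gives $D-\hM\approx_{(1+\eps_1)(1+\eps_2)-1}D-D\sum_i(\gamma_i/[1-\delta])(\Di M)^i$; choosing $\eps_1,\eps_2=\Theta(\eps)$ yields the claimed $\approx_{\eps}$.

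For the running time, the key observation is that the $m_B$-dependent cost of $\mathrm{\mathbf{LazySS}}$ is an initial spectral sparsification of $B=D-M$ that is \emph{independent of $p$}; all subsequent, $N$-dependent squaring iterations operate on the resulting sparse $\GL$-matrix. I would therefore hoist this initial sparsification out of the loop over $j$, performing it once and feeding the sparse matrix into all $T$ invocations. This makes the $m_B$ term appear a single time, costing $\wO(m_B\log^{2}n)$ when $M$ is $\SPSD$ (so that every $W_{p_j}\succeq0$ and each call falls into the cheap regime, the same phenomenon underlying the cheap case of Theorem~\ref{thm_LazySS}) and $\wO(\eps^{-2}m_B\log^{3}n)$ otherwise, while the per-$j$ squaring phase contributes $\wO(\eps^{-4}n\log^{4}n\log^{5}N)$, i.e. $\wO(\eps^{-4}nT\log^{4}n\log^{5}N)$ in total. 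The final re-sparsification of a matrix with $\wO(\eps^{-2}nT)$ non-zeros costs $\wO(\eps^{-4}nT)$ and is absorbed into this term, giving the stated bounds.

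The hard part will not be the algebra but the time accounting around the shared preprocessing: one must open the $\mathrm{\mathbf{LazySS}}$ black box far enough to separate the $p$-independent initial sparsification of $B$ from the $p$-dependent power-method phase, verify that reusing a single initial sparsifier across all $p_j$ does not break the per-call correctness guarantee of Theorem~\ref{thm_LazySS}, and confirm that when $M$ is $\SPSD$ every mixture component lands in the cheap regime so the $m_B$ term sheds both its per-$p$ dependence and the $\log^{2}N$ blow-up present in the general single-Binomial bound. Care is also needed to check that the convex combination and the final sparsification preserve the $\GL$ (Laplacian or $\SDDM$) structure with the prescribed diagonal $D$.
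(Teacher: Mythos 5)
Your overall plan coincides with the paper's: decompose the target as the convex combination $\sum_{j}\hat\alpha_{j}(D-DW_{p_j}^{N})$, sparsify each single-Binomial component, merge by Fact~\ref{fact_fiveProps}.b, and re-sparsify the (possibly dense) sum with $\mKLC$. The one place where your argument has a real gap is the hoisting step, which is exactly where the theorem's runtime is won or lost. You assert that the $m_B$-dependent cost of $\mathrm{\mathbf{LazySS}}$ is ``an initial spectral sparsification of $B=D-M$ that is independent of $p$.'' That is not accurate for general $\GL$-matrices: the initialization of the power method needs a sparsifier of $D-DW_{p}^{2}$, and when $M$ is not $\SPSD$ and $p>1/2$ the matrix $DW_{p}=(1-p)D+pM$ need not be $\SPSD$, so \lemref{lemSchurRec} does not let you square a reused sparsifier of $D-M$ (equivalently of $D-DW_p=p(D-M)$). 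Without a further idea, each of the $T$ calls would have to invoke $\mPS$ on the dense matrix $D-DW_{p_j}$, costing $\wO(\eps^{-2}m_B\log^{3}n)$ \emph{per} $j$, which destroys the claimed bound in the non-$\SPSD$ case.

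The missing ingredient is the paper's Lemma~\ref{lem_DMp2}: from $W_{p}^{2}=(1-p)^{2}I+2p(1-p)\Di M+p^{2}(\Di M)^{2}$ one gets $D-DW_{p}^{2}=2p(1-p)(D-M)+p^{2}(D-M\Di M)$, a nonnegative combination of two $p$-\emph{independent} matrices. Hence the shared preprocessing must produce \emph{two} sparsifiers, $D-\hM_{1}\approx D-M$ and $D-\hM_{2}\approx D-M\Di M$ (the latter via $\fSS$ when $M$ is $\SPSD$ and via $\mPS$ otherwise — this is the sole source of the case split in the runtime), after which $\hM_{p,2}=(1-p)^{2}D+2p(1-p)\hM_{1}+p^{2}\hM_{2}$ is a valid sparse initialization of the squaring iteration for every $p_j$ simultaneously, by Fact~\ref{fact_fiveProps}.b. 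With this lemma inserted, the rest of your argument (per-component accuracy $\Theta(\eps/\log N)$ inside the iteration, closure of $\GL$-matrices under the combination, and the final re-sparsification) matches the paper's proof.
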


We motivate now the first conclusion of Theorem~\ref{thm_SS_MGL}. When $B=D-D W_p$ is a Laplacian matrix (of a graph with self-loops) associated with a Markov Chain with transition matrix $W_p$ that corresponds to $p$-lazy random walk process, it holds for $p\in(0,1/2]$ (c.f. \lemref{lemDDWWp}) that $D W_p$ is $\SPSD$ matrix.

Given $\MDBD$ $\BNTap$ that induces a vector $\gamma$, the algorithm in~\cite[Theorem 2]{CCLPT15} outputs a spectral sparsifier of the corresponding $\GL$-matrix polynomial in time $\hO(\eps^{-2} m N^2+NT)$\footnote{$\hO(\cdot)$ notation hides $\poly(\log n,\log N)$ factors.}, where the term $O(NT)$ accounts for computing the vector $\gamma$. In comparison, our improved algorithm $\SSMDBD$ runs in time $\hO(\eps^{-2}m+\eps^{-4}nT)$ for any $\MDBD$ with $N=2^k$ for $k\in\N_+$.

\paragraph{Parallel Algorithms}

Building upon the seminal works of Spielman and Teng~\cite{ST11}, Orecchia and Vishnoi~\cite{OV11} and Spielman and Peng~\cite{PS14}, we prove in Section~\ref{sec:effPrlAlg} that algorithm $\SSMDBD$ can be efficiently parallelized. To the best of our knowledge, this is the first efficient and parallel algorithm that sparsifies $\GL$-matrix polynomials induced by $\MDBD$ with $N=2^k$ for $k\in\N_+$.

\begin{thm}[Efficient Parallel Spectral Sparsification of Matrix Polynomial induced by $\MDBD$]\label{thm_prl_SS_MGL}
There is a parallel algorithm $\pSSMDBD$ that on
input as in Theorem~\ref{thm_SS_MGL}, outputs a spectral sparsifier $D-\hM \approx_{\eps}D-D\sum_{i=0}^{N}(\gamma_{i}/[1-\delta])\cdot(\Di M)^{i}$ that is $\GL$-matrix with $nnz(\hM)\leq O(\eps^{-2}n\log^{c} n)$ for some constant $c$, where $\gamma$ is a discretized p.d.f. such that $\gamma_{i}=\sum_{j=1}^{T}\alpha_{j}\cdot B_{N,i}(p_{j})$ for all $i$. The algorithm runs in work $\wO(\eps^{-2}m_B\log^{c_1+1}n\cdot\log^{2}N + \eps^{-4}\cdot nT\cdot\log^{c+c_1+1}n\cdot\log^{5}N)$ and depth $O(\log^{c_2}n\cdot\log N+\log T)$ for constants $c_1,c_2\in\N$.
\end{thm}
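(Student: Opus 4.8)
The plan is to parallelize each of the three nested sequential procedures that underlie $\SSMDBD$ --- namely $\PwrSS$ (Theorem~\ref{thmMyGLN}), $\mathrm{\mathbf{LazySS}}$ (Theorem~\ref{thm_LazySS}), and the mixture combination itself --- by replacing every internal spectral-sparsification and $\SDD$-solve with its parallel counterpart from Spielman--Teng~\cite{ST11}, Orecchia--Vishnoi~\cite{OV11}, and Spielman--Peng~\cite{PS14}. First I would parallelize the monomial engine. Because $N=2^{k}$, the power $(\Di M)^{N}$ is obtained by $k=\log N$ repeated squarings, each producing a $\GL$-matrix sparsifier $D-\wM_{i+1}\approx_{\eps'}D-\wM_{i}\Di\wM_{i}$ of the squared walk. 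The key observation is that a single squaring step --- sparsifying $D-\wM_{i}\Di\wM_{i}$ and renormalizing it in terms of the original diagonal $D$ --- can be executed in $\poly(\log n)$ depth and nearly-linear work using the parallel sparsifier together with the parallel $\SDD$ solver of~\cite{PS14} for the implicit solves. The $\log N$ squarings are sequentially dependent, so they compose to depth $O(\log^{c_{2}}n\cdot\log N)$ and to the claimed work, with the $\GL$-type preserved at every step exactly as in the sequential proof.

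Next I would observe that $\mathrm{\mathbf{LazySS}}$ is structurally identical: $W_{p}=(1-p)I+p\Di M$ is again a normalized walk matrix, so the same parallel repeated-squaring computes a sparsifier of $DW_{p}^{N}$ within the same depth and work budget. To obtain $\pSSMDBD$ I would then exploit that the target polynomial factors as $\sum_{i=0}^{N}(\gamma_{i}/[1-\delta])(\Di M)^{i}=\frac{1}{1-\delta}\sum_{j=1}^{T}\alpha_{j}W_{p_{j}}^{N}$, a nonnegative combination of the $T$ single-Binomial polynomials. I would (a) invoke the parallel $\mathrm{\mathbf{LazySS}}$ independently for every $j\in[1:T]$ --- embarrassingly parallel, adding no depth beyond one such call --- and (b) combine the $T$ scaled sparsifiers by a balanced binary summation tree, re-sparsifying at each of the $\log T$ levels to keep the edge count at $O(\eps^{-2}n\log^{c}n)$. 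Summation preserves spectral approximation, so the output is a valid sparsifier; step (b) contributes the additive $+\log T$ in the depth and a $\log T$ factor into the work. Distributing a per-step accuracy $\eps'=\eps/\poly(\log N,\log T)$ across the $\log N$ squarings and $\log T$ tree levels, and composing the $\approx$ relations, keeps the total error within $(1\pm\eps)$.

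The hardest part is the single parallel squaring step, where I must simultaneously (i) maintain the spectral guarantee \emph{and} the $\GL$-matrix normalization expressed in terms of the original $D$, and (ii) keep the depth poly-logarithmic. This forces the use of the approximate-inverse-chain machinery of~\cite{PS14} to realize the implicit solves in parallel, and it requires careful bookkeeping so that the $O(\log N)$ sequentially chained squarings accumulate depth and error additively rather than multiplicatively, and so that the re-sparsification across the $\log T$ combination levels does not inflate the work beyond the stated $\wO(\eps^{-2}m_{B}\log^{c_{1}+1}n\cdot\log^{2}N+\eps^{-4}\cdot nT\cdot\log^{c+c_{1}+1}n\cdot\log^{5}N)$ bound.
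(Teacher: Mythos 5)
Your high-level strategy matches the paper's: parallelize the repeated-squaring engine via the parallel two-hop sparsifier of Spielman--Peng, run the $T$ single-Binomial sparsifications independently in parallel, and re-sparsify the (possibly dense) weighted sum at the end. However, there is a genuine gap in your work accounting. You propose to ``invoke the parallel $\mathrm{\mathbf{LazySS}}$ independently for every $j\in[1:T]$,'' with each invocation starting from the original matrix $B=D-M$. Every such invocation must pay the initial sparsification cost, which depends on $m_{B}$; doing this $T$ times yields total work $\wO(\eps^{-2}m_{B}T\cdot\mathrm{polylog})$, which exceeds the claimed bound $\wO(\eps^{-2}m_{B}\log^{c_1+1}n\cdot\log^{2}N+\eps^{-4}nT\cdot\log^{c+c_1+1}n\cdot\log^{5}N)$ by a factor of $T$ in the $m_{B}$-dependent term whenever $m_B\gg n$. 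The missing idea is the shared preprocessing of Lemma~\ref{lem_DMp2}: one computes $D-\hM_{1}\approx D-M$ and $D-\hM_{2}\approx D-M\Di M$ \emph{once} (the only place $m_{B}$ enters), and then for each $j$ forms $\hM_{p_j,2}=(1-p_j)^{2}D+2p_j(1-p_j)\hM_{1}+p_j^{2}\hM_{2}$ by a cheap linear combination, exploiting the identity $D-DW_{p}^{2}=2p(1-p)(D-M)+p^{2}(D-M\Di M)$. Each of the $T$ squaring chains is then seeded with an already-sparse $O(\eps^{-2}n\log^{c}n)$-entry matrix, so the per-$j$ cost is $\wO(\eps^{-4}n\cdot\mathrm{polylog})$ and the claimed bound follows. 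Without this amortization your argument proves a weaker work bound.

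A secondary, non-fatal difference: you combine the $T$ sparsifiers by a balanced binary tree with re-sparsification at each of the $\log T$ levels, whereas the paper simply forms the weighted sum $\hM_{tmp}=\sum_{j}\alpha_{j}\hM_{p_j,N}$ in depth $O(\log T)$ (each summand already has only $O(\eps^{-2}n\log^{c}n)$ entries) and sparsifies once at the end with $\pKLC$. Your variant is correct --- summation and composition of $\approx_{\eps}$ relations behave as you say --- but it forces the per-level accuracy down to $\eps/\Theta(\log T)$ and costs an extra $\mathrm{poly}(\log T)$ factor in work that the stated bound does not budget for; the single final re-sparsification avoids both.
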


By combining Theorem~\ref{thm_AppDscrPDF} and Theorem~\ref{thm_prl_SS_MGL}, we develop an efficient parallel algorithm that outputs a spectral sparsifier of a $\GL$-matrix polynomial whose coefficients approximate component-wise a target discretized p.d.f. $\hw$.

\begin{cor}[Approximating Target Transition Matrices]
There is a parallel algorithm that takes as input a continuous p.d.f. $w$ satisfying the conditions of Theorem~\ref{thm_AppDscrPDF}, $\GL$-matrix $B=D-M$ and parameters $\eps,\epsI\in(0,1)$, and it outputs in $\hO(\eps^{-2}m_B + \eps^{-4}nN\sqrt{\phiw/\epsI})$ work and $O(\log^{c_2}n\cdot\log N+ \log(N\sqrt{\phiw/\epsI}))$ depth a spectral sparsifier $D-\hM \approx_{\eps}D-D\sum_{i=0}^{N}(\gamma_{i}/[1-\delta_{w}])\cdot(\Di M)^{i}$ that is $\GL$-matrix with $nnz(\hM)\leq O(\eps^{-2}n\log^{c} n)$ such that for all $i\in[3:N-3]$ it holds
$\gamma_{i}/[1-\delta_{w}]\in [(1+2\eta_{i})\cdot\hw(i/N) \pm 2\eps_{I}/S_{N+1,N}]$.
\end{cor}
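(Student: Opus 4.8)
The plan is to obtain the claimed algorithm by composing, in sequence, the two parallel routines furnished by Theorem~\ref{thm_AppDscrPDF} and Theorem~\ref{thm_prl_SS_MGL}, using the $\MDBD$ produced by the former as the $\MDBD$ input to the latter. First I would run the algorithm $\AppDscrPDF$ of Theorem~\ref{thm_AppDscrPDF} on input $(w,N,\epsI)$. Since $w$ satisfies the hypotheses of Theorem~\ref{thm_AppDscrPDF} (hence of Theorem~\ref{thmMDBD}), this produces, in $O(N\sqrt{\phiw/\epsI})$ work and $O(\log(N\sqrt{\phiw/\epsI}))$ depth, an $\MDBD$ $\BNTap$ with $T=\Theta(N\sqrt{\phiw/\epsI})$ together with the parameter $\delta_w\in[0,o(1)]$, whose induced vector $\gamma_i=\sum_{j=1}^{T}\alpha_j B_{N,i}(p_j)$ satisfies the component-wise guarantee of that theorem. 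Before proceeding I would verify that this output is a \emph{legal} input for the sparsifier: the $p_j$ are distinct, each $\alpha_j\in(0,1)$, and, since every Binomial row sums to one, $\sum_{i=0}^{N}\gamma_i=\sum_{j=1}^{T}\alpha_j=1-\delta_w\leq 1$. Thus the mass defect $\delta=1-\sum_j\alpha_j$ demanded by Theorem~\ref{thm_prl_SS_MGL} coincides exactly with the $\delta_w$ emitted by $\AppDscrPDF$, and $\gamma/[1-\delta_w]$ is a genuine probability distribution over $[0:N]$.

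Next I would feed this $\MDBD$ $\BNTap$, the $\GL$-matrix $B=D-M$, the integer $N$, and $\eps$ into the algorithm $\pSSMDBD$ of Theorem~\ref{thm_prl_SS_MGL} with $\delta=\delta_w$ (recalling that $\pSSMDBD$ requires $N=2^k$ for some $k\in\N_+$, which we assume throughout). That theorem then returns a $\GL$-matrix $D-\hM$ with $nnz(\hM)\leq O(\eps^{-2}n\log^{c}n)$ satisfying
\[
D-\hM\approx_{\eps}D-D\sum_{i=0}^{N}(\gamma_i/[1-\delta_w])\cdot(\Di M)^i,
\]
which is exactly the spectral-sparsification conclusion of the corollary; both the spectral guarantee and the sparsity bound are inherited verbatim from Theorem~\ref{thm_prl_SS_MGL}.

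The component-wise accuracy claim is then free. Theorem~\ref{thm_AppDscrPDF} already certifies $\gamma_i/[1-\delta_w]\in[(1+2\eta_i)\cdot\hw(i/N)\pm 2\epsI/S_{N+1,N}]$ for all $i\in[3:N-3]$, and $\pSSMDBD$ only sparsifies \emph{against} these coefficients without altering them, so the bound carries over unchanged. For the resource bounds I would add the two phases: the preprocessing cost $O(N\sqrt{\phiw/\epsI})$ work and $O(\log(N\sqrt{\phiw/\epsI}))$ depth of $\AppDscrPDF$ is absorbed (up to the $\poly(\log n,\log N)$ factors hidden by $\hO$) into the $\eps^{-4}nT$ term of $\pSSMDBD$; substituting $T=\Theta(N\sqrt{\phiw/\epsI})$ into the bounds of Theorem~\ref{thm_prl_SS_MGL} yields total work $\hO(\eps^{-2}m_B+\eps^{-4}nN\sqrt{\phiw/\epsI})$ and depth $O(\log^{c_2}n\cdot\log N+\log(N\sqrt{\phiw/\epsI}))$, the preprocessing depth merging into the $\log T=\Theta(\log(N\sqrt{\phiw/\epsI}))$ term. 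This matches the stated complexity.

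Because the argument is a direct chaining of two black boxes, there is no genuinely hard step: the proof is essentially bookkeeping of the polylog factors swallowed by $\hO$ together with the additive composition of work and depth. The one point that must be checked with care, rather than computed, is the \emph{normalization consistency} described in the first paragraph — that the $\delta$ expected by $\pSSMDBD$ is precisely the $\delta_w$ returned by $\AppDscrPDF$, and that the (rescaled) coefficients still constitute a valid $\MDBD$. This is exactly where the hypothesis $\delta_w\in[0,o(1)]$ of Theorem~\ref{thm_AppDscrPDF} is invoked, guaranteeing $\sum_j\alpha_j\leq 1$ so that the second routine is applied within its stated regime.
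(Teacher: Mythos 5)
Your proposal is correct and follows exactly the route the paper takes: the corollary is obtained by chaining $\AppDscrPDF$ (Theorem~\ref{thm_AppDscrPDF}) into $\pSSMDBD$ (Theorem~\ref{thm_prl_SS_MGL}) with $T=\Theta(N\sqrt{\phiw/\epsI})$ and $\delta=\delta_w$, and your bookkeeping of the work and depth matches the stated bounds. Your explicit check of normalization consistency (that $\sum_j\alpha_j=1-\delta_w\leq 1$ so the $\MDBD$ is a legal input to the sparsifier) is a detail the paper leaves implicit, but it is the same argument.
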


\subsection{Analyzing the Long Term Behaviour of Markov Chains}\label{subsec:Apps}

For many finite Markov chains~\cite{SJ89,ACL06,OT12,KS14} there exists $N^\prm\in\N_+$ such that for every $N\geq N^\prm$ certain phenomenon occurs with high probability - (local) mixing time, truncated PageRank, etc. Therefore, to analyze the long term behaviour of a finite Markov chain, it suffices to select the smallest $N=2^k$ for $k\in\N_+$ that is larger or equal to $N^\prm$.

\begin{cor}[Capturing The Long Term Behaviour of Markov Chains]\label{cor:denseTmtx}
Suppose $L=D-A$ is dense Laplacian matrix with $m_{L}=\Theta(n^{2})$, $\eps\in(0,1)$, $\BNTap$ is $\MDBD$ such that $\sum_{i=1}^{T}\alpha_{i}=1$, the degree $N=2^k\leq O(n)$ for $k\in\N_+$ and the number of Binomials $T\leq O(n)$. Then algorithm $\pSSMDBD$ outputs in work $\wO(\eps^{-4}\cdot m_L\cdot\log^{c+c_1+6}n)$ and depth $O(\log^{c_2+1}n)$ a spectral sparsifier $D-\hA \approx_{\eps}D-D\sum_{i=0}^{N}\gamma_{i}(\Di A)^{i}$ that is Laplacian matrix with $O(\eps^{-2}n\log^{c_2} n)$ non-zero entries for some constants $c,c_1,c_2\in\N$ and $\gamma$ is a probability distribution induced by the $\MDBD$ $\BNTap$.
\end{cor}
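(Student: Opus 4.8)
The plan is to derive this corollary as a direct specialization of Theorem~\ref{thm_prl_SS_MGL} to the dense Laplacian regime, with the parameter substitution doing essentially all the work. First I would record the two simplifications afforded by the hypotheses. Since $\sum_{i=1}^{T}\alpha_i=1$, the slack $\delta=1-\sum_{i=1}^{T}\alpha_i$ equals $0$, so the normalization factor $1/[1-\delta]$ collapses to $1$ and the conclusion of Theorem~\ref{thm_prl_SS_MGL} reads $D-\hM\approx_{\eps}D-D\sum_{i=0}^{N}\gamma_i(\Di M)^i$. Moreover, summing $\gamma_i=\sum_{j=1}^{T}\alpha_j B_{N,i}(p_j)$ over $i\in[0:N]$ and using that each Binomial distribution sums to $1$ gives $\sum_{i=0}^{N}\gamma_i=\sum_{j=1}^{T}\alpha_j=1$, so $\gamma$ is genuinely a probability distribution over $[0:N]$, as claimed.

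Next I would instantiate Theorem~\ref{thm_prl_SS_MGL} with $M=A$ and $B=L=D-A$, so that $m_B=m_L=\Theta(n^2)$ and the output $D-\hA$ is the promised spectral sparsifier. Type preservation is guaranteed by the theorem (a $\GL$-matrix polynomial is sparsified to a $\GL$-matrix, and $L$ Laplacian forces the output to be Laplacian), and the nonzero-entry bound transfers directly from Theorem~\ref{thm_prl_SS_MGL} up to the naming of the polylogarithmic exponent. Note that the parallel theorem states a single unified work/depth bound rather than the SPSD/non-SPSD case split of the sequential Theorem~\ref{thm_SS_MGL}, so no case analysis on whether $A$ is $\SPSD$ is required.

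The remaining step is arithmetic on the resource bounds. Using $N=2^k\leq O(n)$ I would replace every $\log N$ by $O(\log n)$, so that $\log^2 N=O(\log^2 n)$ and $\log^5 N=O(\log^5 n)$; using $T\leq O(n)$ I would replace $\log T$ by $O(\log n)$ and, crucially, bound $nT\leq O(n^2)=O(m_L)$, where the final identity is exactly the density hypothesis $m_L=\Theta(n^2)$. Substituting into the work bound of Theorem~\ref{thm_prl_SS_MGL} yields
\[
\wO\!\left(\eps^{-2}m_L\log^{c_1+3}n+\eps^{-4}m_L\log^{c+c_1+6}n\right),
\]
and since $\eps^{-2}\leq\eps^{-4}$ for $\eps\in(0,1)$ and $\log^{c_1+3}n\leq\log^{c+c_1+6}n$, the second summand dominates, giving the stated $\wO(\eps^{-4}m_L\log^{c+c_1+6}n)$ work. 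The depth $O(\log^{c_2}n\cdot\log N+\log T)$ likewise collapses to $O(\log^{c_2}n\cdot\log n+\log n)=O(\log^{c_2+1}n)$.

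Since the corollary is a plug-in of Theorem~\ref{thm_prl_SS_MGL}, there is no genuine obstacle; the only points requiring care are verifying that the density hypothesis $m_L=\Theta(n^2)$ is precisely what lets the $\eps^{-4}nT$ term be folded into an $\eps^{-4}m_L$ term (so the final bound is expressed purely in $m_L$), and confirming that this $\eps^{-4}$ term then absorbs the leading $\eps^{-2}m_B$ term rather than the reverse.
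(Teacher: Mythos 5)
Your proposal is correct and matches the paper's intent exactly: the paper offers no separate proof of Corollary~\ref{cor:denseTmtx}, treating it as an immediate instantiation of Theorem~\ref{thm_prl_SS_MGL} with $\delta=0$, $N\leq O(n)$, $T\leq O(n)$, and $nT\leq O(n^2)=O(m_L)$, which is precisely the substitution and dominance argument you carry out. Your additional checks (that $\gamma$ sums to $1$, that the Laplacian type is preserved, and that the $\eps^{-4}$ term absorbs the $\eps^{-2}m_B$ term) are all sound.
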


\paragraph*{Multiplicative Approximation of Generalized Escaping Probability}

Consider a Markov chain with transition matrix $\mGg=\sum_{i=0}^{N}\gamma_{i}(\Di A)^i$ that corresponds to a generalized random walk process of length $N$. Perform a random walk of length $N$ induced by $\mGg$ that starts at vertex $v\in V$. Then for any subset $S\subset V$, the corresponding \emph{generalized escaping probability} is defined by $\mathrm{gEsc}(v,S,\mGg)=\mat{1}_{v}^{\rot}\mGg\mat{1}_{\overline{S}}$, where we denote by $\mat{1}_{T}$ the characteristic vector of a subset $T\subset V$.

We define the volume of $S$ by $\mu(S)=\sum_{u\in S}d_u$ and let $\pi_S$ be a probability distribution over $V$ defined by $\pi_{S}(u)=d_u/\mu(S)$ if $u\in S$ and $\pi_{S}(u)=0$ otherwise. The \emph{expected} generalized escaping probability (E.G.E.P.) with respect to $\pi_S$ is defined by
\begin{equation}\label{eq:ExpGEP}
\mathbb{E}_{v\sim\pi_S}[\mathrm{gEsc}(v,S,\mGg)]=\pi_{S}^{\rot}\mGg\mat{1}_{\overline{S}}.
\end{equation}

We show in Appendix~\ref{appsec:GenEscProb} that a spectral sparsifier of a random walk Laplacian matrix polynomial, yields a multiplicative approximation of E.G.E.P. for all subsets $S\subset V$.

\begin{lem}[Multiplicative Approximation of E.G.E.P.]\label{lem_mulApproxEGEP}
For any spectral sparsifier $D-\hA_{N} \approx_{\eps} D-D\sum_{i=0}^{N}\gamma_{i}(\Di A)^i$ of a random walk Laplacian matrix polynomial such that $\gamma$ is a probability distribution over $[0:N]$, it holds for every subset $S\subset V$ that
\[
\xi_{S}^{\rot}(D-\hA_N)\xi_{S}\in[\,(1\pm\eps)\cdot\mathbb{E}_{v\sim\pi_S}[\mathrm{gEsc}(v,S,\mGg)]\,],\quad \text{where}\quad\xi_S\triangleq\mat{1}_{S}/\sqrt{\mu(S)}.
\]
\end{lem}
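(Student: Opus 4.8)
The plan is to reduce the statement to an exact identity by first peeling off the spectral sparsifier guarantee, and then to verify that identity by a direct computation that exploits the row-stochasticity of $\Di A$ together with the hypothesis $\sum_{i=0}^{N}\gamma_i=1$. Write $\mGg=\sum_{i=0}^{N}\gamma_{i}(\Di A)^{i}$ and abbreviate $L_\gamma\triangleq D-D\mGg$, so that the sparsifier hypothesis reads $D-\hA_{N}\approx_{\eps}L_\gamma$, i.e. $(1-\eps)L_\gamma\preceq D-\hA_{N}\preceq(1+\eps)L_\gamma$. Evaluating this operator inequality at the single test vector $\xi_S$ immediately gives
\[
\xi_{S}^{\rot}(D-\hA_{N})\xi_{S}\in\left[\,(1\pm\eps)\cdot\xi_{S}^{\rot}L_\gamma\,\xi_{S}\,\right].
\]
Hence it suffices to prove the exact identity $\xi_{S}^{\rot}L_\gamma\,\xi_{S}=\pi_{S}^{\rot}\mGg\,\mat{1}_{\overline{S}}$, whose right-hand side is exactly the E.G.E.P. by \eqref{eq:ExpGEP}.

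Next I would expand the quadratic form. Using $\xi_S=\mat{1}_S/\sqrt{\mu(S)}$ and the key observation that $\pi_S=D\mat{1}_S/\mu(S)$ (since $\pi_S(u)=d_u/\mu(S)$ on $S$ and vanishes off $S$, while $D$ is diagonal), I get
\[
\xi_{S}^{\rot}L_\gamma\,\xi_{S}=\frac{1}{\mu(S)}\,\mat{1}_{S}^{\rot}\left(D-D\mGg\right)\mat{1}_{S}=\frac{1}{\mu(S)}\left[\,\mat{1}_{S}^{\rot}D\,\mat{1}_{S}-(D\mat{1}_{S})^{\rot}\mGg\,\mat{1}_{S}\,\right].
\]
Since $\mat{1}_{S}^{\rot}D\,\mat{1}_{S}=\sum_{u\in S}d_u=\mu(S)$ and $(D\mat{1}_{S})/\mu(S)=\pi_S$, the previous display simplifies to $\xi_{S}^{\rot}L_\gamma\,\xi_{S}=1-\pi_{S}^{\rot}\mGg\,\mat{1}_{S}$.

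Finally I would invoke stochasticity to rewrite $\pi_{S}^{\rot}\mGg\,\mat{1}_{S}$. Because $\Di A$ is row-stochastic we have $(\Di A)^{i}\mat{1}_{V}=\mat{1}_{V}$ for every $i$, and as $\gamma$ is a probability distribution with $\sum_{i=0}^{N}\gamma_i=1$ this yields $\mGg\,\mat{1}_{V}=\mat{1}_{V}$. Writing $\mat{1}_{S}=\mat{1}_{V}-\mat{1}_{\overline{S}}$ then gives $\mGg\,\mat{1}_{S}=\mat{1}_{V}-\mGg\,\mat{1}_{\overline{S}}$, and since $\pi_{S}^{\rot}\mat{1}_{V}=\sum_{u\in S}d_u/\mu(S)=1$ I obtain $\pi_{S}^{\rot}\mGg\,\mat{1}_{S}=1-\pi_{S}^{\rot}\mGg\,\mat{1}_{\overline{S}}$. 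Substituting back yields $\xi_{S}^{\rot}L_\gamma\,\xi_{S}=\pi_{S}^{\rot}\mGg\,\mat{1}_{\overline{S}}=\mathbb{E}_{v\sim\pi_S}[\mathrm{gEsc}(v,S,\mGg)]$, which combined with the first display proves the claim.

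There is no deep obstacle here: the argument is a clean exact computation rather than an approximation estimate, the only $\eps$-slack entering through the opening application of the sparsifier guarantee. The single point that requires care is the bookkeeping among the three related vectors $\xi_S$, $\pi_S$, and $\mat{1}_S$ — specifically confirming $\pi_S=D\mat{1}_S/\mu(S)$ and $\pi_{S}^{\rot}\mat{1}_{V}=1$ — together with the observation that row-stochasticity of $\Di A$ and $\sum_i\gamma_i=1$ force $\mGg$ to fix the all-ones vector $\mat{1}_{V}$, which is exactly what converts the quadratic form on $\mat{1}_S$ into the escaping probability toward $\overline{S}$.
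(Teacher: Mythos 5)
Your proof is correct and follows essentially the same route as the paper's: apply the sparsifier guarantee at the test vector $\xi_S$, then verify the exact identity $\xi_{S}^{\rot}(D-D\mGg)\xi_{S}=\pi_{S}^{\rot}\mGg\mat{1}_{\overline{S}}$ via $\pi_S^{\rot}=\mat{1}_S^{\rot}D/\mu(S)$ and the stochasticity of $\mGg$. The only difference is that you spell out the intermediate bookkeeping that the paper compresses into a single chain of identities.
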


Using Corollary~\ref{cor:denseTmtx} and Lemma~\ref{lem_mulApproxEGEP}, we propose the first efficient and parallel algorithm that runs in nearly linear work and poly-logarithmic depth that yields a multiplicative approximation of E.G.E.P. for Markov chains with transition matrices induced by $\MDBD$.

\subsection{Faster $\SDDM$ Solver}\label{subsec:FSSMS}

Spielman and Peng~\cite{PS14} gave the first parallel $\SDD$ solver that constructs in $O(m\log^{c_1}n\cdot\log^{3}\kappa)$ work and $O(\log^{c_2}n\cdot\log\kappa)$ depth a sparse $O(1)$-approximate inverse chain that solves approximately to any $\eps>0$ precision an $\SDD$ system in $O((m+n \log^{c}n\cdot\log^{3}\kappa)\log1/\eps)$ work and $O(\log n \cdot\log \kappa\cdot\log 1/\eps)$ depth. In Section~\ref{sec:SDDMSolver}, we give a simple parallel preprocessing step that strengthens their algorithm.

\begin{thm}\label{thm_SDDM_Solver}
There is an algorithm that on input an $n$-dimensional $\SDDM$ matrix $M$ with $m$ non-zeros and condition number at most $\kappa$, produces with probability at least $1/2$ a sparse $O(1)$-approximate inverse chain that can be used to solve any linear equation in $M$ to any precision $\eps>0$ in $O(n \log^{c}n\cdot\log^{3}\kappa\cdot\log1/\eps)$ work and $O(\log n \cdot\log \kappa\cdot\log 1/\eps)$ depth, for some constant $c$. The algorithm runs in $O(m\log^{c_1}n)$ work and $O(\log^{c_2}n\cdot\log\kappa)$ depth for some other constants $c_1,c_2$.
\end{thm}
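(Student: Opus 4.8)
The plan is to eliminate the dependence on the number of non-zeros $m$ by inserting a single parallel spectral sparsification in front of the Spielman and Peng~\cite{PS14} machinery. Concretely, I would first replace the input $\SDDM$ matrix $M$ by a constant-factor spectral sparsifier $\widetilde{M}\approx_{1/2}M$ with $nnz(\widetilde{M})\leq O(n\log^{c}n)$, and then run the Spielman and Peng approximate-inverse-chain construction on $\widetilde{M}$ rather than on $M$. The guiding observation is that, by transitivity of the spectral ordering, a constant-factor approximate inverse chain for $\widetilde{M}$ also serves as an $O(1)$-approximate inverse chain for $M$; hence every operation after preprocessing touches only the sparse matrix $\widetilde{M}$ and its (already sparse) chain, so both the chain construction and the per-solve cost scale with $n$ instead of $m$.

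For the construction bound I would proceed in two stages. First, compute $\widetilde{M}$ in parallel: a constant-factor spectral sparsifier of an $\SDDM$ matrix with $O(n\log^{c}n)$ non-zeros is computable, with high probability, in $O(m\log^{c_1}n)$ work and $O(\log^{c_2}n\cdot\log\kappa)$ depth, where I would sparsify the underlying graph Laplacian and then restore the excess diagonal so that $\widetilde{M}$ is again $\SDDM$. Second, feed $\widetilde{M}$ into the Spielman and Peng construction; since $\widetilde{M}$ has only $O(n\log^{c}n)$ non-zeros, their work bound instantiated with $\hat{m}=nnz(\widetilde{M})$ contributes merely $O(n\log^{c'}n\cdot\log^{3}\kappa)$, which is dominated by the preprocessing term $O(m\log^{c_1}n)$ in the dense regime of interest (e.g. $m=\Theta(n^{2})$ as in Corollary~\ref{cor:denseTmtx}). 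Because $\widetilde{M}\approx_{1/2}M$, the condition number of $\widetilde{M}$ is within a constant factor of $\kappa$, so the construction depth stays $O(\log^{c_2}n\cdot\log\kappa)$.

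For the solve bound I would invoke the Spielman and Peng solver verbatim on the chain built for $\widetilde{M}$. Every matrix applied during a solve, including the top-level matrix $\widetilde{M}$ itself, is sparse with $O(n\log^{c}n)$ non-zeros, so the $O(m)$ matrix-vector term appearing in their bound $O((m+n\log^{c}n\cdot\log^{3}\kappa)\log1/\eps)$ collapses to $O(n\log^{c}n)$, yielding $O(n\log^{c}n\cdot\log^{3}\kappa\cdot\log1/\eps)$ work and the unchanged $O(\log n\cdot\log\kappa\cdot\log1/\eps)$ depth. Since the chain is an $O(1)$-approximate inverse chain for $M$, using it as a preconditioner (equivalently, solving the spectrally equivalent sparse system in $\widetilde{M}$ and driving the residual down by preconditioned refinement) solves linear equations in $M$ to any precision $\eps>0$ within these bounds.

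The main obstacle is the preprocessing step itself: I must produce the parallel sparsifier with the stated work and depth \emph{without} circularly calling an $\SDD$ solver to estimate effective resistances. I would resolve this by bootstrapping from a crude, solver-free parallel sparsifier (for instance a combinatorial low-stretch-tree based constant-factor sparsifier) and only then sampling down to the target size, keeping the whole preprocessing within $O(m\log^{c_1}n)$ work. A secondary technical point, namely that a constant-factor inverse chain for $\widetilde{M}$ qualifies, in the precise sense of Spielman and Peng, as an $O(1)$-approximate inverse chain usable to solve $M$ to arbitrary precision, follows by composing the constant-factor inequality $\widetilde{M}\approx_{1/2}M$ with the per-level guarantees of the chain, together with the observation that restoring the excess diagonal keeps each level $\SDDM$ so that their solver applies unchanged.
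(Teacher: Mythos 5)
Your proposal is correct and follows essentially the same route as the paper: a single parallel constant-factor sparsification of the input $\SDDM$ matrix (the paper uses its parallel normalized sparsifier $\pKLC$, built on the solver-free $\STOVP$ routine, which resolves exactly the circularity you worry about), followed by running the Spielman--Peng squaring-based inverse-chain construction on the sparsifier and transferring the chain back to $M$ via $(D-\hM)^{-1}\approx (D-M)^{-1}$ from Fact~\ref{fact_fiveProps}.d. The only cosmetic difference is that the paper re-derives the chain iterations through its own parallel lemmas ($\pPS$ and $\pSqrSS$) rather than invoking Spielman--Peng verbatim.
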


For the current state-of-the-art result on parallel $\SDD$ solvers we refer the reader to the work of Lee et al.~\cite{KLPSS15}.

\section{Algorithmic Background on Spectral Sparsification}\label{sec:BN}

We write $X\approx_{\eps_{1}\oplus\eps_{2}}Y$ to indicate $(1-\eps_{1})(1-\eps_{2})Y\preceq X\preceq(1+\eps_{1})(1+\eps_{2})Y$. Our analysis uses the following five basic facts (c.f.~\cite{ST14,BGHNT06}).

\begin{fact}\label{fact_fiveProps}
For positive semi-definite (PSD) matrices
$X,Y,W$ and $Z$ it holds

a. if $Y\approx_{\eps}Z$ then $X+Y\approx_{\eps}X+Z$;

b. if $X\approx_{\eps}Y$ and $W\approx_{\eps}Z$ then $X+W\approx_{\eps}Y+Z$;

c. if $X\approx_{\eps_{1}}Y$ and $Y\approx_{\eps_{2}}Z$
then $X\approx_{\eps_{1}\oplus\eps_{2}}Z$;

d. if $X$ and $Y$ are invertible matrices such that $X\approx_{\eps}Y$
then $X^{-1}\approx_{2\eps}Y^{-1}$, $\forall \eps\in(0,\frac{1}{2})$;

e. for any matrix $V$ if $X\approx_{\eps}Y$ then $V^{\rot}XV\approx_{\eps}V^{\rot}YV$.
\end{fact}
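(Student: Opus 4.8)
The plan is to reduce all five claims to a few elementary properties of the Loewner (PSD) order, each following directly from $A\preceq B\iff z^{\rot}(B-A)z\ge0$ for all vectors $z$. I would first record four \emph{order lemmas}: (i) \emph{additivity}, $A\preceq B$ and $C\preceq D$ imply $A+C\preceq B+D$; (ii) \emph{scaling}, for $c\ge0$ and $A\preceq B$ one has $cA\preceq cB$, and for scalars $c_1\le c_2$ with $W\succeq0$ one has $c_1W\preceq c_2W$; (iii) \emph{congruence invariance}, $A\preceq B$ implies $V^{\rot}AV\preceq V^{\rot}BV$ for every $V$, since $z^{\rot}V^{\rot}(B-A)Vz=(Vz)^{\rot}(B-A)(Vz)\ge0$; and (iv) \emph{order reversal of inversion}, $0\prec A\preceq B$ implies $B^{-1}\preceq A^{-1}$, which I would obtain by conjugating $A\preceq B$ by $A^{-1/2}$ to get $I\preceq C$ where $C=A^{-1/2}BA^{-1/2}$, noting $C^{-1}\preceq I$ since every eigenvalue of $C$ is at least $1$, and then conjugating $C^{-1}\preceq I$ back by $A^{-1/2}$. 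Granting these, parts (a), (b), (c), (e) are immediate, and only (d) needs genuine work.

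For (b) I would add the two defining chains $(1-\eps)Y\preceq X\preceq(1+\eps)Y$ and $(1-\eps)Z\preceq W\preceq(1+\eps)Z$ by additivity, giving $(1-\eps)(Y+Z)\preceq X+W\preceq(1+\eps)(Y+Z)$. Part (a) is then the special case of (b) pairing $Y\approx_{\eps}Z$ with the trivial self-approximation $X\approx_{\eps}X$, valid since $(1-\eps)X\preceq X\preceq(1+\eps)X$ for $X\succeq0$ and $\eps\ge0$. For (c) I would chain multiplicatively: from $X\preceq(1+\eps_1)Y$ and $Y\preceq(1+\eps_2)Z$, scaling the latter by $(1+\eps_1)\ge0$ yields $X\preceq(1+\eps_1)(1+\eps_2)Z$; symmetrically, using $(1-\eps_1)\ge0$ (as $\eps_1\in(0,1)$) to scale $Y\succeq(1-\eps_2)Z$ gives $X\succeq(1-\eps_1)(1-\eps_2)Z$, which is exactly $X\approx_{\eps_1\oplus\eps_2}Z$. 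For (e) I would apply congruence invariance to both halves of $(1-\eps)Y\preceq X\preceq(1+\eps)Y$ and use $V^{\rot}[(1\pm\eps)Y]V=(1\pm\eps)V^{\rot}YV$.

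The main obstacle is (d), where both order reversal and the hypothesis $\eps<\tfrac12$ enter. Starting from $(1-\eps)Y\preceq X\preceq(1+\eps)Y$ with $X,Y$ positive definite, order reversal gives $\tfrac{1}{1+\eps}Y^{-1}\preceq X^{-1}\preceq\tfrac{1}{1-\eps}Y^{-1}$. It then remains to replace the scalar factors by $1\pm2\eps$ using lemma (ii): one checks $\tfrac{1}{1+\eps}\ge1-2\eps$, equivalent to $\eps(1+2\eps)\ge0$ and hence valid for all $\eps>0$, and $\tfrac{1}{1-\eps}\le1+2\eps$, equivalent to $\eps(1-2\eps)\ge0$ and hence valid exactly for $\eps\le\tfrac12$. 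This single scalar estimate is where the hypothesis $\eps\in(0,\tfrac12)$ is consumed and where the loss factor $2\eps$ originates. Combining the two bounds with $Y^{-1}\succeq0$ yields $(1-2\eps)Y^{-1}\preceq X^{-1}\preceq(1+2\eps)Y^{-1}$, i.e.\ $X^{-1}\approx_{2\eps}Y^{-1}$, completing the argument.
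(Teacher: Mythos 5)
Your proof is correct in all five parts. Note that the paper itself does not prove Fact~\ref{fact_fiveProps}; it states these properties as known facts with a citation, so there is no in-paper argument to compare against. Your derivation is the standard one: reducing everything to additivity, nonnegative scaling, congruence invariance, and order reversal of inversion for the Loewner order, with the only delicate step being part (d), where you correctly isolate the scalar inequalities $\tfrac{1}{1+\eps}\geq 1-2\eps$ and $\tfrac{1}{1-\eps}\leq 1+2\eps$ (the latter being exactly where $\eps\leq\tfrac12$ is used) and correctly prove order reversal via conjugation by $A^{-1/2}$. This is a complete and self-contained verification of what the paper takes for granted.
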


\subsection{Prior Algorithms}

Our algorithms for computing spectral sparsifiers of matrix-polynomials use as a black-box several spectral sparsification algorithms for Laplacian and $\SDDM$ matrices.

More precisely, our sequential algorithms build upon Theorem~\ref{thm_KL13} that relies on Kelner and Levin's~\cite[Theorem 3]{KL13} and Cohen et al.'s~\cite[Lemma 4]{CLMMPS15}, and Theorem~\ref{thm_PS14} proposed by Spielman and Peng~\cite[Corollary 6.4]{PS14}.

\begin{thm}\label{thm_KL13}\cite{KL13}
There is an algorithm $\iSS$ takes
as input parameter $\eps\in(0,1)$, matrices $D$ and $A$ such that $D$ is positive diagonal and $A$ is symmetric non-negative with $A_{ii}=0$ for all $i$ such that $L=D-A$ is Laplacian matrix. Then in $\wO(m_{L}\log^{2}n)$ time outputs a positive diagonal matrix $\wD$ and symmetric non-negative matrix $\wA$ such that $nnz(\wA)\leq O(\eps^{-2}n\log n)$,
$\wA_{ii}=0$ for all $i$, and $\wD-\wA\approx_{\eps}D-A$. Moreover, $\wD-\wA$ is Laplacian matrix.
\end{thm}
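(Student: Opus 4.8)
The plan is to realize the routine $\iSS$ as the classical effective-resistance sampling scheme of Spielman and Srivastava, instantiated with the constant-factor resistance estimates of Kelner and Levin and the leverage-score sampling bound of Cohen et al. Write $L=D-A=\sum_{e}w_{e}\,b_{e}b_{e}^{\rot}$, where the sum ranges over edges $e=(u,v)$, $w_{e}=A_{uv}$, and $b_{e}=\mat{1}_{u}-\mat{1}_{v}$ is the signed incidence vector. The effective resistance of $e$ is $R_{e}=b_{e}^{\rot}L^{+}b_{e}$ and its statistical leverage score is $\tau_{e}=w_{e}R_{e}$, so that $\sum_{e}\tau_{e}=\operatorname{rank}(L)\le n-1$.

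\textbf{Step 1 (approximate resistances).} I would invoke the Kelner--Levin routine \cite[Theorem 3]{KL13} to compute, in $\wO(m_{L}\log^{2}n)$ time, values $\tilde R_{e}$ satisfying $R_{e}\le\tilde R_{e}\le c\cdot R_{e}$ for a universal constant $c$. Setting $\tilde\tau_{e}=w_{e}\tilde R_{e}$ yields overestimates of the leverage scores with $\tau_{e}\le\tilde\tau_{e}\le c\,\tau_{e}$, and hence $\sum_{e}\tilde\tau_{e}\le c(n-1)$. This step dominates the running time and accounts for the claimed $\wO(m_{L}\log^{2}n)$ bound. \textbf{Step 2 (sample and reweight).} Draw each edge independently with probability $p_{e}=\min\{1,\,K\eps^{-2}\log n\cdot\tilde\tau_{e}\}$ for a suitable constant $K$; for each retained edge, include it in $\wA$ with the rescaled weight $w_{e}/p_{e}$, and let $\wD$ be the corresponding degree matrix, so that $\wD-\wA=\sum_{e\text{ kept}}(w_{e}/p_{e})\,b_{e}b_{e}^{\rot}$. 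By construction $\wA$ is symmetric, non-negative, with $\wA_{ii}=0$, and $\wD-\wA$ is a Laplacian matrix. The expected number of retained edges is $\sum_{e}p_{e}\le K\eps^{-2}\log n\sum_{e}\tilde\tau_{e}=O(\eps^{-2}n\log n)$, which concentrates by a Chernoff bound, giving $nnz(\wA)\le O(\eps^{-2}n\log n)$ with high probability.

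\textbf{Step 3 (concentration).} To show $\wD-\wA\approx_{\eps}L$, I would apply the leverage-score sampling concentration bound of Cohen et al. \cite[Lemma 4]{CLMMPS15} to the decomposition $L=\sum_{e}w_{e}b_{e}b_{e}^{\rot}$ with sampling probabilities governed by the overestimates $\tilde\tau_{e}$. With $\Pi=(L^{+})^{1/2}L(L^{+})^{1/2}$ the orthogonal projection onto $\operatorname{range}(L)$, the rescaled sample $\sum_{e\text{ kept}}(w_{e}/p_{e})(L^{+})^{1/2}b_{e}b_{e}^{\rot}(L^{+})^{1/2}$ is a sum of independent PSD matrices with expectation exactly $\Pi$; since each $p_{e}$ is an $\Omega(\eps^{-2}\log n)$ multiple of the true leverage score $\tau_{e}=\|(L^{+})^{1/2}\sqrt{w_{e}}\,b_{e}\|^{2}$, the per-term norm is $O(\eps^{2}/\log n)$, and the matrix Chernoff bound places this sum within $\eps$ of $\Pi$ in the Loewner order with probability $1-n^{-\Omega(1)}$. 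Conjugating back by $L^{1/2}$ yields $(1-\eps)L\preceq\wD-\wA\preceq(1+\eps)L$, i.e. $\wD-\wA\approx_{\eps}D-A$.

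The main obstacle is obtaining the resistance estimates within the stated nearly-linear time, which is exactly the content of the cited Kelner--Levin theorem and is used here as a black box. The remaining work is the standard verification that constant-factor \emph{overestimates} of the leverage scores suffice for the concentration bound --- they cost only a constant factor in the final edge count --- together with the routine check that sampling and reweighting preserve the Laplacian structure, namely non-negativity of the rescaled weights and the zero-diagonal degree relation defining $\wD$.
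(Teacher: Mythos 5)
The paper does not prove this statement itself -- it cites it as a black box, noting only that it ``relies on Kelner and Levin's~\cite[Theorem 3]{KL13} and Cohen et al.'s~\cite[Lemma 4]{CLMMPS15}'' -- and your reconstruction assembles exactly those two ingredients in the standard way (constant-factor effective-resistance estimates, then leverage-score sampling with overestimates). Your argument is correct and matches the derivation the paper intends.
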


\begin{thm}\label{thm_PS14}\cite{PS14}
There is an algorithm $\PS$ that takes
as input $\SDDM$ matrix $B=D-M$ and parameter $\eps\in(0,1)$.
Then in $O(\eps^{-2}m_{B}\log^{2}n)$ time outputs
a positive diagonal matrix $\wD$ and symmetric non-negative
matrix $\wM$ with $nnz(\wM)\leq O(\eps^{-2}m_{B}\log n)$
and $\wM_{ii}=0$ for all $i$, such that $\wD-\wM\approx_{\eps}D-M\Di M$ and $\wD\approx_{\eps}D$. Moreover, $\wD-\wM$ is $\SDDM$ matrix.
\end{thm}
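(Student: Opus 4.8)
The plan is to realise $D-M\Di M$ as the Schur complement of a \emph{sparse} $\SDDM$ matrix and thereby as an explicit sum of ``two-step'' rank-one Laplacian terms, and then to sparsify via Theorem~\ref{thm_KL13}. First I would record the exact squaring decomposition. Writing $M_{\cdot j}$ for the $j$-th column of $M$ and $c_j=\sum_k M_{kj}$, one has $M\Di M=\sum_j d_j^{-1}M_{\cdot j}M_{j\cdot}$, and a direct entrywise check gives
\[
D-M\Di M \;=\; D-\sum_{j}\tfrac{1}{d_j}M_{\cdot j}M_{j\cdot}\;=\;\Delta+\sum_{j=1}^{n}\sum_{i<k}\tfrac{M_{ij}M_{kj}}{d_j}\,(e_i-e_k)(e_i-e_k)^{\rot},
\]
where $\Delta$ is the diagonal matrix with $\Delta_{ii}=d_i-\sum_j M_{ij}c_j/d_j$. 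Since $B=D-M$ is $\SDDM$ its column sums satisfy $c_j\le d_j$, so $\Delta_{ii}\ge d_i-\sum_j M_{ij}\ge 0$ and $\Delta\succeq0$; and because $M\ge0$ the off-diagonal entries $-\sum_j M_{ij}M_{kj}/d_j$ are non-positive, so $D-M\Di M$ is $\SDDM$. Equivalently, $D-M\Di M$ is the Schur complement of the $2n\times2n$ sparse $\SDDM$ matrix $\hB=\left(\begin{smallmatrix}D&-M\\-M&D\end{smallmatrix}\right)$ onto the first block, and eliminating the $j$-th ``middle'' coordinate of the double cover contributes precisely the weighted clique on the neighbours of $j$ that appears above.

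Having split $D-M\Di M=\Delta+\Pi$ as a non-negative diagonal plus the weighted graph Laplacian $\Pi$, I would keep $\Delta$ (only $O(n)$ entries) verbatim and sparsify $\Pi$. The difficulty is that $\Pi$ can be dense: a middle vertex $j$ of degree $\deg_j$ injects $\binom{\deg_j}{2}$ clique edges, so forming $\Pi$ and feeding it to Theorem~\ref{thm_KL13} costs $\sum_j\deg_j^2$. Instead I would sparsify $\Pi$ implicitly by sampling its rank-one summands: each two-step walk $i\to j\to k$ carries the PSD term $\tfrac{M_{ij}M_{kj}}{d_j}(e_i-e_k)(e_i-e_k)^{\rot}$, and drawing $O(\eps^{-2}m_B\log n)$ such walks (pick $j$, then two neighbours of $j$, each proportional to the incident $M$-weights) and reweighting yields, by a matrix-concentration bound, a $\tilde\Pi\approx_\eps\Pi$ with $O(\eps^{-2}m_B\log n)$ edges. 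Alternatively one may sparsify the sparse double cover $\hB$ using Theorem~\ref{thm_KL13} and invoke the fact that $\approx_\eps$ in the Loewner order is preserved under Schur complements. Either way, setting $\wD-\wM\triangleq\Delta+\tilde\Pi$ and applying Fact~\ref{fact_fiveProps}(a) gives $\wD-\wM\approx_\eps\Delta+\Pi=D-M\Di M$; since a Laplacian sparsifier is again a Laplacian, $\wM$ is symmetric, non-negative with $\wM_{ii}=0$, and $\wD-\wM$ is $\SDDM$.

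It remains to secure the auxiliary guarantee $\wD\approx_\eps D$. Here I would use that the sampling preserves the weighted degrees of $\Pi$ up to a $(1\pm\eps)$ factor, together with the normalisation of $B$ assumed in~\cite{PS14} (equivalently a bound on $\Dhm M\Dhm$), which controls the diagonal mass $\diag(M\Di M)=\{\sum_j M_{ij}^2/d_j\}_i$ relative to $D$ so that the output diagonal stays within $(1\pm\eps)D$. Making this last point precise is the one step where the specific scaling of $B$ is genuinely needed, and I expect it to be the most delicate part of the argument, since without such a normalisation the true diagonal $D-\diag(M\Di M)$ can fall well below $(1-\eps)D$.

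Finally, the running time is dominated by drawing and reweighting the $O(\eps^{-2}m_B\log n)$ samples (and the single call to the Laplacian sparsifier of Theorem~\ref{thm_KL13} on the $O(m_B)$-edge double cover), giving $O(\eps^{-2}m_B\log^{2}n)$ time overall. The main obstacle throughout is sparsifying the dense squared operator $\Pi$ without ever materialising it; this is resolved exactly by the implicit two-step-walk sampling (or by the Schur-complement-preserving sparsification of $\hB$), which both avoids the $\sum_j\deg_j^2$ blow-up and keeps the output $\SDDM$.
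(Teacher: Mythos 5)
This statement is imported verbatim from~\cite{PS14}; the paper gives no proof of it, so there is no internal argument to compare against line by line. Your reconstruction is nevertheless the intended one: your decomposition
\[
D-M\Di M=\Delta+\sum_{j}\sum_{i<k}\frac{M_{ij}M_{kj}}{d_j}(e_i-e_k)(e_i-e_k)^{\rot},\qquad \Delta_{ii}=d_i-\sum_j M_{ij}c_j/d_j\ge 0,
\]
checks out entrywise and is exactly the structural fact the paper records as \lemref{lemGLStruct} (proved in Appendix~\ref{appsubsec:SR}), and the implicit two-step-walk sampling is precisely the mechanism the paper defers to when it writes ``using similar arguments as in Section 6 Efficient Parallel Construction~\cite{PS14}'' in the proof of Lemma~\ref{lemmPS}. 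So in spirit you have reproduced the proof of the cited result rather than found a different route.

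Two points remain genuinely unfinished. First, sampling edges of $\Pi$ proportionally to their weights is not, in general, a valid spectral-sparsification scheme; the technical heart of~\cite{PS14} is showing that for these particular product-weighted cliques the weight of the two-step edge $i\to j\to k$ is an overestimate of its leverage score with respect to $D-M\Di M$, with the overestimates summing to $O(m_B)$ — that is what licenses the $O(\eps^{-2}m_B\log n)$ sample count via matrix concentration. You assert this step; you do not prove it. Second, your worry about the clause $\wD\approx_{\eps}D$ is well founded and in fact sharper than you state: since $\wM_{ii}=0$, testing the guarantee $\wD-\wM\approx_{\eps}D-M\Di M$ on the basis vector $e_i$ forces $\wD_{ii}\in\bigl[(1\pm\eps)\bigl(d_i-\tsum_j M_{ij}^2/d_j\bigr)\bigr]$, so $\wD\approx_{\eps}D$ cannot follow from the spectral approximation alone and genuinely requires the normalization used in the~\cite{PS14} chain (or a weaker reading of that clause). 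Note that the present paper never relies on it — Lemma~\ref{lemmPS} re-expresses the sparsifier against the original diagonal via Lemma~\ref{lem_my_DappToD} — so nothing downstream is endangered, but your sketch does not establish that part of the statement.
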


Our parallel algorithm uses Theorem~\ref{thm_ParallelLaplacianSS}, which is the culmination of a research line conducted by Spielman and Teng~\cite{ST11}, Orecchia and Vishnoi~\cite{OV11} and Spielman and Peng~\cite{PS14}.

\begin{thm}\label{thm_ParallelLaplacianSS}\cite{PS14}
There is an algorithm $\STOVP$ takes as input a Laplacian matrix $D-M$ and parameter $\eps\in(0,1/2)$. Then it outputs a spectral sparsifier $D-\wM\approx_{\eps}D-M$ with $\wM_{ii}=0$ for all $i$ and $nnz(\wM)\leq O(\eps^{-2}n\log^{c}n)$ for some constant $c$. Moreover, this algorithm requires $O(m\log^{c_1}n)$ work and $O(\log^{c_2}n)$ depth, for some other constants $c_1$ and $c_2$.
\end{thm}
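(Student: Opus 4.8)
The plan is to establish $\STOVP$ by reducing spectral sparsification to effective-resistance sampling and then supplying the two parallel primitives that reduction requires. Write $L=D-M=B_{0}^{\rot}WB_{0}$, where $B_{0}$ is the signed edge--vertex incidence matrix and $W$ the diagonal matrix of edge weights, and for an edge $e=(u,v)$ let $R_{e}=(\mat{1}_{u}-\mat{1}_{v})^{\rot}L^{+}(\mat{1}_{u}-\mat{1}_{v})$ be its effective resistance. Following the nearly-linear-time framework initiated by Spielman--Teng, the first step is the Spielman--Srivastava sampling scheme: draw each edge independently, keeping $e$ with probability $p_{e}\geq\min\{1,\,C\eps^{-2}\log n\cdot w_{e}\tilde R_{e}\}$ and rescaling a kept edge by $1/p_{e}$, where $\tilde R_{e}$ is any constant-factor overestimate of $R_{e}$. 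A matrix Chernoff bound applied to the independent rank-one PSD summands $\tfrac{1}{p_{e}}w_{e}(\mat{1}_{u}-\mat{1}_{v})(\mat{1}_{u}-\mat{1}_{v})^{\rot}$, restricted to the range of $L$, shows the resulting $D-\wM$ satisfies $D-\wM\approx_{\eps}L$ with high probability, with $\wM_{ii}=0$. Foster's identity $\sum_{e}w_{e}R_{e}=n-1$ bounds the expected number of kept edges by $O(\eps^{-2}n\log n)$, and since the coin flips are independent this step is embarrassingly parallel, costing $O(\log n)$ depth and $O(m)$ work.

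The next step is to produce the overestimates $\tilde R_{e}$ for all $m$ edges within polylog depth. Here I would use the Johnson--Lindenstrauss trick of Spielman--Srivastava: since $R_{e}=\|W^{1/2}B_{0}L^{+}(\mat{1}_{u}-\mat{1}_{v})\|_{2}^{2}$, projecting the rows of $W^{1/2}B_{0}L^{+}$ onto a random $\pm1$ matrix $Q\in\R^{k\times m}$ with $k=O(\eps_{0}^{-2}\log n)$ preserves all $m$ of these quantities up to a $(1\pm\eps_{0})$ factor with high probability. Computing $Q\,W^{1/2}B_{0}\,L^{+}$ thus reduces to applying $L^{+}$ to the $k=O(\log n)$ explicitly formed vectors $(QW^{1/2}B_{0})^{\rot}$, and a constant-relative-error application of $L^{+}$ suffices because that error folds into the slack of the sampling probabilities. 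Forming $QW^{1/2}B_{0}$ and reading off squared row-norms are each $O(m\log n)$ work and $O(\log n)$ depth, so the whole reduction is polylog depth provided $L^{+}$ can be applied in parallel.

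This leaves the genuine obstacle: a parallel approximate Laplacian solver that applies $L^{+}$ to a vector in polylog depth and near-linear work. The difficulty is circular --- effective-resistance sampling needs a solver, yet near-linear solvers are themselves built on sparsifiers. The resolution, due to Spielman--Peng, is a sparse approximate inverse chain $\{L_{i}\}_{i=0}^{d}$ in which each $L_{i+1}$ is a crude constant-factor sparsifier of a smoothed, better-conditioned proxy for $L_{i}$ (in the $\SDDM$ normalization of Theorem~\ref{thm_PS14}, the proxy $D_{i}-M_{i}D_{i}^{-1}M_{i}$), so that the action of $L^{+}$ telescopes through $d=O(\log\kappa)=O(\log n)$ levels, each applying a fixed operator of polylog depth; Orecchia--Vishnoi supply the parallel spectral primitive (a truncated matrix power series evaluated by repeated squaring) used to construct and apply the smoothed operators, and since every level re-sparsifies an already-sparse graph the per-level cost remains near-linear. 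The main thing to verify carefully is that the $(1\pm\eps_{0})$ errors from JL, the constant-factor solver error, and the $O(\log n)$-level chain error compose --- via the $\oplus$ rule of Fact~\ref{fact_fiveProps} --- into a single guarantee $D-\wM\approx_{\eps}L$ with $nnz(\wM)\leq O(\eps^{-2}n\log^{c}n)$, while the work $O(m\log^{c_{1}}n)$ and depth $O(\log^{c_{2}}n)$ are maintained simultaneously across the recursion; this simultaneous control, rather than any single primitive, is the delicate part.
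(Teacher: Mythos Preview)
The paper does not prove this statement at all; note the \cite{PS14} in the theorem header. Theorem~\ref{thm_ParallelLaplacianSS} is quoted verbatim as a black box from the Spielman--Teng / Orecchia--Vishnoi / Spielman--Peng line of work and is only \emph{used} (in Lemmas~\ref{lem_pKLC} and~\ref{lem_pPS}, and in Section~\ref{sec:SDDMSolver}). So there is no ``paper's own proof'' to compare your sketch against.

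That said, your sketch has a genuine gap. You correctly flag the circularity --- effective-resistance sampling needs a parallel Laplacian solver, while the Spielman--Peng parallel solver is itself built on parallel sparsification --- but the resolution you give does not break the circle. The ``sparse approximate inverse chain'' you invoke is precisely Theorem~\ref{thm_pPS14} and the construction in Section~\ref{sec:SDDMSolver} of this paper, and both of those explicitly call $\STOVP$ (the very statement you are trying to prove) to sparsify each level $D-M_{i}D^{-1}M_{i}$. Saying that Orecchia--Vishnoi ``supply the parallel spectral primitive \ldots\ used to construct and apply the smoothed operators'' does not furnish the missing per-level sparsifier: OV11 is a parallel balanced-separator routine via matrix exponentials, not a standalone sparsification procedure.

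The actual route in \cite{ST11,OV11,PS14} breaks the circularity by \emph{avoiding effective resistances altogether} at the base: Spielman--Teng's spectral sparsification is a combinatorial, recursive local-partitioning scheme, and Orecchia--Vishnoi parallelize its key balanced-cut subroutine. That yields $\STOVP$ directly, without any solver; only afterwards is $\STOVP$ used to build the parallel inverse chain and hence the parallel solver. Your Spielman--Srivastava/JL pipeline is perfectly sound sequentially (where near-linear solvers came first), but as a parallel argument it presupposes what it is meant to establish.
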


Based on Theorem~\ref{thm_ParallelLaplacianSS} Spielman and Peng~\cite{PS14} parallelized algorithm $\PS$ (c.f. Theorem~\ref{thm_PS14}).

\begin{thm}\label{thm_pPS14}\cite{PS14}
There is a parallel algorithm that on input an $\SDDM$ matrix $D-M$ and parameter $\eps\in(0,1/2)$, outputs a spectral sparsifier $D-\wM\approx_{\eps}D-M\Di M$ with $\wD\approx_{\eps}D$, $\wM_{ii}=0$ for all $i$ and $nnz(\wM)\leq O(\eps^{-2}n\log^{c}n)$ for some constant $c$. Moreover, this algorithm requires $O(\eps^{-2}m\log^{c_1+1}n)$ work and $O(\log^{c_2}n)$ depth, for some other constants $c_1$ and $c_2$.
\end{thm}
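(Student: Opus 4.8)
The plan is to parallelize the sequential algorithm $\PS$ of Theorem~\ref{thm_PS14} by exposing its internal structure as a collection of \emph{independent} per-vertex subproblems and then replacing every sequential sparsification call with the parallel Laplacian sparsifier $\STOVP$ of Theorem~\ref{thm_ParallelLaplacianSS}. Recall that for an $\SDDM$ matrix $B=D-M$ with $M$ entrywise non-negative, the two-step matrix factors as $M\Di M=\sum_{k}\frac{1}{D_{kk}}m_{k}m_{k}^{\rot}$, where $m_{k}$ is the $k$-th column of $M$. The off-diagonal part of each rank-one term $\frac{1}{D_{kk}}m_{k}m_{k}^{\rot}$ is a weighted clique $G_{k}$ supported on $\supp(m_{k})$ with product weights $M_{ik}M_{jk}/D_{kk}$, so that $D-M\Di M=\sum_{k}\mathcal{L}(G_{k})+\Delta$, where $\mathcal{L}(G_{k})$ is the clique Laplacian and $\Delta$ is a non-negative diagonal remainder that keeps the matrix $\SDDM$. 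The crucial point is that $M\Di M$ is dense --- it can have $\sum_{k}\deg(k)^{2}$ non-zeros --- so it must never be formed explicitly; the algorithm operates on this implicit union-of-cliques representation, which is read off from the incident edges of each vertex and is therefore computable in $O(\log n)$ depth.

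First I would sparsify each clique Laplacian $\mathcal{L}(G_{k})$ independently. Because the weights are products, a single clique admits an efficient near-linear sparsifier (its leverage scores being available in closed form), and, more importantly, the $n$ clique-sparsifications share no data and run fully in parallel. Summing the results by a parallel reduction and invoking Fact~\ref{fact_fiveProps}(a,b) yields a graph $\hat H$ with $O(\eps^{-2}m\log n)$ edges (matching the non-zero count of $\PS$) such that $\hat H+\Delta\approx_{\eps}D-M\Di M$. Next I would assemble $\hat H+\Delta$ as a Laplacian via the standard ground-vertex reduction --- an $\SDDM$ matrix is the Laplacian of the graph augmented with one extra node carrying the excess diagonal --- and invoke $\STOVP$ once to obtain the final sparsifier $\wD-\wM$. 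Feeding $\STOVP$ an object of size $O(\eps^{-2}m\log n)$ gives work $O(\eps^{-2}m\log n\cdot\log^{c_1}n)=O(\eps^{-2}m\log^{c_1+1}n)$, depth $O(\log^{c_2}n)$, and $nnz(\wM)\leq O(\eps^{-2}n\log^{c}n)$, exactly the claimed bounds; the work of the clique stage is $O(\eps^{-2}m\log n)$ and is dominated by this final pass.

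It remains to certify correctness and the side guarantees. Spectral approximation is preserved under the two sparsification stages by composing Fact~\ref{fact_fiveProps}(a)--(c), giving $\wD-\wM\approx_{\eps}D-M\Di M$ after a constant rescaling of the accuracy parameter. The normalisation $\wM_{ii}=0$ and the invariant $\wD\approx_{\eps}D$ are maintained by tracking the diagonal through each stage: each clique contributes a known amount to the diagonal, its sparsifier preserves that amount up to a $(1\pm\eps)$ factor, and $\STOVP$ returns a Laplacian whose induced diagonal approximates $D$ within $\eps$. The $\SDDM$/Laplacian type is preserved throughout, since clique Laplacians, their sparsifiers, a non-negative diagonal shift, and the output of $\STOVP$ are all of Laplacian type.

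The main obstacle is precisely the density of $M\Di M$: one must sparsify the implicit union of cliques within the stated work \emph{without ever materialising it}, and then verify that the independently computed clique-sparsifiers \emph{compose} --- that a sum of clique-sparsifiers is a sparsifier of the union (Fact~\ref{fact_fiveProps}(b)) while simultaneously maintaining the $\SDDM$ structure and the $\wD\approx_{\eps}D$ diagonal invariant. Once this is established, the parallelisation is essentially mechanical, since every remaining subroutine of $\PS$ is either embarrassingly parallel across the intermediate vertices $k$ or is delegated to $\STOVP$, and the depth is dominated by the $O(\log^{c_2}n)$ depth of the parallel Laplacian sparsifier together with an $O(\log n)$ parallel reduction for the summation.
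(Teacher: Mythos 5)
This theorem is stated in the paper as a black-box citation of Spielman and Peng~\cite{PS14}; the paper gives no proof of its own, only pointing to the ``Efficient Parallel Construction'' section of that work (see the proof of Lemma~\ref{lemmPS}) and recording the underlying clique decomposition as Lemma~\ref{lemGLStruct}. Your reconstruction --- writing $M\Di M$ as an implicit sum of product-weight cliques $\frac{1}{D_{kk}}m_{k}m_{k}^{\rot}$, sparsifying each clique independently in parallel via closed-form leverage scores without materialising the dense matrix, and finishing with one call to $\STOVP$ --- is precisely the Spielman--Peng argument and is consistent with the machinery the paper does develop, so it is correct and takes essentially the same approach as the cited source.
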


\subsection{Spectral Sparsification of $\GL$-Matrices}\label{subsec:SSGL}

We show that the algorithms $\iSS$ and $\PS$ can be amended to produce $\GL$-matrix sparsifiers that are in \emph{normalized form}, i.e. the sparsifiers are expressed in terms of the diagonal matrix $D$ minus a symmetric non-negative matrix $\hM$. Our analysis relies on several results established by Peng et al.~\cite{PengPhd13,PS14,CCLPT14,CCLPT15}.

\begin{lem}\label{lemSSDA}
There is an algorithm $\mSS$ that takes as input a positive diagonal matrix $D$, symmetric non-negative matrix $A$ $(\text{possibly }A_{ii}\neq0)$ such that $B=D-A$ is Laplacian matrix and parameter $\eps\in(0,1)$. Then it outputs in $\wO(m_B\log^{2}n)$
time a spectral sparsifier $D-\hA\approx_{\eps}D-A$ that is Laplacian matrix and satisfies $\hA$ is symmetric non-negative matrix with $nnz(\hA)\leq O(\eps^{-2}n\log n)$.
\end{lem}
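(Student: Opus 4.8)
The plan is to reduce to the self-loop-free sparsifier $\iSS$ of Theorem~\ref{thm_KL13} and then repair the two features it does not supply directly: the self-loops of $A$, and the requirement that the output be written with the \emph{original} diagonal $D$ while keeping $\hA$ entrywise non-negative.

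First I would strip the self-loops. Write $A = A_{\mathrm{off}} + \Sigma$, where $\Sigma = \diag(A_{11},\dots,A_{nn})$ and $A_{\mathrm{off}}$ is the off-diagonal part, so that $(A_{\mathrm{off}})_{ii}=0$ and $A_{\mathrm{off}}$ is symmetric non-negative. Put $\Delta \triangleq D-\Sigma$. Since $B=D-A$ is Laplacian we have $B\mat{1}=0$, hence $D_{ii}=\sum_j A_{ij}=A_{ii}+\Delta_{ii}$, and therefore $B=\Delta-A_{\mathrm{off}}$ with $B_{ii}=\Delta_{ii}\geq 0$; in other words the self-loops cancel and $B$ is already a self-loop-free Laplacian. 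Vertices with $\Delta_{ii}=0$ carry no off-diagonal edge and can be set aside, so that $\Delta$ restricted to the remaining coordinates is strictly positive, as $\iSS$ requires. This decomposition costs $O(m_B+n)$ time.

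Next I would invoke $\iSS$ on $\Delta$ and $A_{\mathrm{off}}$ with parameter $\eps'=\eps/2$. By Theorem~\ref{thm_KL13} this produces in $\wO(m_B\log^2 n)$ time a Laplacian $\hB_0=\wD-\wA\approx_{\eps'}\Delta-A_{\mathrm{off}}=B$ with $\wA_{ii}=0$ and $nnz(\wA)\leq O(\eps^{-2}n\log n)$. The essential difficulty is that $\hB_0$ need not satisfy $(\hB_0)_{ii}\leq D_{ii}$: testing $\hB_0\approx_{\eps'}B$ on $x=e_i$ only gives $(\hB_0)_{ii}\leq (1+\eps')\Delta_{ii}$, which can exceed $D_{ii}=\Delta_{ii}+A_{ii}$ when the self-loop weight $A_{ii}$ is small, so writing $\hB_0=D-\hA$ would force a negative diagonal entry $\hA_{ii}=D_{ii}-(\hB_0)_{ii}<0$.

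I resolve this by a uniform down-scaling. Set $\hB\triangleq \tfrac{1}{1+\eps'}\hB_0$, which is again Laplacian. On the one hand the test-vector bound gives $\hB_{ii}=\tfrac{1}{1+\eps'}(\hB_0)_{ii}\leq \Delta_{ii}\leq D_{ii}$ for every $i$; on the other hand $(1-\eps')B\preceq \hB_0\preceq(1+\eps')B$ yields $\tfrac{1-\eps'}{1+\eps'}B\preceq\hB\preceq B$, and since $\tfrac{1-\eps'}{1+\eps'}\geq 1-\eps$ for $\eps'=\eps/2$ we obtain $\hB\approx_\eps B=D-A$. Finally define $\hA\triangleq D-\hB$. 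Its off-diagonal entries $\hA_{ij}=-\hB_{ij}=\tfrac{1}{1+\eps'}\wA_{ij}\geq 0$ are non-negative because $\hB$ is Laplacian, and its diagonal entries $\hA_{ii}=D_{ii}-\hB_{ii}\geq D_{ii}-\Delta_{ii}=A_{ii}\geq 0$ are non-negative by the diagonal bound; hence $\hA$ is symmetric non-negative with $nnz(\hA)\leq nnz(\wA)+n\leq O(\eps^{-2}n\log n)$, and $D-\hA=\hB\approx_\eps D-A$ is Laplacian. The whole procedure is dominated by the single $\iSS$ call, i.e.\ $\wO(m_B\log^2 n)$ time. The only genuine obstacle is the diagonal-overflow issue identified above; the rest is bookkeeping around the self-loop decomposition.
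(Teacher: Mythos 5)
Your proof is correct and follows essentially the same route as the paper: the same decomposition $D-A=\Delta-A_{\mathrm{off}}$ stripping the self-loops into the diagonal, the same call to $\iSS$ at accuracy $\eps/2$, and the same reassembly with the original $D$. Your down-scaling by $\tfrac{1}{1+\eps'}$ to keep the sparsifier's diagonal below $D$ is precisely the content of Lemma~\ref{lem_my_DappToD} (via Lemma~\ref{lem_DA}), which the paper invokes as a black box and you re-derive inline; your explicit handling of vertices with $\Delta_{ii}=0$ is a small technicality the paper glosses over.
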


The next result implicitly appears in~\cite{CCLPT15}. For completeness we prove it in Appendix \ref{appsec:GLM}.

\begin{lem}\label{lem_my_DappToD}
Suppose $D-A$ is Laplacian matrix $(\text{possibly }A_{ii}\neq0)$ and $\wD-\wA$ a sparsifier with $\wA_{ii}=0$ for every $i$ such that $(1-\eps)(D-A)\preceq\wD-\wA\preceq(1+\eps)(D-A)$.
Then the symmetric non-negative matrix $\hA=(D-\frac{1}{1+\eps}\wD)+\frac{1}{1+\eps}\wA$
satisfies $(1-2\eps)(D-A)\preceq D-\hA\preceq(1+2\eps)(D-A)$.
\end{lem}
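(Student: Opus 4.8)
The plan is to reduce the entire lemma to a single cancellation identity, after which the spectral bounds are pure scalar arithmetic. First I would substitute the definition of $\hA$ into $D-\hA$ and simplify. Since $D$ and $\wD$ are diagonal, the two copies of $D$ cancel and one is left with
\[
D-\hA = D-\Bigl(D-\tfrac{1}{1+\eps}\wD\Bigr)-\tfrac{1}{1+\eps}\wA
       = \tfrac{1}{1+\eps}\wD-\tfrac{1}{1+\eps}\wA
       = \tfrac{1}{1+\eps}(\wD-\wA).
\]
This is the crux of the argument: the normalized-form matrix $D-\hA$ is nothing but a positive rescaling of the given sparsifier $\wD-\wA$.

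Next I would feed the hypothesis $(1-\eps)(D-A)\preceq\wD-\wA\preceq(1+\eps)(D-A)$ through this identity. Dividing by $(1+\eps)>0$ (which preserves the Loewner order) yields $\tfrac{1-\eps}{1+\eps}(D-A)\preceq D-\hA\preceq(D-A)$. The two desired bounds then follow from elementary scalar inequalities together with $D-A\succeq0$ (it is a Laplacian, hence PSD, so multiplying a scalar inequality by $D-A$ keeps the ordering). For the lower bound I check $1-2\eps\le\tfrac{1-\eps}{1+\eps}$, which is equivalent to $-2\eps^{2}\le0$; hence $(1-2\eps)(D-A)\preceq\tfrac{1-\eps}{1+\eps}(D-A)\preceq D-\hA$. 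For the upper bound I use $D-\hA\preceq(D-A)\preceq(1+2\eps)(D-A)$, the last step being just $0\preceq2\eps(D-A)$.

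It remains to justify the side assertion that $\hA$ is symmetric and entrywise non-negative. Symmetry is immediate, since $D$ and $\wD$ are diagonal and $\wA$ is symmetric. The off-diagonal entries of $\hA$ agree with those of $\tfrac{1}{1+\eps}\wA$, which are non-negative because $\wA$ is. For the diagonal I would test the upper hypothesis on the $i$-th standard basis vector $e_{i}$: since $\wA_{ii}=0$, this gives $\wD_{ii}\le(1+\eps)(D_{ii}-A_{ii})$, whence $(\hA)_{ii}=D_{ii}-\tfrac{1}{1+\eps}\wD_{ii}\ge A_{ii}\ge0$, the final step using that the original matrix $A$ is non-negative. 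This diagonal estimate is the only point requiring any care — one must combine the entrywise consequence of the upper spectral bound with the non-negativity of the self-loop weights $A_{ii}$. Everything else is the cancellation identity plus two one-line scalar estimates, so I do not anticipate a genuine obstacle.
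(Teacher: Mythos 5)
Your proof is correct and follows essentially the same route as the paper: both arguments rescale the sparsifier by $\frac{1}{1+\eps}$, observe that $D-\hA=\frac{1}{1+\eps}(\wD-\wA)$, and finish with the scalar inequality $\frac{1-\eps}{1+\eps}\geq 1-2\eps$. The only difference is that the paper delegates the middle step (including the entrywise non-negativity of $\hA$) to a cited lemma from Peng's thesis, whereas you inline it, correctly verifying the diagonal bound $(\hA)_{ii}\geq A_{ii}\geq 0$ by testing the upper Loewner bound on $e_i$.
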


We present now the proof of Lemma \ref{lemSSDA}.

\begin{proof}[Proof of Lemma~\ref{lemSSDA}] Notice that $D-A=D^\prm-A^\prm$, where $D^\prm$ is positive diagonal matrix and $A^\prm$ is symmetric non-negative matrix such that $A_{ii}^\prm=0$ for all $i$.
By Theorem \ref{thm_KL13} we obtain a sparsifier $\wD^\prm-\wA^\prm\approx_{\eps/2}D^\prm-A^\prm$. Then by Lemma \ref{lem_my_DappToD} we have  $D^\prm-\hA^\prm\approx_{\eps}D^\prm-A^\prm$, where $\hA^\prm=(D^\prm-\frac{1}{1+\eps}\wD^\prm)+\frac{1}{1+\eps}\wA^\prm$ is symmetric non-negative matrix. We define by $D_{A}=D-D^\prm$ a non-negative diagonal matrix. Set $\hA=D_{A}+\hA^\prm$ and observe that it is symmetric and non-negative matrix. Now the statement follows since $D-\hA=D^\prm-\hA^\prm$.
\end{proof}

\paragraph*{$\GL$-Matrices}
Building upon the work of Spielman and Peng~\cite[Proposition 5.6]{PS14} and Cheng et al.~\cite[Proposition 25]{CCLPT15}, we prove in Appendix~\ref{appsec:GLM} the following statement.

\begin{lem}[Closure]\label{lem_Closure}
Suppose $D-M$ is $\GL$-matrix. Then $D-D(\Di M)^{N}$ is $\GL$-matrix for every $N\in\N_{+}$. Moreover, if $D-\hM\approx_{\eps}D-M$ is a spectral sparsifier, then $D-D(\Di \hM)^{N}$ is $\GL$-matrix for every $N\in\N_{+}$.
\end{lem}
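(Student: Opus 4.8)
The plan is to conjugate everything by $\Dhm$ and reduce the two combinatorial claims to spectral statements about the single symmetric matrix $\mathcal{N}\triangleq\Dhm M\Dhm$. Since $\Di M=\Dhm\mathcal{N}\Dhp$, one has $(\Di M)^{N}=\Dhm\mathcal{N}^{N}\Dhp$, and therefore
\[
D-D(\Di M)^{N}=\Dhp\bigl(I-\mathcal{N}^{N}\bigr)\Dhp,\qquad D-M=\Dhp\bigl(I-\mathcal{N}\bigr)\Dhp.
\]
First I would record two facts that hold in both regimes: $\mathcal{N}$ is symmetric, and since $M$ is symmetric non-negative and $D$ is positive diagonal, $\mathcal{N}$ is entrywise non-negative; hence $\mathcal{N}^{N}\geq0$ entrywise and the off-diagonal entries of $D-D(\Di M)^{N}=D-\Dhp\mathcal{N}^{N}\Dhp$ are non-positive. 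It then remains to control the spectrum of $\mathcal{N}$ and the row/dominance structure separately.

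\emph{Laplacian case.} Here $(D-M)\mathbf{1}=0$ gives $\mathcal{N}(\Dhp\mathbf{1})=\Dhp\mathbf{1}$, so $1$ is an eigenvalue with eigenvector $\Dhp\mathbf{1}$; positive semi-definiteness of $D-M$ forces $\mathcal{N}\preceq I$, while the signless matrix $D+M=\Dhp(I+\mathcal{N})\Dhp$ is diagonally dominant with non-negative diagonal, hence PSD, giving $\mathcal{N}\succeq-I$. Thus every eigenvalue of $\mathcal{N}$, and hence of $\mathcal{N}^{N}$, lies in $[-1,1]$, so $I-\mathcal{N}^{N}\succeq0$. Finally $\mathcal{N}^{N}(\Dhp\mathbf{1})=\Dhp\mathbf{1}$ yields $(D-D(\Di M)^{N})\mathbf{1}=0$, so the matrix is again Laplacian.

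\emph{SDDM case.} The crux, and the step I expect to be the real obstacle, is upgrading the spectral containment to the open interval $(-1,1)$: an eigenvalue of $\mathcal{N}$ equal to $-1$ would, for even $N$, make $I-\mathcal{N}^{N}$ singular and destroy positive definiteness. I would argue through the substochastic walk $W=\Di M$, which is entrywise non-negative with $W\mathbf{1}\leq\mathbf{1}$ by diagonal dominance. By Perron--Frobenius its spectral radius $\rho(W)$ is itself an eigenvalue of $W$; but $D-M=D(I-W)$ is positive definite, so $\det(I-W)\neq0$, i.e. $1\notin\mathrm{spec}(W)$, which forces $\rho(W)<1$. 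As $\mathcal{N}$ is similar to $W$, all eigenvalues of $\mathcal{N}$ are real and lie in $(-1,1)$, whence $I-\mathcal{N}^{N}\succ0$ and $D-D(\Di M)^{N}$ is positive definite. Diagonal dominance is preserved because monotonicity of the non-negative map $x\mapsto Wx$ applied to $W\mathbf{1}\leq\mathbf{1}$ gives $W^{N}\mathbf{1}\leq\mathbf{1}$, i.e. $\sum_{j}(\Dhp\mathcal{N}^{N}\Dhp)_{ij}\leq D_{ii}$ for every $i$; combined with the non-positive off-diagonals this is exactly the $\SDDM$ dominance condition. Hence $D-D(\Di M)^{N}$ is $\SDDM$.

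For the ``moreover'' part I would simply invoke the first part once more: the sparsifier $D-\hM$ produced by the routines $\mSS$ / $\PS$ is again a $\GL$-matrix with the same positive diagonal $D$ and a symmetric non-negative $\hM$, so applying the argument above with $M$ replaced by $\hM$ shows that $D-D(\Di\hM)^{N}$ is a $\GL$-matrix for every $N\in\N_{+}$. The only point to verify here is that $\hM\geq0$ and the sign/dominance structure are indeed inherited from the sparsification routine, since the spectral relation $\approx_{\eps}$ on its own does not encode them.
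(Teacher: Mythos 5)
Your proof is correct, and for the $\SDDM$ half it is essentially the paper's argument: the paper's Lemma~\ref{lem_SDDM_Closure} likewise conjugates to $X=\Dhm M\Dhm$, invokes Perron--Frobenius to get $\rho(X)<1$, and uses $\Di M\mathbf{1}\leq\mathbf{1}$ for diagonal dominance; your variant runs the same argument through the similar matrix $W=\Di M$ (just make explicit that $W\geq0$ and $W\mathbf{1}\leq\mathbf{1}$ give $\rho(W)\leq\|W\|_{\infty}\leq1$ before you rule out $\rho(W)=1$). Where you genuinely diverge is the Laplacian half: the paper outsources it to Cheng et al.~\cite[Proposition 25]{CCLPT15} (Fact~\ref{fact_Lap_SDDM}), whereas you give a short self-contained proof via the two-sided spectral containment $-I\preceq\mathcal{N}\preceq I$ (the lower bound coming from positive semi-definiteness of the signless matrix $D+M$, the same device the paper uses in Lemma~\ref{lem_DPM}) together with preservation of the kernel vector $\Dhp\mathbf{1}$. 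That buys a citation-free proof at the cost of a few extra lines. Your handling of the ``moreover'' clause matches the paper's Lemma~\ref{lem_Lap_SDDM_Sparsifiers}: both reduce to the first part, and you correctly flag that $\approx_{\eps}$ alone does not give $\hM\geq0$ or diagonal dominance --- these must come from the normalized sparsification routine of Lemma~\ref{lem_SS_GL}, exactly as the paper assumes.
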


\paragraph*{Normalized Algorithms} We present now two algorithms that sparsify matrices of the form $D-D(\Di M)^N$ for $N\in\{1,2\}$ such that the resulting sparsifiers are in normalized form.

\begin{lem}[Normalized Spectral Sparsification]\label{lem_SS_GL}
There is an algorithm $\mKLC$
that takes as input $\GL$-matrix $B=D-M$ and parameter
$\eps\in(0,1)$, then it outputs in $\wO(m_{B}\log^{2}n)$
time a spectral sparsifier $D-\hM\approx_{\eps}D-M$ that is $\GL$-matrix and $\hM$
is symmetric non-negative matrix with $nnz(\hM)\leq O(\eps^{-2}n\log n)$.
\end{lem}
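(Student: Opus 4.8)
The plan is to reduce both cases of the definition of a $\GL$-matrix to the Laplacian sparsifier $\mSS$ of Lemma~\ref{lemSSDA}, which already returns a normalized sparsifier expressed in terms of the \emph{original} diagonal $D$. For the Laplacian case there is nothing to do beyond invoking $\mSS$: if $B=D-M$ is Laplacian, then $\mSS$ on input $(D,M,\eps)$ outputs $D-\hM\approx_{\eps}D-M$ that is Laplacian, with $\hM$ symmetric non-negative and $nnz(\hM)\le O(\eps^{-2}n\log n)$, in time $\wO(m_B\log^{2}n)$. This is exactly the asserted conclusion, so the entire content of the lemma lies in the $\SDDM$ case.

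For the $\SDDM$ case the idea is to peel off the ``diagonal excess'' that distinguishes an $\SDDM$ matrix from a Laplacian. I would write $M=M_{d}+A$, where $M_{d}=\diag(M_{11},\dots,M_{nn})\succeq 0$ collects the diagonal of $M$ and $A$ is its off-diagonal part (symmetric, non-negative, zero diagonal). Setting $D_{A}=\diag(\deg_{A}(1),\dots,\deg_{A}(n))$ with $\deg_{A}(i)=\sum_{j}A_{ij}$ makes $L_{A}=D_{A}-A$ a genuine Laplacian, and then
\[
B \;=\; D-M_{d}-A \;=\; (D_{A}-A)\;+\;(D-M_{d}-D_{A}) \;=\; L_{A}+D_{0},
\]
where $D_{0}=\diag\big(D_{ii}-M_{ii}-\deg_{A}(i)\big)$. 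The diagonal dominance of the $\SDDM$ matrix $B$ gives $D_{ii}-M_{ii}\ge\deg_{A}(i)$ for every $i$, hence $D_{0}\succeq 0$.

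Next I would sparsify only the Laplacian part: apply $\mSS$ to $L_{A}=D_{A}-A$ to obtain $D_{A}-\hA\approx_{\eps}D_{A}-A$ with $\hA$ symmetric non-negative and $nnz(\hA)\le O(\eps^{-2}n\log n)$ (vertices isolated in $A$ carry no off-diagonal mass and are passed through unchanged, so $D_{A}$ may be taken positive on its support). Since $D_{0}$ is PSD, Fact~\ref{fact_fiveProps}(a) yields $(D_{A}-\hA)+D_{0}\approx_{\eps}L_{A}+D_{0}=B$. Regrouping the diagonal back to the original $D$,
\[
(D_{A}-\hA)+D_{0} \;=\; (D_{A}+D_{0})-\hA \;=\; (D-M_{d})-\hA \;=\; D-\hM,\qquad \hM:=M_{d}+\hA,
\]
so the output is in normalized form with the original diagonal $D$, and $\hM$ is symmetric non-negative with $nnz(\hM)\le n+nnz(\hA)=O(\eps^{-2}n\log n)$. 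The running time is dominated by the single call to $\mSS$, namely $\wO(m_{L_{A}}\log^{2}n)=\wO(m_{B}\log^{2}n)$.

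The step needing the most care---and the one I view as the main obstacle---is verifying \emph{closure}, i.e. that $D-\hM$ is again an $\SDDM$ matrix and not merely a spectral approximation of one. Symmetry and non-positivity of the off-diagonal entries ($-\hA_{ij}\le 0$) are immediate, and diagonal dominance is inherited because $D_{A}-\hA$ is Laplacian (zero row sums) while $D_{0}\succeq0$ only increases the diagonal. Positive definiteness does not follow from the combinatorial structure alone, but it does follow spectrally: $D-\hM\succeq(1-\eps)B\succ0$, using $\eps\in(0,1)$ and $B\succ0$. Together these establish that $D-\hM$ is $\SDDM$, which closes the $\GL$-membership and hence completes the lemma.
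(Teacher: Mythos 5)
Your proposal is correct and follows essentially the same route as the paper: decompose the $\SDDM$ matrix as a Laplacian plus a non-negative diagonal, sparsify the Laplacian part with $\mSS$ (Lemma~\ref{lemSSDA}), and reattach the diagonal via Fact~\ref{fact_fiveProps}.a. The only difference is bookkeeping---the paper feeds $D_2-M$ into $\mSS$ with the diagonal of $M$ kept as self-loops (which Lemma~\ref{lemSSDA} explicitly permits) rather than stripping the diagonal of $M$ first---and your explicit verification of $\SDDM$ closure (sign pattern, diagonal dominance from the zero row sums of the Laplacian part, and positive definiteness from $D-\hM\succeq(1-\eps)B\succ0$) is if anything more careful than the paper's one-line appeal to kernel preservation.
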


\begin{proof}
By definition $B=D_{1}+L$ where $D_{1}$ is non-negative diagonal matrix and $L=D_{2}-M$ is Laplacian matrix. We obtain by Lemma \ref{lemSSDA} a sparsifier $D_{2}-\hM\approx_{\eps}D_{2}-M$ that is Laplacian matrix. Now we consider two cases. If $D_1=0$ then we are done. Otherwise $D_1$ is PSD matrix and by Fact \ref{fact_fiveProps}.a we have $D-\hM\approx_{\eps}D-M$. Since $D-M$ is $\SDDM$ matrix and the operator $\approx_{\eps}$ preserves the kernel space, it follows that $D-\hM$ is $\SDDM$ matrix.
\end{proof}

We proceed by stating an interesting structural result that implicitly appears in~\cite{PS14} (c.f. Section ``Efficient
Parallel Construction''). For completeness we prove it in Appendix \ref{appsubsec:SR}.

\newcommand{\lemGLStruct}
{
Suppose $B=D-M$ is $\GL$-matrix. Let
$\eta_{i}=M_{i,:}^{\rot}-M_{i,i}\cdot\mathbf{1}_{i}$ be a column vector,
$d_{i}=\langle M_{i,:},\mathbf{1}\rangle $ and $s_{i}=d_{i}-M_{ii}$
numbers, and $\mD_{N_i}=(s_{i}/d_{i})\cdot\diag(N_i)$
positive diagonal matrix for all $i$, where $N_i=\{ M_{ij}\,|\, M_{ij}\neq0\}$. Let $\mB_{ij}=(M_{ii}/d_{i}+M_{jj}/d_{j})\cdot M_{ij}$ be the $(i,j)$th entry of a matrix with same dimensions as matrix $M$ and $\mD_{B}=\mathrm{diag(\mB\cdot\mathbf{1})}$ be a diagonal matrix.

Then it holds that
$D-M\Di M = \mD_{1}+\mL_{B}+{\sum}_{i=1}^{n}\mL_{N_i}$ where $\mD_{1}=\diag([D-M\Di M]\mathbf{1})$ is non-negative diagonal matrix, $\mL_{B}=\mD_{B}-\mB$ is Laplacian matrix with
at most $m_B$ non-zero entries and every $\mL_{N_i}=(s_{i}/d_{i})\mD_{N_i}-\eta_{i}\eta_{i}^{\rot}/d_{i}$ is Laplacian matrix corresponding to a clique with positively weighted edges that is induced by the neighbour set $N_i$.
}
\mylemma{lemGLStruct}{\lemGLStruct}

Spielman and Peng~\cite{PS14} gave algorithm $\PS$ (c.f. Theorem \ref{thm_PS14}) for sparsifying matrices of the form $D-M\Di M$, where $D-M$ is $\SDDM$ matrix. We extend their result to $\GL$-matrices and our algorithm outputs a spectral sparsifier in normalized form.

\begin{lem}[Normalized 2-Hops Spectral Sparsification]\label{lemmPS}
There is an algorithm $\mPS$ that on input a $\GL$-matrix $B=D-M$ and parameter $\eps\in(0,1)$, outputs in $\wO(\eps^{-2}m_{B}\log^{3}n)$ time a spectral sparsifier $D-\hM \approx_{\eps}D-M\Di M$ that is $\GL$-matrix and $\hM$ is symmetric non-negative matrix with $nnz(\hM )\leq O(\eps^{-2}n\log n)$.
\end{lem}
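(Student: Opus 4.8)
The plan is to reduce to the Spielman--Peng $\SDDM$ two-hop sparsifier $\PS$ (Theorem~\ref{thm_PS14}) and then post-process its output into normalized form, using Lemma~\ref{lemGLStruct} to extend the argument from $\SDDM$ to all $\GL$-matrices. As a preliminary, note that the target is well-typed: by Lemma~\ref{lem_Closure} the matrix $C:=D-M\Di M$ is itself a $\GL$-matrix and hence PSD, and since $\approx_{\eps}$ preserves the kernel, \emph{any} sparsifier we build in the form $D-\hM$ with $\hM$ symmetric non-negative will automatically be a $\GL$-matrix of the same type (Laplacian iff $C$ is). So it suffices to produce such a normalized sparsifier with $O(\eps^{-2}n\log n)$ non-zeros.

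\emph{The $\SDDM$ case.} Run $\PS$ on $B=D-M$ with accuracy $\eps'=\eps/c$ to obtain $\wD-\wM\approx_{\eps'}C$ with $\wD\approx_{\eps'}D$, $\wM$ symmetric non-negative, $\wM_{ii}=0$, and $nnz(\wM)\le O(\eps^{-2}m_{B}\log n)$, in time $O(\eps^{-2}m_{B}\log^{2}n)$. To normalize I would use the algebraic device behind Lemma~\ref{lem_my_DappToD}: set $\hM=(D-\tfrac{1}{1+\eps'}\wD)+\tfrac{1}{1+\eps'}\wM$, so that $D-\hM=\tfrac{1}{1+\eps'}(\wD-\wM)$. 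The off-diagonals of $\hM$ are non-negative because $\wM\ge0$, and its diagonal $D-\tfrac{1}{1+\eps'}\wD$ is non-negative because $\wD\approx_{\eps'}D$ gives $\wD\preceq(1+\eps')D$; combined with $\wD-\wM\approx_{\eps'}C$ and $C\succeq0$ the identity yields $D-\hM\approx_{2\eps'}C$. Since $nnz(\hM)$ is still only $O(\eps^{-2}m_{B}\log n)$, I would then resparsify the $\GL$-matrix $D-\hM$ with $\mKLC$ (Lemma~\ref{lem_SS_GL}), which preserves the normalized form (same $D$), drops the count to $O(\eps^{-2}n\log n)$, and costs $\wO(\eps^{-2}m_{B}\log^{3}n)$. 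Composing the three errors with Fact~\ref{fact_fiveProps}.c and choosing $c$ a suitable constant gives $D-\hM'\approx_{\eps}C$ within the stated time.

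\emph{The Laplacian case and the main obstacle.} When $B$ is Laplacian, $\PS$ does not apply verbatim, and this is the crux of the proof. Here $(D-M)\mathbf{1}=0$ forces $C\mathbf{1}=0$, so in the decomposition of Lemma~\ref{lemGLStruct} the diagonal term $\mD_{1}$ vanishes and $C=\mL_{B}+\sum_{i=1}^{n}\mL_{N_i}$ is a genuine Laplacian: a sparse Laplacian $\mL_{B}$ (at most $m_{B}$ non-zeros) plus the dense sum of clique Laplacians $\sum_i\mL_{N_i}$ that encodes the two-hop walk. The sparse part I would sparsify directly with $\mSS$ (Lemma~\ref{lemSSDA}); the difficulty is the clique sum, whose explicit size can be $\Theta(\sum_i|N_i|^{2})$ and so must never be materialized. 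The plan is to route its dense part $\sum_i\eta_i\eta_i^{\rot}/d_i$ through exactly the effective-resistance clique sampling that underlies $\PS$ --- equivalently, to present $\sum_i\mL_{N_i}$ as the two-hop form of an auxiliary $\SDDM$ instance that $\PS$ is designed to handle --- producing a normalized sparsifier of it with $O(\eps^{-2}n\log n)$ non-zeros without forming the cliques. I would then recombine the sparsified sparse part and the sparsified clique part by Fact~\ref{fact_fiveProps}.b, re-normalize and re-sparsify exactly as in the $\SDDM$ case, and book-keep the accumulated errors with Fact~\ref{fact_fiveProps}.c. Verifying that this reduction is exact and that both the $O(\eps^{-2}n\log n)$ edge bound and the $\wO(\eps^{-2}m_{B}\log^{3}n)$ runtime of $\PS$'s sampling survive the passage from the strictly positive-definite $\SDDM$ setting to the singular Laplacian one is the step I expect to require the most care.
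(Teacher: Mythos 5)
Your proposal is correct and follows essentially the same route as the paper: the paper's proof likewise decomposes $D-M\Di M$ via \lemref{lemGLStruct} into a non-negative diagonal $\mD_{1}$ plus a sum of Laplacians (a sparse one and the implicit clique Laplacians), sparsifies that Laplacian part by the Spielman--Peng ``Section 6'' effective-resistance clique sampling without materializing the cliques, normalizes with Lemma~\ref{lem_my_DappToD}, and adds $\mD_{1}$ back with Fact~\ref{fact_fiveProps}.a. The only cosmetic difference is that the paper treats the $\SDDM$ and Laplacian cases uniformly through this decomposition (the diagonal $\mD_{1}$ absorbs the $\SDDM$ surplus), whereas you branch and invoke Theorem~\ref{thm_PS14} as a black box for the $\SDDM$ case --- which is fine, since it needs the same normalization device and a final $\mKLC$ resparsification (Lemma~\ref{lem_SS_GL}) to reach $O(\eps^{-2}n\log n)$ non-zeros within the stated $\wO(\eps^{-2}m_{B}\log^{3}n)$ budget.
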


\begin{proof}
By Lemma \ref{lemSSDA} we have $D-M\Di M=\mD_{1}+\mL$, where $\mD_{1}$ is non-negative diagonal matrix and $\mL$ is sum of Laplacian matrices. Using similar arguments as in ``Section 6 Efficient Parallel Construction''~\cite{PS14} we find a sparsifier $\wD-\wM\approx_{\eps/2}\mL$. Moreover, we can compute the positive diagonal matrix $D^\prm=\diag(\mathbf{L})$ in $O(m_B)$ time (c.f. Appendix~\ref{appsubsec:SR}), and then by Lemma \ref{lem_my_DappToD} we obtain a sparsifier $D^{\prm}-\hM\approx_{\eps}\mL$. Since $\mD_{1}$ is PSD matrix the statement follows by Fact \ref{fact_fiveProps}.a.
\end{proof}

\section{Core Iterative Algorithm}\label{sec:CIA}

Our goal now is to prove Theorem~\ref{thmMyGLN}. We argue in a similar manner as in~\cite{arxivCCLPT15}, but in contrast our analysis shows that the initial sparsification step tolerates higher approximation error. This observation yields an improved algorithm whose runtime is faster by a $O(\log^{2}N)$-factor.

Moreover, we prove that for any $\GL$-matrix $D-M$ such that $M$ is $\SPSD$ matrix, one can construct a spectral sparsifier $D-\hM_2\approx D-M\Di M$ by first computing $D-\hM\approx D-M$ and then $D-\hM_2\approx D-\hM\Di\hM$. This demonstrates that when $M$ is $\SPSD$ matrix, the runtime is dominated by the initial sparsification.

The rest of this section is organized as follows. In Subsection~\ref{subsec:Init_PrwSS} we describe the initial phase of algorithm $\PwrSS$. Then in Subsection~\ref{subsec:IterConst}, we present the iterative construction of the desired spectral sparsifier $D-\hM_{N}\approx_{\eps}D-D(\Di M)^{N}$.

\subsection{Initialization}\label{subsec:Init_PrwSS}

We begin by extending~\cite[Lemma 4.3 and 4.4]{arxivCCLPT15}. For completeness, we provide a prove in Appendix~\ref{appsec:ASC} where in addition we generalize~\cite[Fact 4.2]{arxivCCLPT15}.

\newcommand{\lemSchurRec}
{
Suppose $B=D-M$ is $\GL$-matrix
and $D-\hM \approx_{\eps}D-M$ is a spectral sparsifier.
If $M$ is $\SPSD$ matrix then it holds that $D-\hM \Di \hM \approx_{\eps}D-M\Di M$.
}
\mylemma{lemSchurRec}{\lemSchurRec}

Based on \lemref{lemSchurRec}, we give a faster sparsification algorithm for $\GL$-matrices $D-M\Di M$ such that $M$ is $\SPSD$ matrix.

\begin{lem}\label{lem_my_GL}
There is an algorithm $\fSS$
that takes as input $\GL$-matrix $B=D-M$ such that
$M$ is $\SPSD$ matrix, and parameter $\eps\in(0,1)$.
Then it outputs in $\wO(m_{B}\log^{2}n+\eps^{-4}n\log^{4}n)$
time a spectral sparsifier $D-\hM_{2}\approx_{\eps}D-M\Di M$
that is $\GL$-matrix with $nnz(\hM_{2})\leq O(\eps^{-2}n\log n)$.
\end{lem}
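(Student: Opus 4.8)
The plan is to exploit \lemref{lemSchurRec} so that the costly $2$-hop sparsification is applied only to an already sparse matrix, which is precisely what the hypothesis that $M$ is $\SPSD$ buys us. Concretely, I would let algorithm $\fSS$ proceed in two stages. First, I would invoke algorithm $\mKLC$ from \lemref{lem_SS_GL} on the $\GL$-matrix $B=D-M$ with error parameter $\eps_1=\eps/3$ to obtain, in $\wO(m_B\log^2 n)$ time, a spectral sparsifier $D-\hM\approx_{\eps_1}D-M$ that is a $\GL$-matrix with $\hM$ symmetric non-negative and $nnz(\hM)\le O(\eps_1^{-2}n\log n)$.

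Since $M$ is $\SPSD$ by hypothesis, \lemref{lemSchurRec} applies to the sparsifier $D-\hM\approx_{\eps_1}D-M$ and yields $D-\hM\Di\hM\approx_{\eps_1}D-M\Di M$. This is the \emph{key step}: it lets me replace the dense target $D-M\Di M$ by the matrix $D-\hM\Di\hM$ formed from the sparse $\hM$, at the price of only an $\eps_1$ multiplicative error. Second, I would invoke algorithm $\mPS$ from \lemref{lemmPS} on the $\GL$-matrix $D-\hM$ with error parameter $\eps_2=\eps/3$ to obtain a spectral sparsifier $D-\hM_2\approx_{\eps_2}D-\hM\Di\hM$ that is a $\GL$-matrix with $nnz(\hM_2)\le O(\eps_2^{-2}n\log n)=O(\eps^{-2}n\log n)$.

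It then remains to compose the two guarantees. By Fact~\ref{fact_fiveProps}.c, chaining $D-\hM_2\approx_{\eps_2}D-\hM\Di\hM$ with $D-\hM\Di\hM\approx_{\eps_1}D-M\Di M$ gives $D-\hM_2\approx_{\eps_1\oplus\eps_2}D-M\Di M$. With $\eps_1=\eps_2=\eps/3$ one checks that $(1+\eps/3)^2\le 1+\eps$ and $(1-\eps/3)^2\ge 1-\eps$ for $\eps\in(0,1)$, so that $D-\hM_2\approx_{\eps}D-M\Di M$, as required. For the running time, the input to $\mPS$ in the second stage is the sparse matrix $D-\hM$, which has $O(\eps_1^{-2}n\log n)$ non-zero entries; hence that call costs $\wO(\eps_2^{-2}\cdot\eps_1^{-2}n\log n\cdot\log^3 n)=\wO(\eps^{-4}n\log^4 n)$. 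Adding the $\wO(m_B\log^2 n)$ cost of the first stage yields the claimed total $\wO(m_B\log^2 n+\eps^{-4}n\log^4 n)$.

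The hard part here is conceptual rather than computational: it is the observation, formalized in \lemref{lemSchurRec}, that when $M$ is $\SPSD$ one may first sparsify $D-M$ and only then square, without incurring any extra error. This is exactly what decouples the $m_B$-dependent initial sparsification from the $\eps^{-2}$ blow-up of the $2$-hop step, so that the latter acts on $O(\eps^{-2}n\log n)$ rather than $m_B$ non-zeros. Everything else—checking that each intermediate matrix remains a $\GL$-matrix (guaranteed by \lemref{lem_SS_GL} and \lemref{lemmPS}) and that the two errors compose to at most $\eps$—is routine bookkeeping.
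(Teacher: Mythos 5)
Your proposal is correct and follows essentially the same route as the paper's proof: first sparsify $D-M$ via Lemma~\ref{lem_SS_GL}, use \lemref{lemSchurRec} (valid since $M$ is $\SPSD$) to transfer the guarantee to $D-\hM\Di\hM$, then apply $\mPS$ to the sparse $D-\hM$ and compose the errors with Fact~\ref{fact_fiveProps}.c. The only differences are cosmetic (you use $\eps/3$ where the paper uses $\eps/4$, and you verify the error composition explicitly).
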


\begin{proof}[Proof of Lemma~\ref{lem_my_GL}] We apply Lemma \ref{lem_SS_GL} to obtain
a sparsifier $D-\hM \approx_{\eps/4}D-M$ in $\wO(m_{B}\log^{2}n)$
time with $nnz(\hM )\leq O(\eps^{-2}n\log n)$
such that $D-\hM $ is $\GL$-matrix. Then by \lemref{lemSchurRec} we know that $D-\hM \Di \hM \approx_{\eps/4}D-M\Di M$.
Now, by Lemma~\ref{lem_Closure} $D-\hM \Di \hM $
is $\GL$-matrix. Then we apply Lemma~\ref{lemmPS}
to obtain in $\wO(\eps^{-4}n\log^{4}n)$
time a sparsifier $D-\hM_{2}\approx_{\eps/4}D-\hM \Di \hM $
with $nnz(\hM )\leq O(\eps^{-2}n\log n)$
such that $D-\hM_{2}$ is $\GL$-matrix. The claims follows by Fact~\ref{fact_fiveProps}.c.
\end{proof}

\subsection{Iterative Construction}\label{subsec:IterConst}

Our analysis of the incurred approximation error after $O(\log N)$ consecutive square sparsification operations builds upon~\cite[Lemma 4.1]{arxivCCLPT15}. In contrast, we prove that for the initial and the final sparsifiers it suffices to have only an $\eps$ approximation, while all intermediate spectral sparsifiers require finer $\eps^{\prm}=\Omega(\eps/\log N)$ approximation. Due to this higher initial error tolerance, we improve the runtime of their algorithm by a $O(\log^{2}N)$-factor.

\begin{lem}[Accumulative Error]\label{lem_IndSS}
Let $D-M$ and $D-\hM_2$ be $\GL$-matrices such that $D-\hM_2\approx_{\eps}D-M\Di M$ and $nnz(\hM_2)\leq O(\eps^{-2}n\log n)$. There is an algorithm $\IndSS$ that on input $\GL$-matrix $D-\hM_2$, integer $N=2^k$ for $k\in\N_+$ and parameter $0<\eps^{\prm}\leq\eps$, outputs in time $\wO({\eps^{\prm}}^{-4}n\log^{4}n\cdot\log N)$ a symmetric non-negative matrix $\hM_{N}$ with
$nnz(\hM_N)\leq O(\eps^{-2}n\log n)$
such that $D-\hM_N\approx_{(\oplus^{(\log N - 1)}\eps^{\prm})\oplus^{2}\eps}D-D(\Di M)^N$ is $\GL$-matrix.
\end{lem}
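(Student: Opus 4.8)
The plan is to build $D-\hM_N$ by repeated squaring, using the identity $Y_i\Di Y_i = Y_{i+1}$ for $Y_i\triangleq D(\Di M)^{2^i}$ (indeed $\Di Y_i=(\Di M)^{2^i}$, so $Y_i\Di Y_i=D(\Di M)^{2^{i+1}}$). Two structural facts drive the construction. First, $Y_i=\Dhp(\Dhm M\Dhm)^{2^i}\Dhp$ is a congruence of an \emph{even} power of a symmetric matrix, hence $Y_i$ is $\SPSD$ for every $i\ge1$; and $D-Y_i$ is a $\GL$-matrix by Lemma~\ref{lem_Closure}. Second, squaring one level is precisely a $2$-hop sparsification $D-Y_{i+1}=D-Y_i\Di Y_i$, which is exactly the task solved by $\mPS$ of Lemma~\ref{lemmPS}. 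Thus $\IndSS$ maintains a $\GL$-sparsifier $D-\hM_{2^i}$ of $D-Y_i$: it starts from the given $D-\hM_2\approx_\eps D-Y_1$, applies $\mPS(\cdot,\eps^{\prm})$ for $i=1,\dots,k-1$ to obtain $D-\hM_{2^{i+1}}\approx_{\eps^{\prm}}D-\hM_{2^i}\Di\hM_{2^i}$, and finally re-sparsifies $D-\hM_{2^k}$ once at accuracy $\eps$ via $\mKLC$ (Lemma~\ref{lem_SS_GL}) to produce $D-\hM_N$.

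I would establish by induction that $D-\hM_{2^i}\approx_{\delta_i}D-Y_i$ with $\delta_1=\eps$. For the inductive step I combine: (i) since $Y_i$ is $\SPSD$ and $D-\hM_{2^i}\approx_{\delta_i}D-Y_i$, \lemref{lemSchurRec} gives $D-\hM_{2^i}\Di\hM_{2^i}\approx_{\delta_i}D-Y_i\Di Y_i=D-Y_{i+1}$; and (ii) Lemma~\ref{lemmPS} gives $D-\hM_{2^{i+1}}\approx_{\eps^{\prm}}D-\hM_{2^i}\Di\hM_{2^i}$. Composing these through Fact~\ref{fact_fiveProps}.c yields $\delta_{i+1}=\delta_i\oplus\eps^{\prm}$. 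Unrolling the $k-1=\log N-1$ squarings gives $D-\hM_{2^k}\approx_{(\oplus^{\log N-1}\eps^{\prm})\oplus\eps}D-D(\Di M)^N$, and the closing $\mKLC$ step adds one more $\oplus\eps$ (again Fact~\ref{fact_fiveProps}.c), producing exactly the claimed $D-\hM_N\approx_{(\oplus^{\log N-1}\eps^{\prm})\oplus^2\eps}D-D(\Di M)^N$. The decisive move here is to invoke \lemref{lemSchurRec} with the \emph{exact} $\SPSD$ iterate $Y_i$ in the role of ``$M$'' rather than with the (generally non-$\SPSD$) sparsifier $\hM_{2^i}$; this is the only place the positivity $Y_i\succeq0$ is used.

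For the bookkeeping, each $\mPS$ call outputs a $\GL$-matrix with $O((\eps^{\prm})^{-2}n\log n)$ non-zeros, so a call on such an input runs in $\wO((\eps^{\prm})^{-2}\cdot(\eps^{\prm})^{-2}n\log n\cdot\log^3 n)=\wO((\eps^{\prm})^{-4}n\log^4 n)$ time; over the $O(\log N)$ iterations this is the stated $\wO((\eps^{\prm})^{-4}n\log^4 n\cdot\log N)$ (the first call on the $O(\eps^{-2}n\log n)$-sparse input and the final $\mKLC$ call are dominated). The closing $\mKLC$ at accuracy $\eps$ reduces the non-zero count from $O((\eps^{\prm})^{-2}n\log n)$ to the required $O(\eps^{-2}n\log n)$, while $\GL$-ness is preserved at every step by Lemma~\ref{lem_Closure}, Lemma~\ref{lemmPS}, and Lemma~\ref{lem_SS_GL}.

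The main obstacle is the error accounting across one squaring, namely showing that $X\mapsto X\Di X$ inflates the approximation factor by only a single additive $\oplus\eps^{\prm}$. Since squaring is not operator monotone, one cannot pass from $D-\hM_{2^i}\approx D-Y_i$ to a Loewner bound on $\hM_{2^i}\Di\hM_{2^i}$ by squaring the inequalities directly; the structural fact that rescues this is the $\SPSD$-ness of the exact iterate $Y_i$, routed through \lemref{lemSchurRec}. Once this is in place, the remaining observation — that the initial and final sparsifiers each enter the $\oplus$-composition only once and may therefore use the coarse $\eps$, while the $\log N-1$ intermediate sparsifiers compound and hence must use the finer $\eps^{\prm}=\Omega(\eps/\log N)$ — follows immediately from the composition rule Fact~\ref{fact_fiveProps}.c.
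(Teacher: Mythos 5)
Your proposal is correct and follows essentially the same route as the paper: it iterates the 2-hop sparsification of Lemma~\ref{lemmPS}, controls the error of each squaring by applying \lemref{lemSchurRec} to the exact $\SPSD$ iterate $D(\Di M)^{2^i}$ (this is precisely the content of the paper's Lemma~\ref{lem_SqrSSREC}), composes via Fact~\ref{fact_fiveProps}.c, and re-sparsifies once at the end at accuracy $\eps$, yielding the same error budget $(\oplus^{(\log N-1)}\eps^{\prm})\oplus^{2}\eps$ and the same runtime. The only (cosmetic) difference is that you close with $\mKLC$ rather than Theorem~\ref{thm_KL13}, which if anything is the cleaner choice since it keeps the sparsifier in normalized form with the original diagonal $D$.
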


Our goal now is to prove Lemma~\ref{lem_IndSS}. We establish next a useful algebraic property that all matrices of the form $D(\Di M)^{2^{k}}$ have in common.

\begin{lem}\label{lem_DM_2k}
If $M$ is symmetric matrix, then $D(\Di M)^{2^{k}}$
is $\SPSD$ matrix for every $k\in\N_{+}$.
\end{lem}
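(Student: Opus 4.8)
The plan is to reduce the claim to the elementary fact that an even power of a symmetric matrix is symmetric positive semi-definite, via a symmetric normalization. First I would introduce the symmetrically normalized matrix $S = \Dhm M \Dhm$, which is symmetric because $M$ is symmetric and $\Dhm$ is diagonal. The key observation is that $\Di M$ is \emph{similar} to $S$ through the conjugation $\Di M = \Dhm S \Dhp$; indeed a direct computation gives $\Dhp (\Di M) \Dhm = \Dhm M \Dhm = S$.

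Next I would propagate this similarity through the power: since $\Dhp\Dhm = I$, the intermediate factors telescope and $(\Di M)^{2^k} = \Dhm S^{2^k} \Dhp$. Multiplying on the left by $D$ and using $D\,\Dhm = \Dhp$ yields the clean identity
\[
D(\Di M)^{2^k} = \Dhp\, S^{2^k}\, \Dhp.
\]

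The final step exploits that the exponent is a power of two, hence even. Writing $R = S^{2^{k-1}}$, which is symmetric since $S$ is, I would observe that $S^{2^k} = R^2 = R^{\rot} R$ is $\SPSD$ (a Gram matrix). Conjugating by the symmetric matrix $\Dhp$ preserves both symmetry and positive semi-definiteness, so
\[
D(\Di M)^{2^k} = \Dhp R^{\rot} R \,\Dhp = (R\,\Dhp)^{\rot} (R\,\Dhp)
\]
is $\SPSD$, establishing the claim.

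The only point requiring care — and the reason the hypothesis of an even exponent $2^k$ is essential rather than incidental — is the factorization $S^{2^k} = R^{\rot} R$: for an odd power this is unavailable, and $D(\Di M)^{j}$ with $j$ odd need not be $\SPSD$ when $S$ is indefinite. No difficulty arises from $D$ being merely positive definite rather than the identity, since the similarity transform absorbs $D$ symmetrically and the conjugation by $\Dhp$ is the entire content of the reduction.
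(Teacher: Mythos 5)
Your argument is correct and is essentially the paper's own proof: the paper also sets $Y=\Dhm M\Dhm$, writes $D(\Di M)^{2^k}=D^{1/2}Y^{2^k}D^{1/2}=X^{\rot}X$ with $X=Y^{2^{k-1}}D^{1/2}$, which is exactly your $R\,\Dhp$. You merely make the intermediate similarity and telescoping steps explicit.
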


\begin{proof}
Let $Y\triangleq\Dhm M \Dhm$. Notice that $D(\Di M)^{2^{k}}=D^{1/2}Y^{2^{k}}D^{1/2}=X^{\rot}X$,
where $X=Y^{2^{k-1}}D^{1/2}$. The statement follows since $X^{\rot}X$
is $\SPSD$ matrix.
\end{proof}

We present now the main iterative procedure used in algorithm $\IndSS$.

\begin{lem}[Iterative Procedure]\label{lem_SqrSSREC}
Let $D-M$ and $D-\hM_{2^{k}}$ be $\GL$-matrices such that $D-\hM_{2^{k}}\approx_{\eps}D-D(\Di M)^{2^{k}}$, for $k\in\N_+$. There is an algorithm $\SqrSS$
that takes as input the $\GL$-matrix $D-\hM_{2^{k}}$
and parameter $\eps^{\prm}\in(0,1)$, then it outputs in $\wO({\eps^{\prm}}^{-2}nnz(\hM_{2^{k}})\log^{3}n)$
time a symmetric non-negative matrix $\hM_{2^{k+1}}$ with
$nnz(\hM_{2^{k+1}})\leq O({\eps^{\prm}}^{-2}n\log n)$
such that $D-\hM_{2^{k+1}}\approx_{\eps\oplus\eps^{\prm}}D-D(\Di M)^{2^{k+1}}$ is $\GL$-matrix.
\end{lem}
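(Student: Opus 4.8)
The plan is to exploit the identity $D(\Di M)^{2^{k+1}} = M_k \Di M_k$, where $M_k \triangleq D(\Di M)^{2^k}$, which recasts the doubling of the exponent as a single two-hop (squaring) operation. This is the observation that lets algorithm $\SqrSS$ reuse the normalized two-hop sparsifier $\mPS$ of Lemma~\ref{lemmPS} rather than handle $2^{k+1}$ factors directly. First I would record the algebraic setup: writing $M_k = D(\Di M)^{2^k}$, a direct computation gives $M_k \Di M_k = D(\Di M)^{2^k}(\Di M)^{2^k} = D(\Di M)^{2^{k+1}}$, so the target matrix is exactly $D - M_k\Di M_k$. By Lemma~\ref{lem_DM_2k} the matrix $M_k$ is $\SPSD$, and by Lemma~\ref{lem_Closure} the matrix $D - M_k$ is a $\GL$-matrix; combined with the hypothesis $D-\hM_{2^k}\approx_\eps D - M_k$, this places us precisely in the situation required by Lemma~\ref{lemSchurRec}.

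Next I would invoke Lemma~\ref{lemSchurRec} with $M \mapsto M_k$ and $\hM \mapsto \hM_{2^k}$ to obtain $D-\hM_{2^k}\Di\hM_{2^k} \approx_\eps D - M_k\Di M_k = D - D(\Di M)^{2^{k+1}}$. Since $\hM_{2^k}\Di\hM_{2^k} = D(\Di\hM_{2^k})^2$, Lemma~\ref{lem_Closure} guarantees that $D - \hM_{2^k}\Di\hM_{2^k}$ is itself a $\GL$-matrix, so I may legitimately feed the input $\GL$-matrix $D-\hM_{2^k}$ into $\mPS$ (Lemma~\ref{lemmPS}) with accuracy parameter $\eps^{\prm}$. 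This produces a $\GL$-matrix $D-\hM_{2^{k+1}} \approx_{\eps^{\prm}} D - \hM_{2^k}\Di\hM_{2^k}$ with $\hM_{2^{k+1}}$ symmetric non-negative, $nnz(\hM_{2^{k+1}}) \leq O({\eps^{\prm}}^{-2}n\log n)$, in time $\wO({\eps^{\prm}}^{-2}\,nnz(\hM_{2^k})\log^3 n)$. The point I would stress here is that $\mPS$ never materializes the possibly dense matrix $\hM_{2^k}\Di\hM_{2^k}$, but instead operates on the clique decomposition supplied by Lemma~\ref{lemGLStruct}, which is exactly what keeps the runtime linear in $nnz(\hM_{2^k})$ rather than quadratic in $n$.

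Finally I would chain the two approximations through the transitivity Fact~\ref{fact_fiveProps}.c, giving $D-\hM_{2^{k+1}} \approx_{\eps^{\prm}\oplus\eps} D - D(\Di M)^{2^{k+1}}$; since the relation defining $\oplus$ is symmetric in its two arguments, this is the claimed $\approx_{\eps\oplus\eps^{\prm}}$, and the $\GL$-matrix property and the non-zero count are inherited from the output of $\mPS$.

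The genuinely substantive content lies not in this assembly but in the two black boxes it rests on: the preservation of the recursion $D-\hM\Di\hM \approx_\eps D - M\Di M$ under sparsification when $M$ is $\SPSD$ (Lemma~\ref{lemSchurRec}), and the efficiency of the two-hop sparsifier (Lemma~\ref{lemmPS}). Were these not already in hand, the main obstacle would be showing that squaring a sparsifier does not inflate the error, i.e.\ Lemma~\ref{lemSchurRec}: naively $\hM_{2^k}\Di\hM_{2^k}$ only controls the quadratic form on the column space, and one must use the $\SPSD$ structure through the factorization $Y = \Dhm M \Dhm$ and $X^{\rot}X$ (as in the proof of Lemma~\ref{lem_DM_2k}) to transfer the spectral bound faithfully across the squaring step.
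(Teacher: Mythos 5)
Your proof is correct and follows essentially the same route as the paper: identify $D(\Di M)^{2^{k+1}}$ as the two-hop square of $M_k=D(\Di M)^{2^k}$, use Lemma~\ref{lem_DM_2k} and Lemma~\ref{lem_Closure} to justify applying Lemma~\ref{lemSchurRec}, sparsify with $\mPS$ (Lemma~\ref{lemmPS}) at accuracy $\eps^{\prm}$, and chain via Fact~\ref{fact_fiveProps}.c. The only difference is that you spell out the intermediate identities and applicability checks in more detail than the paper does.
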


\begin{proof}
By Lemma \ref{lem_DM_2k}, $D(\Di M)^{2^{k}}$ is $\SPSD$
matrix for any $k\in\N_{+}$. By Lemma~\ref{lem_Closure} both $D-D(\Di M)^{2^{k}}$ and $D-\hM_{2^{k}}\Di \hM_{2^{k}}$
are $\GL$-matrices. Hence, by \lemref{lemSchurRec}
we have that $D-\hM_{2^{k}}\Di \hM_{2^{k}}\approx_{\eps}D-D(\Di M)^{2^{k+1}}$. Now by Lemma~\ref{lemmPS} we have $D-\hM_{2^{k+1}}\approx_{\eps^{\prm}}D-\hM_{2^{k}}\Di \hM_{2^{k}}$ and hence the statement follows by Fact~\ref{fact_fiveProps}.c.
\end{proof}

Based on the preceding results we are ready to prove Lemma~\ref{lem_IndSS}.

\begin{proof}[Proof of Lemma~\ref{lem_IndSS}]
By Theorem~\ref{thm_PS14} in time $\wO((\eps^{\prm}\cdot\eps)^{-2}n\log^{4}n)$ we can compute a spectral sparsifier $D-\hM_4\approx_{\eps\oplus\eps^{\prm}}D-D(\Di M)^4$ with $nnz(\hM_4)\leq O({\eps^{\prm}}^{-2}n\log n)$. Then we apply $(\log N - 1)$ times Lemma~\ref{lem_SqrSSREC} to obtain in $\wO({\eps^{\prm}}^{-4}n\log^{4}n\cdot\log N)$ time a spectral sparsifier $D-\hM_N\approx_{(\oplus^{(\log N - 1)}\eps^{\prm})\oplus\eps}D-D(\Di M)^N$ with $nnz(\hM_N)\leq O({\eps^{\prm}}^{-2}n\log n)$. The statement follows by applying Theorem~\ref{thm_KL13} with $\eps$ to compute a refined spectral sparsifier of $D-\hM_N$.
\end{proof}

\paragraph*{Proof of Theorem~\ref{thmMyGLN}}
In the initial phase we compute a sparsifier $D-\hM_{2}\approx_{\eps/4}D-M\Di M$ with $nnz(\hM_{2})\leq O(\eps^{-2}n\log n)$ using either Lemma~\ref{lem_my_GL} (when $M$ is $\SPSD$ matrix) or Lemma~\ref{lemmPS}. The statement follows by applying Lemma~\ref{lem_IndSS} with $\eps^{\prm}=\Omega(\eps/\log N)$ to the sparsifier $D-\hM_{2}$.

\section{Spectral Sparsification of Binomial $\GL$-Matrix Polynomials}\label{sec:SSBGLPM}

In this section, we prove Theorem~\ref{thm_LazySS}. We analyze first the properties of matrices of the form $W_{p}=(1-p)I+p\Di M$ for $p\in(0,1)$. It is convenient to associate with them matrix-polynomials $f_{p}(x)=(1-p)+px$ such that $f_{p}(\Di M)=W_{p}$. Since the coefficients of the matrix-polynomial $[f_{p}(x)]^{N}$ follow Binomial distribution $B(N,p)$, it follows that
$W_{p}^{N} = \sum_{i=0}^{N}B_{N,i}(p)\cdot(\Di M)^{i}$ for every $p\in(0,1)$, where $B_{N,i}(p)={N \choose i}p^{i}(1-p)^{N-i}$.

When $D-M$ is a Laplacian matrix, the matrix $W_{p}^N$ corresponds to the transition matrix of a $p$-lazy random walk process of length $N$ (c.f.~\cite{SJ89}). We associate to such a Markov chain a matrix-polynomial $D-DW_{p}^{N}=D - D\sum_{i=0}^{N} B_{N,i}(p) \cdot(\Di M)^{i}$. We present now some useful algebraic properties of matrices of the form $DW_{p}^{2^{k}}$ and $D-DW_{p}^{N}$.

\newcommand{\lemDDWWp}
{
Suppose $D-M$ is $\GL$-matrix. Then
$D-DW_{p}^{N}$ is $\GL$-matrix for every $N\in\N_{+}$. Also $DW_{p}$
is $\SPSD$ matrix $\forall p\in(0,1/2]$ and $DW_{p}^{2^{k}}$
is $\SPSD$ matrix $\forall p\in(0,1)$ and $\forall k\in\N_{+}$.
}
\mylemma{lemDDWWp}{\lemDDWWp}

\begin{proof}
By definition of $W_{p}$, we have $D-DW_{p}=p(D-M)$ is $\GL$-matrix. Suppose $D-M$ is Laplacian matrix, then $D-DW_{p}$ is Laplacian matrix and by Lemma~\ref{lem_Closure}, $D-DW_{p}^{N}$ is Laplacian matrix for every $N\in\N_{+}$. Suppose now that $D-M$ is $\SDDM$ matrix, then $D-DW_{p}$ is $\SDDM$ matrix and by Lemma \ref{lem_SDDM_Closure} $D-DW_{p}^{N}$ is $\SDDM$ matrix for every $N\in\N_{+}$.

By definition $DW_{p}=(1-p)D+pM$
and since $D-M$ is diagonally dominant, it holds that
$DW_{p}$ is $\SPSD$ matrix for every $p\in(0,1/2]$.
Moreover, since $DW_{p}$ is symmetric matrix by Lemma \ref{lem_DM_2k} it holds that $D(\Di\cdot DW_{p})^{2^{k}}=DW_{p}^{2^{k}}$ is $\SPSD$ matrix for every $k\in\N_{+}$.
\end{proof}

\begin{proof}[Proof of Theorem~\ref{thm_LazySS}] The statement follows by \lemref{lemDDWWp} and Theorem~\ref{thmMyGLN}.
\end{proof}

\section{Spectral Sparsification of $\GL$-Matrix Polynomials Induced by $\MDBD$}\label{sec:SSGLBM}

Here we prove Theorem~\ref{thm_SS_MGL}. Our approach relies on the following key algorithmic idea.

\begin{lem}[Preprocessing of 2-Hop Spectral Sparsification]\label{lem_DMp2}
Let $D-M$ be a $\GL$-matrix, $D-\hM_{1}\approx_{\eps}D-M$ and $D-\hM_{2}\approx_{\eps}D-M\Di M$ are spectral sparsifiers. Then for every $p\in(0,1)$ the sparse matrix $\hM_{p,2}=(1-p)^{2}D+2(1-p)p\hM_{1}+p^{2}\hM_{2}$ yields a spectral sparsifier $D-\hM_{p,2}\approx_{\eps}D-DW_{p}^{2}$ that is $\GL$-matrix.
\end{lem}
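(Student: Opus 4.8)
The plan is to rewrite both $D-DW_{p}^{2}$ and $D-\hM_{p,2}$ as the \emph{same} non-negative combination of the two supplied sparsifiers, and then to invoke Fact~\ref{fact_fiveProps}. First I would square $W_{p}=(1-p)I+p\Di M$ and multiply by $D$ to obtain $DW_{p}^{2}=(1-p)^{2}D+2(1-p)p\,M+p^{2}M\Di M$. Since the Binomial weights satisfy $(1-p)^{2}+2(1-p)p+p^{2}=1$, subtracting from $D$ telescopes, the $i=0$ term drops out, and one is left with
\[
D-DW_{p}^{2}=2(1-p)p\,(D-M)+p^{2}(D-M\Di M).
\]
The identical regrouping applied to the definition $\hM_{p,2}=(1-p)^{2}D+2(1-p)p\,\hM_{1}+p^{2}\hM_{2}$ gives
\[
D-\hM_{p,2}=2(1-p)p\,(D-\hM_{1})+p^{2}(D-\hM_{2}).
\]

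Next I would observe that for $p\in(0,1)$ both coefficients $2(1-p)p$ and $p^{2}$ are strictly positive, and that scaling a spectral relation $X\approx_{\eps}Y$ between $\SPSD$ matrices by a non-negative constant preserves it. Applying this to the two hypotheses yields $2(1-p)p\,(D-\hM_{1})\approx_{\eps}2(1-p)p\,(D-M)$ and $p^{2}(D-\hM_{2})\approx_{\eps}p^{2}(D-M\Di M)$. All four matrices involved are $\SPSD$ (being $\GL$-matrices, up to the non-negative scaling), so Fact~\ref{fact_fiveProps}.b lets me add the two relations and conclude $D-\hM_{p,2}\approx_{\eps}D-DW_{p}^{2}$, which is the spectral claim.

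Finally, for the $\GL$-matrix claim I would argue by closure rather than by $\approx_{\eps}$. By Lemma~\ref{lem_Closure}, $D-M\Di M=D-D(\Di M)^{2}$ is a $\GL$-matrix of the same type (Laplacian or $\SDDM$) as $D-M$, and \lemref{lemDDWWp} already gives that $D-DW_{p}^{2}$ is $\GL$; in the intended setting the inputs are normalized sparsifiers, so $\hM_{1},\hM_{2}$ are symmetric non-negative and $D-\hM_{1}$, $D-\hM_{2}$ are $\GL$-matrices of that common type. Since both Laplacian and $\SDDM$ matrices are closed under addition and under multiplication by a non-negative scalar, and the coefficients are strictly positive, the displayed combination exhibits $D-\hM_{p,2}$ as a $\GL$-matrix of the same type. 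I expect this last closure step to be the only delicate point: one must verify that the off-diagonal sign pattern (and, in the $\SDDM$ case, diagonal dominance) survives the combination, which is precisely where the non-negativity of $\hM_{1},\hM_{2}$ is used — an appeal to $\approx_{\eps}$ alone preserves positive semidefiniteness and the kernel but not the $\SDDM$ sign structure.
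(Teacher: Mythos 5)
Your proof is correct and is essentially identical to the paper's: the paper likewise rewrites $D-\hM_{p,2}=2p(1-p)[D-\hM_{1}]+p^{2}[D-\hM_{2}]$ and $D-DW_{p}^{2}=2p(1-p)[D-M]+p^{2}[D-M\Di M]$ and concludes by summing the two scaled spectral relations via Fact~\ref{fact_fiveProps}. Your additional closure argument for the $\GL$-matrix claim is a reasonable elaboration of a point the paper's one-line proof leaves implicit.
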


\begin{proof}
The statement follows by
\begin{eqnarray*}
  D-\hM_{p,2} & = & 2p(1-p)[D-\hM_{1}]+p^{2}[D-\hM_{2}]\\
  & \approx_{\eps} & 2p(1-p)[D-M]+p^{2}[D-M\Di M]=D-DW_{p}^{2}.
\end{eqnarray*}
\end{proof}

Our algorithm $\SSMDBD$ builds upon Lemma~\ref{lem_DMp2} and Lemma~\ref{lem_IndSS}. Due to the preprocessing step in Lemma~\ref{lem_DMp2} we speed up the sparsification of each $\GL$-matrix polynomial $D-DW_{p_j}^{N}$ for all $j\in[1:T]$. We present now the pseudo code of algorithm $\SSMDBD$.

\begin{algorithm}[H]
\caption{}
\label{alg_SSMDBD}

$(D,\hM )=\SSMDBD(D,M,\BNTap,\eps)$

1. Let $\eps^{\prm}=\eps/[3\log N]$, $\hM_{tmp}=0$ and  $\delta=1-\sum_{i=1}^{T}\alpha_i$.

2. $(D,\hM_{1},\hM_{2})=\InitSS(D,M,\eps/3)$.\\
3. For every $j\in\{ 1,\dots,T\} $ do

$\qquad$3.1 Set $p=p_{j,T}=j/(T+1)$ and $\hM_{p,2}=(1-p)^{2}D + 2p(1-p)\hM_{1} + p^{2}\hM_{2}$.

$\qquad$3.2 $(D,\hM_{p,N})=\IndSS(\hM_{p,2},N,\eps^{\prm})$
where $D-\hM_{p,N}\approx_{2\eps/3}D-DW_{p}^{N}$ (c.f. Lemma~\ref{lem_IndSS}).

$\qquad$3.3 $\hM_{tmp}=\hM_{tmp}+\alpha_{j}\cdot\hM_{p,N}$.

4. Sparsify $D-\hM \approx_{\eps/3}D - \frac{1}{1-\delta}\hM_{tmp}$
by algorithm $\mKLC$ (c.f. Lemma \ref{lem_SS_GL}).

5. Return $(D,\hM)$.
\end{algorithm}

\begin{algorithm}[H]
\caption{}
\label{alg_Preprocess}

$(D,\hM_{1},\hM_{2})=\InitSS(D,M,\eps)$

1. Sparsify $D-\hM_{1} \approx_{\eps}D-M$ by algorithm $\mKLC$ (c.f. Lemma \ref{lem_SS_GL}).

2. Sparsify $D-\hM_{2} \approx_{\eps}D-M\Di M$

$\qquad$2.1 If $M$ is $\SPSD$ matrix call algorithm $\fSS$ (c.f. Lemma~\ref{lem_my_GL}),

$\qquad$2.2 otherwise call algorithm $\mPS$ (c.f. Lemma~\ref{lemmPS}).

3. Return $(D,\hM_{1},\hM_{2})$.
\end{algorithm}

Let $\delta=1-\sum_{i=1}^{T}\alpha_i$. We denote a $\GL$-matrix polynomial induced by $\MDBD$ $\BNTap$ as
$\GLB \triangleq\sum_{j=1}^{T}\alpha_{j}(D-DW_{p_j}^{N}) =
(1-\delta)D-D\sum_{i=0}^{N}\gamma_{i}(\Di M)^{i}$,
where $\gamma_{i}=\sum_{j=1}^{T}\alpha_{j}B_{N,i}(p_j)$. We are now ready to prove Theorem~\ref{thm_SS_MGL}.

\begin{proof}[Proof of Theorem \ref{thm_SS_MGL}] Let $\eps^{\prm}=\eps/[4\log N]$.
We perform first a preprocessing step. We apply Lemma \ref{lem_SS_GL}
to obtain a sparsifier $D-\hM \approx_{\eps^{\prm}}D-M$.
Then depending on whether $M$ is $\SPSD$ matrix
we use either Lemma~\ref{lemmPS} or Lemma~\ref{lem_my_GL}
to obtain a sparsifier $D-\hM_{2}\approx_{\eps^{\prm}}D-M\Di M$.
The run time is at most $\wO(\eps^{-2}m_{B}\log^{3}n\cdot\log^{2}N)$ or $\wO(m_{B}\log^{2}n+\eps^{-4}n\log^{4}n)$ respectively. Moreover, the sparsifiers satisfy $nnz(\hM_{1}),nnz(\hM_{2})\leq O(\eps^{-2}n\log n\cdot\log^{2}N)$.

We combine Lemma \ref{lem_DMp2} and Theorem~\ref{thmMyGLN}
to find each sparsifier $D-\hM_{p_{j},N}\approx_{\eps}D-DW_{p_{j}}^{N}$
by initializing algorithm $\PwrSS$ with a sparsifier $D-\hM_{p_{j},2}\approx_{\eps^{\prm}}D-DW_{p_{j}}^{2}$.
Let $\hM_{tmp} = \sum_{j=1}^{T}\alpha_{j}\hM_{p_{j},N}$, then by Fact~\ref{fact_fiveProps}.b it holds $(1-\delta)D-\hM_{tmp} \approx_{\eps}\GLB$.
This phase has $\wO(\eps^{-4}nT\log^{4}n\cdot\log^{5}N)$ runtime and each sparsifier satisfies $nnz(\hM_{p_{i},N})\leq O(\eps^{-2}n\log n)$.

However, matrix $\hM_{tmp}$ can be dense. We
apply Lemma \ref{lem_SS_GL} to obtain a sparsifier $D-\hM \approx_{\eps}D-\frac{1}{1-\delta}\hM_{tmp} \approx_{\eps} \frac{1}{1-\delta}\GLB$
in time $\wO(\min\{ n^{2}\log^{2}n,\eps^{-2}nT\log^{3}n\cdot\log^{2}N\} )$
such that $nnz(\hM )\leq O(\eps^{-2}n\log n)$.
\end{proof}

\section{Parallelization of Algorithm $\SSMDBD$}\label{sec:effPrlAlg}

In this section we parallelize algorithm $\SSMDBD$. This gives the first efficient parallel algorithm that computes a spectral sparsfier of any $\GL$-matrix polynomial with coefficients induced by $\MDBD$.

Our goal now is to prove Theorem~\ref{thm_prl_SS_MGL}. By construction of algorithms $\SSMDBD$ and $\InitSS$, it suffices to show that we can efficiently parallelize algorithms $\mKLC$, $\mPS$ and $\PwrSS$. Then the statement follows by noting that each $\GL$ Binomial matrix-polynomial can be sparsified separately and in parallel.

We parallelize now algorithms $\mKLC$, $\mPS$ and $\IndSS$.

\begin{lem}\label{lem_pKLC}
There is a parallel algorithm $\pKLC$
that on input $\GL$-matrix $B=D-M$ and parameter
$\eps\in(0,1)$, outputs a spectral sparsifier $D-\hM\approx_{\eps}D-M$ that is $\GL$-matrix such that $\hM$
is symmetric non-negative matrix with $nnz(\hM)\leq O(\eps^{-2}n\log^{c} n)$ for some constant $c$. The algorithm runs in $O(m_{B}\log^{c_1}n)$ work and $O(\log^{c_2}n)$ depth, for some other constants $c_1,c_2$.
\end{lem}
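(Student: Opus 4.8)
The plan is to parallelize the sequential construction of $\mKLC$ (Lemma~\ref{lem_SS_GL}) almost verbatim, replacing its single inherently sequential ingredient --- the Kelner--Levin sparsifier $\iSS$ of Theorem~\ref{thm_KL13} --- by the parallel Laplacian sparsifier $\STOVP$ of Theorem~\ref{thm_ParallelLaplacianSS}. Every remaining operation used in the proofs of Lemmas~\ref{lem_SS_GL} and~\ref{lemSSDA} is entrywise arithmetic on sparse matrices and is therefore trivially parallel, so the bulk of the work is simply verifying that each ``glue'' step stays within poly-logarithmic depth and near-linear work.

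Concretely, I would first decompose the input $\GL$-matrix $B=D-M$ as $B=D_{1}+L$, where $D_{1}$ is a non-negative diagonal matrix and $L=D_{2}-M'$ is a Laplacian matrix with $M'_{ii}=0$; the row sums defining $D_{2}$ (and hence the split $D=D_{1}+D_{2}$) are obtained by a parallel reduction in $O(m_{B})$ work and $O(\log n)$ depth. I would then invoke $\STOVP$ on $L$ with parameter $\eps/2$ to obtain a Laplacian sparsifier $\wD_{2}-\wM\approx_{\eps/2}L$ with $\wM_{ii}=0$ and $nnz(\wM)\leq O(\eps^{-2}n\log^{c}n)$, in $O(m_{B}\log^{c_{1}}n)$ work and $O(\log^{c_{2}}n)$ depth. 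Next I would apply the normalization of Lemma~\ref{lem_my_DappToD}, i.e. form $\hM=(D_{2}-\tfrac{1}{1+\eps}\wD_{2})+\tfrac{1}{1+\eps}\wM$ so that $D_{2}-\hM\approx_{\eps}L$; this is a constant number of diagonal extractions, scalings, and sparse matrix additions, each in $O(nnz(\wM))$ work and $O(1)$ depth. Finally I would add back $D_{1}$ via Fact~\ref{fact_fiveProps}.a to obtain $D-\hM\approx_{\eps}D-M$.

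To see the output is a $\GL$-matrix I would repeat the argument of Lemma~\ref{lem_SS_GL}: when $B$ is Laplacian we have $D_{1}=0$ and $D-\hM$ is the normalized Laplacian sparsifier; when $B$ is $\SDDM$, $D_{1}\succeq0$ and, since $\approx_{\eps}$ preserves the kernel space, adding the PSD diagonal $D_{1}$ keeps $D-\hM$ within the $\SDDM$ class. The work and depth are dominated by the single call to $\STOVP$, yielding $O(m_{B}\log^{c_{1}}n)$ work and $O(\log^{c_{2}}n)$ depth, and $nnz(\hM)\leq O(\eps^{-2}n\log^{c}n)$ is inherited from $\STOVP$ (the added diagonal contributes at most $n$ further nonzeros). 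I expect no deep difficulty here; the main subtlety is purely a bookkeeping one, namely that $\STOVP$'s output must be re-expressed in normalized form with the \emph{original} diagonal $D$ without destroying either the spectral guarantee or the Laplacian/$\SDDM$ sign structure. This is exactly the content of the parallel version of Lemma~\ref{lem_my_DappToD}, whose operations are entrywise and carry no sequential dependency, so it adds only $O(\log n)$ depth and lower-order work.
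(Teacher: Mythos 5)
Your proposal is correct and follows essentially the same route as the paper: the paper's proof of Lemma~\ref{lem_pKLC} simply repeats the argument of Lemma~\ref{lem_SS_GL} (decompose $B=D_{1}+L$, sparsify the Laplacian part, renormalize via Lemma~\ref{lem_my_DappToD}, add back $D_{1}$) with the sequential sparsifier of Theorem~\ref{thm_KL13} replaced by $\STOVP$ from Theorem~\ref{thm_ParallelLaplacianSS}. Your write-up is in fact more explicit than the paper's two-line proof about the depth and work of the glue steps, which is a welcome addition rather than a deviation.
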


\begin{proof}
We argue in a similar manner as in Lemma~\ref{lem_SS_GL} to show that the statement holds for $\GL$-matrices (Laplacian or $\SDDM$ matrices). Then the statement follows by Theorem~\ref{thm_ParallelLaplacianSS}.
\end{proof}

\begin{lem}\label{lem_pPS}
There is a parallel algorithm $\pPS$ that on input $\GL$-matrix $B=D-M$ and parameter $\eps\in(0,1/2)$, outputs a spectral sparsifier $D-\wM\approx_{\eps}D-M\Di M$ that is $\GL$-matrix such that $\wD\approx_{\eps}D$, $\wM_{ii}=0$ for all $i$ and $nnz(\wM)\leq O(\eps^{-2}n\log^{c}n)$ for some constant $c$. The algorithm runs in $O(\eps^{-2}m\log^{c_1+1}n)$ work and $O(\log^{c_2}n)$ depth, for some other constants $c_1$ and $c_2$.
\end{lem}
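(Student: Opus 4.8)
The plan is to parallelise the sequential argument of \lemref{lemmPS}, replacing its appeal to Spielman--Peng's sequential construction by the parallel primitives of \thmref{thm_pPS14} and \thmref{thm_ParallelLaplacianSS}. First I would apply the structural decomposition of \lemref{lemGLStruct} to write
\[
D-M\Di M \;=\; \mD_{1}+\mL_{B}+{\sum}_{i=1}^{n}\mL_{N_i}\;=\;\mD_{1}+\mL,
\]
where $\mD_{1}\succeq 0$ is a non-negative diagonal matrix, $\mL_{B}$ is a Laplacian with at most $m_{B}$ non-zero entries, and each $\mL_{N_i}=(s_{i}/d_{i})\mD_{N_i}-\eta_{i}\eta_{i}^{\rot}/d_{i}$ is the (possibly dense) clique Laplacian induced by the neighbour set $N_i$. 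Every quantity appearing in this decomposition ($\eta_i,d_i,s_i,\mD_{N_i},\mB,\mD_B$) depends only on the local data of a single row of $M$, so the decomposition is available in $O(m_B)$ work and $O(\log n)$ depth (c.f. Appendix~\ref{appsubsec:SR}).

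The heart of the argument is to sparsify the sum of Laplacians $\mL$ in parallel. The crucial point, which is exactly what prevents an $O(n^2)$ blow-up coming from the dense cliques, is that $\mL$ is never materialised: following the ``Efficient Parallel Construction'' of \cite{PS14} that underlies \thmref{thm_pPS14} and \thmref{thm_ParallelLaplacianSS}, each clique $\mL_{N_i}$ is sampled implicitly through its rank-one representation $\eta_i\eta_i^{\rot}/d_i$. This produces a spectral sparsifier $\wD-\wM\approx_{\eps/2}\mL$ with $\wM_{ii}=0$ and $nnz(\wM)\leq O(\eps^{-2}n\log^{c}n)$, in $O(\eps^{-2}m_B\log^{c_1+1}n)$ work and $O(\log^{c_2}n)$ depth, carrying along the diagonal guarantee inherited from the parallel primitive.

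Finally I would normalise and reassemble exactly as in the sequential proof. The diagonal $D^{\prm}=\diag(\mL)$ is computed in $O(m_B)$ work and $O(\log n)$ depth, and the purely algebraic rescaling of \lemref{lem_my_DappToD} (no sampling, hence no parallel cost beyond a constant-depth arithmetic pass) turns $\wD-\wM$ into $D^{\prm}-\hM\approx_{\eps}\mL$ with $\hM$ symmetric non-negative. Since $\mD_{1}$ is PSD, Fact~\ref{fact_fiveProps}.a yields $\mD_{1}+(D^{\prm}-\hM)\approx_{\eps}\mD_{1}+\mL=D-M\Di M$, which reassembles, exactly as in \lemref{lem_SS_GL}, into the claimed normalised $\GL$-sparsifier $D-\wM$ with $\wM_{ii}=0$; the same argument shows the result is a $\GL$-matrix of the same type as $B$ and that $\wD\approx_{\eps}D$ is preserved. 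The total work and depth are dominated by the sparsification step, giving the stated bounds.

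I expect the main obstacle to lie entirely in the second step: verifying that Spielman--Peng's parallel construction, when fed the implicitly represented sum $\mL_{B}+\sum_{i}\mL_{N_i}$, both (i) samples the dense cliques without ever forming them, so that the work stays $\wO(\eps^{-2}m_{B})$ rather than $\wO(n^2)$, and (ii) still delivers the side guarantee $\wD\approx_{\eps}D$ needed by the downstream iterations, all within poly-logarithmic depth. The remaining steps --- the local decomposition, the algebraic normalisation via \lemref{lem_my_DappToD}, and adding back the PSD diagonal --- are direct transcriptions of the sequential proof and introduce no new parallel bottleneck.
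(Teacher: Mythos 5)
Your proposal is correct and follows essentially the same route as the paper: the paper's proof simply states that one repeats the argument of Lemma~\ref{lemmPS} (structural decomposition via Lemma~\ref{lemGLStruct}, implicit clique sampling \`a la Spielman--Peng, normalisation via Lemma~\ref{lem_my_DappToD}, and reassembly with the PSD diagonal via Fact~\ref{fact_fiveProps}.a) with the sequential primitive replaced by Theorem~\ref{thm_pPS14}. Your write-up spells out the same steps in more detail, including the correct identification of the implicit clique representation as the point that keeps the work near-linear in $m_B$.
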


\begin{proof}
Using similar arguments as in Lemma~\ref{lemmPS} we prove that the statement holds for $\GL$-matrices. Then the statement follows by Theorem~\ref{thm_pPS14}.
\end{proof}

\newcommand{\lempSqrSSREC}
{
Let $D-M$ and $D-\hM_{2^{k}}$
are $\GL$-matrices such that $D-\hM_{2^{k}}\approx_{\eps}D-D(\Di M)^{2^{k}}$, for $k\in\N_+$. There is a parallel algorithm $\pSqrSS$
that on input the $\GL$-matrix $D-\hM_{2^{k}}$
and parameter $\eps^{\prm}\in(0,1)$, outputs a spectral sparsifier $D-\hM_{2^{k+1}}\approx_{\eps\oplus\eps^{\prm}}D-D(\Di M)^{2^{k+1}}$ that is $\GL$-matrix with $nnz(\hM_{2^{k+1}})\leq O({\eps^{\prm}}^{-2}n\log^{c} n)$ for some constant $c$. The algorithm runs in work $O({\eps^{\prm}}^{-2}nnz(\hM_{2^{k}})\log^{c_1+1}n)$ and depth $O(\log^{c_2}n)$, for some other constants $c_1,c_2$.
}
\mylemma{lempSqrSSREC}{\lempSqrSSREC}

\begin{proof}
We use similar arguments as in Lemma~\ref{lem_SqrSSREC}, but we substitute Lemma~\ref{lemmPS} with Lemma~\ref{lem_pPS}.
\end{proof}

\newcommand{\lempIndSS}
{
Let $D-M$ and $D-\hM_2$
are $\GL$-matrices such that $D-\hM_2\approx_{\eps}D-M\Di M$ and $nnz(\hM_2)\leq O(\eps^{-2}n\log^{c} n)$ for some constant $c$. There is a parallel algorithm $\pIndSS$ that on input $\GL$-matrix $D-\hM_2$, integer $N=2^k$ for $k\in\N$ and parameter $0<\eps^{\prm}\leq\eps$, outputs a spectral sparsifier $D-\hM_N\approx_{(\oplus^{(\log N - 1)}\eps^{\prm})\oplus^{2}\eps}D-D(\Di M)^N$ that is $\GL$-matrix and $nnz(\hM_N)\leq O(\eps^{-2}n\log^{c}n)$. The algorithm runs in work $\wO({\eps^{\prm}}^{-4}n\log^{c+c_1+1}n)$ and depth $O(\log^{c_2}n\cdot\log N)$ for some constants $c_1,c_2$.
}
\mylemma{lempIndSS}{\lempIndSS}

\begin{proof}
We argue in a similar manner as in Lemma~\ref{lem_IndSS}. By Lemma~\ref{lem_pKLC} we compute a sparsifier $D-\hM_4\approx_{\eps^{\prm}\oplus\eps}D-D(\Di M)^4$ with $nnz(\hM_4)\leq O({\eps^{\prm}}^{-2}n\log^{c}n)$ in work $O((\eps\cdot\eps^{\prm})^{-2}n\log^{c+c_1+1}n)$ and depth $O(\log^{c_2}n)$. Then we apply $(\log N - 1)$ times \lemref{lempSqrSSREC} to obtain a spectral sparsifier $D-\hM_N\approx_{(\oplus^{(\log N - 1)}\eps^{\prm})\oplus\eps}D-D(\Di M)^N$ with $nnz(\hM_N)\leq O({\eps^{\prm}}^{-2}n\log^{c} n)$ in $O({\eps^{\prm}}^{-4}n\log^{c+c_1+1}n)$ work and $O(\log^{c_2}\cdot\log N)$ depth. The statement follows by applying Lemma~\ref{lem_pKLC} with $\eps$ to compute a refined spectral sparsifier of $D-\hM_N$.
\end{proof}

We present now the proof of Theorem~\ref{thm_prl_SS_MGL} which yields the parallel algorithm $\pSSMDBD$.

\begin{proof}[Proof of Theorem~\ref{thm_prl_SS_MGL}]
We sketch first our parallel algorithm $\pSSMDBD$. We parallelize algorithm $\InitSS$ based on algorithms $\pKLC$ and $\pPS$. Then, we sparsify separately and in parallel each of the $T$ distinct single Binomial $\GL$-matrix polynomials by algorithm $\pIndSS$. The resulting $T$ sparsifiers are scaled and merged into a $\GL$-matrix polynomial induced by $\MDBD$. Since this matrix-polynomial might be dense, we sparsify it using algorithm $\pKLC$.

The correctness of algorithm $\pSSMDBD$ follows by Theorem~\ref{thm_SS_MGL}. We analyze now the work and the depth of algorithm $\pSSMDBD$. By Lemma~\ref{lem_pKLC} and Lemma~\ref{lem_pPS} the initial phase is dominated by $O(\eps^{-2}m\log^{c_1+1}n\cdot\log^{2}N)$ work and $O(\log^{c_2}n)$ depth. Moreover, each of the sparsifiers $D-\hM \approx_{\eps}D-M$ and $D-\hM_{2}\approx_{\eps}D-M\Di M$ has at most $O(\eps^{-2}n\log^{c}n)$ non-zero entries.

Let $\eps^{\prm}=\eps/[12\log N]$. By \lemref{lempIndSS} for each $j\in[1:T]$ we apply algorithm $\pIndSS$ to compute a spectral sparsifier $D-\hM_{p_j,N}\approx_{\eps/6}D-DW_{p_j}^{N}$ with $nnz(\hM_{p_j,N})\leq O(\eps^{-2}n\log^{c}n)$. This phase runs in work $\wO({\eps^{\prm}}^{-4}nT\log^{c+c_1+1}n)$ and depth $O(\log^{c_2}n\cdot\log N)$. Furthermore, the linear combination $\hM_{tmp}=\sum_{j=1}^{T}\alpha_{j}\cdot\hM_{p,N}$ (in Step 3.3) can be computed in depth $O(\log T)$.

Therefore, we approximate by $D-\frac{1}{1-\delta}\hM_{tmp}\approx_{\eps/6} D-D\sum_{i=0}^{N}\frac{1}{1-\delta}\gamma_{i}(\Di M)^{i}$ the desired $\GL$-matrix polynomial induced by $\MDBD$. Since matrix $\hM_{tmp}$ might be dense, by Lemma~\ref{lem_pKLC} we compute a spectral sparsifier $D-\hM \approx_{\eps/12}D - \frac{1}{1-\delta}\hM_{tmp}$ with $nnz(\hM)\leq O(\eps^{-2}n\log^{c_2}n)$. Algorithm $\pKLC$ runs in work
$O(\min\{\eps^{-2}nT\log^{c}n,\,n^2\}\cdot\log^{c_1}n)$ and depth $O(\log^{c_2}n)$.
\end{proof}

\section{Faster $\SDDM$ Solver}\label{sec:SDDMSolver}

In this section we prove Theorem~\ref{thm_SDDM_Solver}. We argue in a similar manner as in~\cite{PS14}, but in contrast our improved analysis relies on the refined initialization phase developed in Section~\ref{sec:CIA} and its consecutive parallelization in Section~\ref{sec:effPrlAlg}. Spielman and Peng's~\cite{PS14} proof involves two major steps: the first is to construct a sparse $O(1)$-approximate inverse chain, and the second is to apply this chain as a preconditioner into an algorithm known as ``Preconditioned Richardson Iteration''~\cite[Lemma 4.4]{PS14}.

We give now an improved construction for a sparse $\eps$-approximate inverse chain. This directly implies the desired statement of Theorem~\ref{thm_SDDM_Solver}.

\begin{proof}[Proof of Theorem~\ref{thm_SDDM_Solver}]

We apply Lemma \ref{lem_pKLC} to compute a sparsifier $D-\hM \approx_{\eps/16}D-M=B$ with $nnz(\hM )\leq O(\eps^{-2}n\log^{c} n)$ in work $O(m_{B}\log^{c_1}n)$ and depth $O(\log^{c_2}n)$. Then by Fact~\ref{fact_fiveProps}.d we have $(D-\hM )^{-1}\approx_{\eps/8}(D-M)^{-1}$.

Our goal now is to construct a sparse $\eps$-approximate inverse chain of the sparsifier $\hB=D-\hM$. The condition number of matrix $\hB$ satisfies $\kappa_{\hB}\leq\frac{1+\eps/8}{1-\eps/8}\kappa_{B}=t_{\hB}$. Let $\eps^{\prm}=\eps/(16\log t_{\hB}$). Spielman and Peng~\cite{PS14} proved that $O(\log t_{\hB})$ iterations suffice for the following iterative procedure to output a sparse $\eps$-approximate inverse chain.

By Lemma~\ref{lem_pPS} we compute a spectral sparsifier $D-\hM_{2}\approx_{\eps/8}D-\hM \Di \hM$ with $nnz(\hM_2)\leq O(\eps^{-2}n\log^{c}n)$ in work $O(\eps^{-4}n\log^{c+c_1+1}n)$ and depth $O(\log^{c_2}n)$. By Lemma~\ref{lem_Closure} and \lemref{lempSqrSSREC} for each consecutive call we compute a sparsifier $D-\hM_{2^{k+1}}\approx_{\eps^{\prm}}D-\hM_{2^k} \Di \hM_{2^k}$ with the at most $O(\eps^{-2}n\log^{c}n\cdot\log^{2}\kappa_B)$ non-zero entries in work $O(\eps^{-4}n\log^{c+c_1+1}n\cdot\log^{2}\kappa_B)$ and depth $O(\log^{c_2}n)$.

We combine now Fact \ref{fact_fiveProps} and apply recursively $O(\log t_{\hB})$ times the relation
\begin{eqnarray*}
& & [\Di+(I+\Di\hM_{2^{k}})[D-\hM_{2^{k+1}}]^{-1}(I+\hM_{2^{k}}\Di)]/2\\ & \approx_{\eps^{\prm}} & [\Di+(I+\Di\hM_{2^{k}})[D-\hM_{2^{k}}\Di\hM_{2^{k}}]^{-1}(I+\hM_{2^{k}}\Di)]/2\\&=&(D-\hM_{2^{k}})^{-1}.
\end{eqnarray*}
Spielman and Peng showed in~\cite[Corollary 5.5]{PS14} that $D-\hM_{2^{k}}\Di\hM_{2^{k}}$ can be replaced with $D$ for $k=O(\log t_{\hB})$ and maintain the desired approximation. The statement follows by Fact~\ref{fact_fiveProps}.
\end{proof}

\section{Representational Power of $\MDBD$}\label{sec:APMDBD}

In this section we prove Theorem~\ref{thmMDBD}. Our analysis relies on the following three influential works. Hald~\cite{Hald68} analyzed mixed Binomial distributions in continuous case. Cruz-Uribe and Neugebauer~\cite{UN03,CUN02} gave sharp guarantees for approximating integrals using the Trapezoid method. Doha et al.~\cite{DBS11} proved a simple closed formula for higher order derivatives of Bernstein basis.

Our goal now is to prove Theorem~Theorem~\ref{thmMDBD}. Hald~\cite{Hald68} proved the following result on mixed Binomial distributions.

\begin{thm}\cite{Hald68}\label{thm_Hald68}
Let $w(x)$ be a probability
density function that is four times differentiable. Then for every $N\in\N$
and $i\in[0,N]$ the Bernstein basis $B_{N,i}(p)$
satisfies
\begin{equation}\label{eq:int_MDBD}
\int_{0}^{1}w(p)\cdot B_{N,i}(p)dp=\frac{w(i/N)}{N}\cdot\left[1+\frac{b_{1}(i/N)}{N}+\frac{b_{2}(i/N)}{N^{2}}+O\left(\frac{1}{N^{3}}\right)\right]
\end{equation}
where the functions are defined by $b_{1}(x)=\frac{1}{w(x)}[-w(x)+(1-2x)w^{\prm}(x)+\frac{1}{2}x(1-x)w^{\prm\prm}(x)]$
and $b_{2}(x)=\frac{1}{w(x)}[w(x)-3(1-2x)w^{\prm}(x)+(1-6x+6x^{2})w^{\prm\prm}(x)+\frac{5}{6}x(1-x)(1-2x)w^{\prm\prm\prm}(x)+\frac{1}{8}x^{2}(1-x)^{2}w^{\prm v}(x)].$
\end{thm}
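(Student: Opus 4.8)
The plan is to recognise the integral as an expectation against a Beta law and to reduce the claim to an asymptotic expansion of the moments of that law. Since $\int_{0}^{1}B_{N,i}(p)\,dp=\binom{N}{i}\mathrm{B}(i+1,N-i+1)=\tfrac{1}{N+1}$ by the Beta integral, the normalised measure $d\nu=(N+1)B_{N,i}(p)\,dp$ is exactly the density of a random variable $P\sim\mathrm{Beta}(i+1,N-i+1)$. Hence
\[
\int_{0}^{1}w(p)\,B_{N,i}(p)\,dp=\frac{1}{N+1}\,\mathbb{E}[w(P)],
\]
and it remains to expand $\mathbb{E}[w(P)]$ to order $N^{-2}$ and multiply by $\tfrac{1}{N+1}=N^{-1}(1-N^{-1}+N^{-2}-\cdots)$. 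It is exactly the discrepancy between $\tfrac{1}{N+1}$ and $\tfrac{1}{N}$ that produces the $-w(x)$ summand inside $b_{1}$.

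First I would Taylor-expand $w$ about the point $x=i/N$ to third order with the Lagrange remainder, $w(P)=\sum_{k=0}^{3}\tfrac{w^{(k)}(x)}{k!}(P-x)^{k}+\tfrac{w^{(4)}(\zeta_{P})}{24}(P-x)^{4}$, which uses only the four-times differentiability hypothesis. Taking expectations and writing $w^{(4)}(\zeta_{P})=w^{(4)}(x)+[w^{(4)}(\zeta_{P})-w^{(4)}(x)]$ isolates a clean term $\tfrac{w^{(4)}(x)}{24}m_{4}$, where $m_{k}=\mathbb{E}[(P-x)^{k}]$ are the central-about-$x$ moments of the Beta law. These are available in closed form from the rising-factorial identity $\mathbb{E}[P^{r}]=(i+1)_{r}/(N+2)_{r}$: the mean is $\mu=\tfrac{i+1}{N+2}$, so $m_{1}=\mu-x=\tfrac{1-2x}{N+2}$; the variance is $\sigma^{2}=\tfrac{(i+1)(N-i+1)}{(N+2)^{2}(N+3)}$; and the higher moments follow by recentring from $\mu$ to $x$, e.g. $m_{3}=\mathbb{E}[(P-\mu)^{3}]+3\sigma^{2}m_{1}+m_{1}^{3}$. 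Substituting $i=Nx$ and expanding each $m_{k}$ as a power series in $1/N$ with coefficients polynomial in $x$, I would track terms up to $N^{-2}$; one finds $m_{1}=\tfrac{1-2x}{N}+O(N^{-2})$, $m_{2}=\tfrac{x(1-x)}{N}+O(N^{-2})$, $m_{3}=\tfrac{5x(1-x)(1-2x)}{N^{2}}+O(N^{-3})$ (the ``$5$'' being the sum of the Beta's intrinsic third moment and the recentring term $3\sigma^{2}m_{1}$), and $m_{4}=3\sigma^{4}+O(N^{-3})=\tfrac{3x^{2}(1-x)^{2}}{N^{2}}+O(N^{-3})$.

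Collecting the $N^{-1}$ terms recovers the leading correction: $w^{\prm}(x)m_{1}+\tfrac{1}{2}w^{\prm\prm}(x)m_{2}$ together with the $-w(x)/N$ from the $\tfrac{1}{N+1}$ prefactor assembles exactly $\tfrac{w(x)}{N}[1+b_{1}(x)/N]$. For the $b_{2}$ term I would push the same computation to $N^{-2}$, combining the $N^{-2}$ coefficients of $m_{1},m_{2}$, the leading $N^{-2}$ terms of $m_{3},m_{4}$, and the $N^{-2}$ term of the normalisation. The fourth-derivative contribution comes purely from $\tfrac{1}{24}w^{(4)}(x)m_{4}=\tfrac{1}{8}x^{2}(1-x)^{2}w^{(4)}(x)N^{-2}$, matching the last summand of $b_{2}$, while the $w^{\prm},w^{\prm\prm},w^{\prm\prm\prm}$ coefficients assemble from the cross terms. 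A quick sanity check is the first-derivative coefficient: $\tfrac{1}{N+1}w^{\prm}(x)m_{1}$ has $N^{-3}$-coefficient $-3(1-2x)w^{\prm}(x)$, which is precisely the $w^{\prm}$ term of $b_{2}$.

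The main obstacle is exactly this order-$N^{-2}$ bookkeeping: a lengthy but mechanical polynomial-algebra computation in which the contributions from the four moments and from the normalisation must cancel correctly to reproduce the specific polynomials $1$, $-3(1-2x)$, $1-6x+6x^{2}$, $\tfrac{5}{6}x(1-x)(1-2x)$ and $\tfrac{1}{8}x^{2}(1-x)^{2}$ appearing in $b_{2}$. The remaining issue is controlling the remainder $\tfrac{1}{24}\mathbb{E}[(w^{(4)}(\zeta_{P})-w^{(4)}(x))(P-x)^{4}]$: the Beta law concentrates around $x$, its mass outside any fixed neighbourhood of $i/N$ decaying exponentially in $N$ by a Chernoff/Bernstein estimate, so that together with $\mathbb{E}[(P-x)^{4}]=O(N^{-2})$ and the (uniform) continuity and boundedness of $w^{(4)}$ on $[0,1]$ this remainder is absorbed into the stated error term, uniformly for $x$ bounded away from $0$ and $1$. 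This concentration step is where the smoothness of $w$ is genuinely used.
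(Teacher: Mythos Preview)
The paper does not prove this statement: Theorem~\ref{thm_Hald68} is quoted from Hald~\cite{Hald68} and used as a black box, so there is no ``paper's own proof'' to compare against. Your approach---viewing $(N+1)B_{N,i}(p)\,dp$ as a $\mathrm{Beta}(i+1,N-i+1)$ law, Taylor-expanding $w$ about $i/N$, and reading off the coefficients from the first few Beta moments---is the standard derivation of this expansion and is essentially how Hald obtains it.

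One small point on the remainder: writing $w^{(4)}(\zeta_{P})=w^{(4)}(x)+[w^{(4)}(\zeta_{P})-w^{(4)}(x)]$ and appealing only to continuity of $w^{(4)}$ gives $o(N^{-2})$, not $O(N^{-3})$, for the error after extracting $\tfrac{1}{24}w^{(4)}(x)m_{4}$. To reach an honest $O(N^{-3})$ remainder one should either expand $w$ to fourth order with a fifth-derivative Lagrange remainder (which would require $w\in C^{5}$), or else accept that the $O(N^{-3})$ in the statement is a formal placeholder under mild extra regularity on $w^{(4)}$. This is a bookkeeping issue rather than a gap in the method; the identification of $b_{1}$ and $b_{2}$ that you sketch is correct.
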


We distinguish two types of approximation errors. The error term $(1+\eta_{i})$ (c.f. Equation~\ref{eq:fin_Approx} in Theorem~\ref{thmMDBD}) is caused by the error introduced in Equation \ref{eq:int_MDBD}. The second error type is due to the integral discretization with finite summation. The later approximation error in analyzed by Cruz-Uribe and Neugebauer~\cite{UN03,CUN02}. We summarize below their result.

\begin{thm}\cite{CUN02,UN03}\label{thm_approx_Def_Intergral}
Suppose $f$ be continuous
and twice differentiable function, $T\in\N$ is number, and the discrete
approximator of $f$ is defined by $A_{T}(f)=[\frac{1}{2}(f(0)+f(1))+\sum_{i=1}^{T-1}f(i/T)]/T$.
Then the approximation error is given by the expression
$E_{T}(f)=|A{}_{T}(f)-\int_{0}^{1}f(t)dt|=|\sum_{i=1}^{T}L_{i}|$,
where
$L_{i}=\frac{1}{2}\int_{x_{i-1}}^{x_{i}}\left[\frac{1}{4T^{2}}-(t-c_{i})^{2}\right]f^{\prm\prm}(t)dt$ and $c_{i}=(x_{i-1}+x_{i})/2$.
\end{thm}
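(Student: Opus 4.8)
The plan is to recognize $A_T(f)$ as the composite Trapezoidal rule on the uniform partition $x_i = i/T$ of $[0,1]$, and then to obtain the claimed per-subinterval error via the Peano-kernel technique (integration by parts, applied twice). First I would verify the reformulation: summing the elementary trapezoid contributions $\frac{1}{2T}(f(x_{i-1}) + f(x_i))$ over $i$ doubles every interior node and halves the endpoints, so that
\[
\sum_{i=1}^{T} \frac{1}{2T}\big(f(x_{i-1}) + f(x_i)\big) = \frac{1}{T}\Big[\frac{1}{2}(f(0) + f(1)) + \sum_{i=1}^{T-1} f(i/T)\Big] = A_T(f).
\]
Consequently $A_T(f) - \int_0^1 f(t)\,dt = \sum_{i=1}^{T}\big[\frac{1}{2T}(f(x_{i-1}) + f(x_i)) - \int_{x_{i-1}}^{x_i} f(t)\,dt\big]$, and it suffices to match each bracketed local error with $L_i$.

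The heart of the argument is a fixed-interval identity. On $[a,b]$ with $h = b - a = 1/T$ and midpoint $c = (a+b)/2$, I would introduce the quadratic kernel $g(t) = \frac{1}{2}(t-a)(t-b)$, which satisfies $g(a) = g(b) = 0$, $g'(t) = t - c$, and $g''(t) \equiv 1$. Integrating $\int_a^b g(t) f''(t)\,dt$ by parts twice — the first boundary term vanishing since $g$ vanishes at both endpoints, and the second reducing to $-[g'(t)f(t)]_a^b + \int_a^b f(t)\,dt$ — produces
\[
\int_a^b g(t) f''(t)\,dt = -\frac{h}{2}\big(f(a) + f(b)\big) + \int_a^b f(t)\,dt,
\]
where I use $g'(b) = h/2$ and $g'(a) = -h/2$. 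Rearranging yields the signed local error $\frac{h}{2}(f(a)+f(b)) - \int_a^b f(t)\,dt = -\int_a^b g(t) f''(t)\,dt$.

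Finally I would rewrite the kernel into the stated form through the elementary identity $(t-a)(t-b) = (t-c)^2 - h^2/4$, valid because $a = c - h/2$ and $b = c + h/2$; with $h = 1/T$ this gives $-\frac{1}{2}(t-a)(t-b) = \frac{1}{2}\big[\frac{1}{4T^2} - (t-c)^2\big]$. Setting $a = x_{i-1}$, $b = x_i$, $c = c_i$ then identifies the local signed error exactly with $L_i$; summing over $i$ and taking absolute values gives $E_T(f) = |A_T(f) - \int_0^1 f| = |\sum_{i=1}^{T} L_i|$. I expect no genuine obstacle, since this is an \emph{exact} kernel identity rather than an estimate — the only care required is tracking the signs through the two integrations by parts and the endpoint evaluations $g'(a), g'(b)$, while continuity of $f''$ ensures every integral is well defined.
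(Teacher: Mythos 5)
Your derivation is correct: the bookkeeping identifying $A_T(f)$ with the composite trapezoid rule, the two integrations by parts against the kernel $g(t)=\frac12(t-a)(t-b)$, and the algebraic rewriting $-g(t)=\frac12\bigl[\frac{1}{4T^2}-(t-c)^2\bigr]$ all check out, and the exact identity $A_T(f)-\int_0^1 f=\sum_i L_i$ follows. Note that the paper does not prove this statement at all --- it is imported verbatim from Cruz-Uribe and Neugebauer \cite{CUN02,UN03} --- so there is nothing to compare against; your Peano-kernel argument is the standard proof of that cited result. The only cosmetic caveat is that you invoke continuity of $f''$ at the end, which is slightly stronger than the stated ``twice differentiable'' hypothesis, but this is harmless for every application in the paper.
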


The rest of this section is devoted to prove Theorem~\ref{thmMDBD}. We use the following two results established by Cruz-Uribe and Neugebauer, and Doha et al.

\begin{lem}
\label{lem_int_n1}~\cite{CUN02,UN03} The Bernstein basis satisfies $\int_{0}^{1}B_{N,i}(x)dx=\frac{1}{N+1}$ for every $i\in[0:N]$.
\end{lem}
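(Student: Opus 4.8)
The plan is to evaluate the integral directly as an Euler Beta integral. First I would factor out the constant binomial coefficient, writing $\int_{0}^{1}B_{N,i}(x)\,dx = {N \choose i}\int_{0}^{1}x^{i}(1-x)^{N-i}\,dx$, so that everything reduces to understanding the remaining integral $\int_{0}^{1}x^{i}(1-x)^{N-i}\,dx$.

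This integral is precisely the Beta function $B(i+1,N-i+1)$. Using the standard identity $B(a,b)=\Gamma(a)\Gamma(b)/\Gamma(a+b)$ together with $\Gamma(k+1)=k!$ for nonnegative integers, I obtain $\int_{0}^{1}x^{i}(1-x)^{N-i}\,dx = \frac{i!\,(N-i)!}{(N+1)!}$, which is valid uniformly for all $i\in[0:N]$, including the endpoints $i=0$ and $i=N$. Substituting back and using ${N \choose i}=\frac{N!}{i!\,(N-i)!}$ then gives $\int_{0}^{1}B_{N,i}(x)\,dx = \frac{N!}{i!\,(N-i)!}\cdot\frac{i!\,(N-i)!}{(N+1)!}=\frac{N!}{(N+1)!}=\frac{1}{N+1}$, as claimed, with the index $i$ cancelling out entirely.

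If one prefers to avoid the Gamma function, an equivalent self-contained route is integration by parts: setting $I_{N,i}=\int_{0}^{1}x^{i}(1-x)^{N-i}\,dx$, integrating $x^{i}$ and differentiating $(1-x)^{N-i}$ yields the recurrence $I_{N,i}=\frac{N-i}{i+1}I_{N,i+1}$ for $i<N$ (the boundary term vanishes at both endpoints since one factor is zero at each of $x=0$ and $x=1$), with base case $I_{N,N}=\int_0^1 x^N\,dx=\frac{1}{N+1}$. A one-line computation shows ${N \choose i}\frac{N-i}{i+1}={N \choose i+1}$, so ${N \choose i}I_{N,i}={N \choose i+1}I_{N,i+1}$, and a short downward induction from $i=N$ gives that ${N \choose i}I_{N,i}$ is independent of $i$ and equal to $\frac{1}{N+1}$.

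There is no real obstacle in this lemma, as it is a classical closed-form evaluation; the only point requiring a moment's care is confirming that the endpoint cases and the vanishing of the boundary terms are handled correctly, and the Beta-function formulation does this automatically. Since the paper already invokes Hald~\cite{Hald68} and Cruz-Uribe--Neugebauer~\cite{CUN02,UN03}, I would simply cite the Beta-integral identity and record the one-line simplification above.
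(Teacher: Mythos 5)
Your proof is correct: the Beta-integral evaluation $\int_{0}^{1}x^{i}(1-x)^{N-i}\,dx=\tfrac{i!\,(N-i)!}{(N+1)!}$ combined with the binomial coefficient gives exactly $\tfrac{1}{N+1}$, and your integration-by-parts alternative (with the boundary terms vanishing for $i<N$ and base case $I_{N,N}=\tfrac{1}{N+1}$) is likewise sound. The paper itself offers no proof of this lemma --- it is stated as a known fact with a citation to Cruz-Uribe and Neugebauer --- so your self-contained derivation is a perfectly adequate substitute for the citation.
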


\begin{lem}
\label{lem_pder_Bni}~\cite{DBS11} The $p$th derivative of a Bernstein
basis satisfies for every $i\in[0:N]$ that
\[
\frac{d^{p}B_{N,i}(x)}{dx^{p}}=\frac{N!}{(N-p)!}\sum_{k=\max\{ 0,i+p-N\} }^{\min\{ i,p\} }(-1)^{k+p}\cdot{p \choose k}\cdot B_{N-p,i-k}(x).
\]
\end{lem}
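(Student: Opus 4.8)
The plan is to prove the formula directly from the general Leibniz rule, which is cleaner than an induction on $p$ (although induction combined with the single-derivative identity $\frac{d}{dx}B_{N,i}(x)=N[B_{N-1,i-1}(x)-B_{N-1,i}(x)]$ would also work). First I would write $B_{N,i}(x)=\binom{N}{i}\,x^{i}(1-x)^{N-i}$ as a product of the two factors $g(x)=x^{i}$ and $h(x)=(1-x)^{N-i}$, and apply
\[
\frac{d^{p}}{dx^{p}}\bigl[g(x)h(x)\bigr]=\sum_{k=0}^{p}\binom{p}{k}\,g^{(p-k)}(x)\,h^{(k)}(x).
\]
The individual derivatives are elementary falling-factorial expressions: $g^{(p-k)}(x)=\frac{i!}{(i-p+k)!}\,x^{i-p+k}$ (vanishing unless $p-k\leq i$), and $h^{(k)}(x)=(-1)^{k}\frac{(N-i)!}{(N-i-k)!}\,(1-x)^{N-i-k}$ (vanishing unless $k\leq N-i$), where the sign $(-1)^{k}$ comes from the chain rule applied to $1-x$.

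Next I would multiply through by $\binom{N}{i}=\frac{N!}{i!\,(N-i)!}$, so that after cancelling $i!$ and $(N-i)!$ the summand becomes
\[
(-1)^{k}\binom{p}{k}\,\frac{N!}{(i-p+k)!\,(N-i-k)!}\;x^{\,i-p+k}(1-x)^{\,N-i-k}.
\]
The key algebraic step is then the reindexing $k\mapsto p-k$: under this substitution the exponents become $x^{\,i-k}(1-x)^{\,(N-p)-(i-k)}$, which is exactly the monomial pattern of $B_{N-p,\,i-k}(x)$; moreover $\binom{p}{p-k}=\binom{p}{k}$ and $(-1)^{p-k}=(-1)^{p+k}$, which accounts for the $(-1)^{k+p}$ sign in the claim.

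Finally I would match the scalar coefficient. After the reindexing the factorial factor is $\frac{N!}{(i-k)!\,(N-p-i+k)!}$, and I would invoke the identity
\[
\frac{N!}{(i-k)!\,(N-p-i+k)!}=\frac{N!}{(N-p)!}\binom{N-p}{i-k},
\]
which is immediate from $\binom{N-p}{i-k}=\frac{(N-p)!}{(i-k)!\,(N-p-i+k)!}$. This produces precisely $\frac{N!}{(N-p)!}(-1)^{k+p}\binom{p}{k}B_{N-p,\,i-k}(x)$, as required. The stated summation limits then fall out of the vanishing conventions already noted: $\binom{p}{k}=0$ unless $0\leq k\leq p$, while $B_{N-p,\,i-k}(x)=0$ unless $0\leq i-k\leq N-p$, i.e. $i+p-N\leq k\leq i$; intersecting these ranges yields $\max\{0,i+p-N\}\leq k\leq\min\{i,p\}$. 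The only delicate point will be the bookkeeping of signs together with the simultaneous reindexing, since a single sign or index slip would misalign the entire sum; everything else is routine manipulation of factorials and binomial coefficients.
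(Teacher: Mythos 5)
Your derivation is correct. Note that the paper does not actually prove this lemma: it is imported verbatim from the cited reference [DBS11], so there is no in-paper argument to compare against. Your Leibniz-rule computation is a complete, self-contained substitute for that citation: the derivatives $g^{(p-k)}(x)=\frac{i!}{(i-p+k)!}x^{i-p+k}$ and $h^{(k)}(x)=(-1)^{k}\frac{(N-i)!}{(N-i-k)!}(1-x)^{N-i-k}$ are right, the cancellation against $\binom{N}{i}=\frac{N!}{i!\,(N-i)!}$ is right, and the reindexing $k\mapsto p-k$ correctly turns the exponent pair $\left(i-p+k,\;N-i-k\right)$ into $\left(i-k,\;(N-p)-(i-k)\right)$, matching $B_{N-p,\,i-k}$, while $\binom{p}{p-k}=\binom{p}{k}$ and $(-1)^{p-k}=(-1)^{p+k}$ give the stated sign. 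The factorial identity $\frac{N!}{(i-k)!\,(N-p-i+k)!}=\frac{N!}{(N-p)!}\binom{N-p}{i-k}$ closes the coefficient match, and the summation limits $\max\{0,i+p-N\}\leq k\leq\min\{i,p\}$ follow exactly from the two vanishing conditions $p-k\leq i$ and $k\leq N-i$ transported through the reindexing. No gaps.
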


We propose an upper bound on the integral of $p$th order derivative of Bernstein basis.
\begin{cor}
\label{cor_integral_derBB} For every $p\in[1:N-1]$ and
$i\in[p+1:N-(p+1)]$ such that $i+p\leq N$,
the Bernstein basis satisfies
\[
\int_{0}^{1}\left|\frac{d^{p}B_{N,i}(x)}{dx^{p}}(t)\right|dt\leq\frac{N!}{(N-p)!}\cdot\frac{2^{p}}{N+1}.
\]
\end{cor}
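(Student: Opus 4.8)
The plan is to feed the closed-form $p$-th derivative of Lemma~\ref{lem_pder_Bni} into the elementary integral identity of Lemma~\ref{lem_int_n1} and integrate term by term. Writing
\[
\frac{d^{p}B_{N,i}(x)}{dx^{p}}=\frac{N!}{(N-p)!}\sum_{k=\max\{0,i+p-N\}}^{\min\{i,p\}}(-1)^{k+p}\binom{p}{k}B_{N-p,i-k}(x),
\]
I would first check that the hypotheses collapse the summation range to the full block $[0:p]$: since $i\ge p+1$ we get $\min\{i,p\}=p$, and since $i+p\le N$ we get $\max\{0,i+p-N\}=0$. Consequently every index satisfies $1\le i-p\le i-k\le i\le N-p$, so each $B_{N-p,i-k}$ is a bona fide degree-$(N-p)$ Bernstein basis function to which Lemma~\ref{lem_int_n1} applies.

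Next I would discard the signs by the triangle inequality. Each $B_{N-p,i-k}$ is nonnegative on $[0,1]$ and $|(-1)^{k+p}|=1$, so
\[
\int_{0}^{1}\left|\frac{d^{p}B_{N,i}(x)}{dx^{p}}\right|dx\le\frac{N!}{(N-p)!}\sum_{k=0}^{p}\binom{p}{k}\int_{0}^{1}B_{N-p,i-k}(x)\,dx.
\]
By Lemma~\ref{lem_int_n1} each inner integral equals $\tfrac{1}{N-p+1}$, and the remaining binomial sum is exactly $\sum_{k=0}^{p}\binom{p}{k}=2^{p}$, which pins down the numerator $2^{p}$ claimed in the statement.

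The step I expect to be delicate is the denominator. Term-by-term integration produces $\tfrac{1}{N-p+1}$, whereas the target asserts the smaller $\tfrac{1}{N+1}$; since $N-p+1\le N+1$, the triangle inequality alone is too lossy to reach the stated constant. Recovering the extra factor requires using the sign alternation $(-1)^{k+p}$ that the triangle inequality throws away --- concretely, that $B_{N,i}^{(p)}$ has at most $p$ sign changes on $(0,1)$, so that $\int_0^1|B_{N,i}^{(p)}|$ reduces to the total variation of $B_{N,i}^{(p-1)}$ and can be evaluated telescopically at the interior extrema rather than bounded crudely. This variation-diminishing refinement, not the crude summation, is where the real work lies; for the small orders $p\in\{1,2\}$ actually used downstream the two bounds differ only in lower-order terms, so the elementary estimate already suffices for the application.
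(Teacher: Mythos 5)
Your argument follows the paper's proof step for step: expand $B_{N,i}^{(p)}$ via Lemma~\ref{lem_pder_Bni}, observe that the hypotheses $i\geq p+1$ and $i+p\leq N$ collapse the summation range to $[0:p]$ with every index $i-k$ lying in $[0:N-p]$, apply the triangle inequality, and integrate each $B_{N-p,i-k}$ using Lemma~\ref{lem_int_n1}. The delicate point you flag is real, and in fact the paper's own one-line proof stumbles on exactly it: the displayed chain ends with the equality $\sum_{k=0}^{p}\binom{p}{k}\int_{0}^{1}B_{N-p,i-k}(t)\,dt=\frac{2^{p}}{N+1}$, whereas Lemma~\ref{lem_int_n1} applied at degree $N-p$ gives $\int_{0}^{1}B_{N-p,i-k}(t)\,dt=\frac{1}{N-p+1}$, so this route only establishes $\frac{N!}{(N-p)!}\cdot\frac{2^{p}}{N-p+1}$. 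You are right that the triangle inequality, having discarded the alternating signs, cannot recover the missing factor $\frac{N-p+1}{N+1}$; the paper does not supply any such refinement, it simply writes the wrong denominator.

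That said, the variation-diminishing argument you sketch is not needed and is not what the paper intends. The corollary is invoked only with $p\in\{1,2\}$ (Cases 2 and 3 in the proof of Theorem~\ref{thmMDBD}, Lemma~\ref{lem_Approx_Bni}, and Lemma~\ref{lem_approxExpFam}), and there the honest denominator changes nothing: for $p=2$ one obtains $\frac{4N(N-1)}{N-1}=4N$ in place of $\frac{4N(N-1)}{N+1}\leq 4N$, and the estimate $\frac{1}{8T^{2}}\cdot 4N=\frac{N}{2T^{2}}$ in Lemma~\ref{lem_Approx_Bni} is reproduced exactly; the bounds $2\phiw N$, $\phiw N$, $4\phiw N$ in Theorem~\ref{thmMDBD} likewise survive unchanged. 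The clean fix is therefore to restate Corollary~\ref{cor_integral_derBB} with $\frac{2^{p}}{N-p+1}$ on the right-hand side and keep your elementary argument, rather than to pursue the sharper constant via sign changes. In short: your proof is correct for the (slightly weaker) bound that the paper's own argument actually delivers, and you have identified a genuine, if harmless, error in the paper's stated constant.
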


\begin{proof}
We combine Lemma \ref{lem_int_n1} and Lemma \ref{lem_pder_Bni} to obtain
\[
\int_{0}^{1}\left|\frac{d^{p}B_{N,i}(x)}{dx^{p}}(t)\right|dt\leq\frac{N!}{(N-p)!}\sum_{k=0}^{p}{p \choose k}\cdot\int_{0}^{1}B_{N-p,i-k}(t)dt=\frac{N!}{(N-p)!}\cdot\frac{2^{p}}{N+1}.
\]
\end{proof}

We are now ready to prove Theorem~\ref{thmMDBD}.

\begin{proof}[Proof of Theorem~\ref{thmMDBD}]
Recall that $F_i(x)=w(x)B_{N,i}(x)$. By Theorem \ref{thm_approx_Def_Intergral} we have
\[
\left|\sum_{k=1}^{T}L_{k}\right|=\left|\frac{1}{2}\sum_{k=1}^{T}\int_{x_{k-1}}^{x_{k}}\left[\frac{1}{4T^{2}}-(t-c_{k})^{2}\right]\cdot\frac{d^{2}F_i(x)}{dx^{2}}(t)dt\right|\leq\frac{1}{8T^{2}}\int_{0}^{1}\left|\frac{d^{2}F_i(x)}{dx^{2}}(t)\right|dt.
\]
Since $|\frac{d^{2}F_i(x)}{dx^{2}}|=w^{\prm\prm}\cdot B_{N,i}+2\cdot w^{\prm}\cdot B_{N,i}^{\prm}+w\cdot B_{N,i}^{\prm\prm}$,
we consider following three cases:\\
\textbf{Case 1:} We combine $\max_{x\in[0,1]}|w^{\prm\prm}(x)|\leq 2\phiw\cdot N^{2}$
and Lemma \ref{lem_int_n1} to obtain
\[
\int_{0}^{1}|w^{\prm\prm}(t)\cdot B_{N,i}(t)|dt\leq 2\phiw\cdot N.
\]
\textbf{Case 2:} Using $\max_{x\in[0,1]}|w^{\prm}(x)|\leq\frac{1}{2}\phiw\cdot N$
and Corollary \ref{cor_integral_derBB} it holds
\[
\int_{0}^{1}|w^{\prm}(t)\cdot B_{N,i}^{\prm}(t)|dt\leq \phiw\cdot N.
\]
\textbf{Case 3:} Combining $\max_{x\in[0,1]}|w(x)|\leq \phiw$
and Corollary \ref{cor_integral_derBB} yields
\[
\int_{0}^{1}|w(t)\cdot B_{N,i}^{\prm\prm}(t)|dt \leq \phiw\cdot \int_{0}^{1}B_{N,i}^{\prm\prm}(t)dt \leq 4\phiw\cdot N.
\]
The desired result follows from the preceding three cases and Theorem~\ref{thm_Hald68}.
\end{proof}

\section{Approximating Discretized PDF}\label{sec:approxDiscrPDFs}

In this section we prove Theorem~\ref{thm_AppDscrPDF}. We begin our discussion by presenting the pseudo code of algorithm $\AppDscrPDF$.

\begin{algorithm}[H]
\caption{Approximate Discretized PDF by MDBD}

$\BNTap=\AppDscrPDF(w,\phiw,N,\eps_{I})$

1. Compute $S_{N+1,N}=\sum_{i=0}^{N}w(i/N)$ and $S_{T,T+1}=\sum_{i=1}^{T}w(i/[T+1])$, where $T=\lceil N\sqrt{\phiw/\eps_{I}} \rceil$.

2. Compute $\alpha_{j} = w(p_{j})\cdot N/[(T+1)\cdot S_{N+1,N}]$,
where $p_{j}=j/[T+1]$ for all $j\in[1:T]$.

3. Return $\BNTap$.
\end{algorithm}

Before we prove Theorem~\ref{thm_AppDscrPDF}, we analyze the class of continuous probability density functions that admit a discretized approximation by $\MDBD$.

\begin{lem}
\label{lem_RatioLambda} Let $w(x)=C\cdot f(x)$ be a twice differentiable p.d.f. such that for $N\in\N_+$ it holds $a)\;0\leq f(x)\leq1$,\,\,\,\,\,\,$b)\;\frac{1}{2}[f(0)+f(1)] \geq \Omega(1)$,\,\,\,\,\,\,$c)\;1\leq C\leq o(N)$,\,\,\,\,\,\,$ and $\,\,\,\,\,\,$d)\;\int_{0}^{1}\left|f^{(2)}(x)\right|dx\leq o(N)$.
Then, for $T=\lceil N\sqrt{\phiw/\epsI} \rceil$ with $\phiw/\epsI\geq 1$ it holds
\[
1-\frac{S_{T,T+1}/(T+1)}{S_{N+1,N}/N}=o(1),\,\,\text{where}\,\,\, S_{T,T+1}\triangleq\sum_{k=1}^{T}w(k/[T+1])\,\,\,\text{and}\,\,\, S_{N+1,N}\triangleq\sum_{k=0}^{N}w(k/N).
\]
\end{lem}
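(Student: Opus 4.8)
The plan is to compare both normalized sums against the integral $I\triangleq\int_{0}^{1}w(t)\,dt$ by recognizing each as a trapezoidal-rule approximation plus a boundary correction, and then to show that each approximates $I$ with \emph{relative} error $o(1)$; the ratio is then $1+o(1)$, which gives the claim. The only hypotheses that will be used nontrivially are $(b)$ and $(d)$, which serve to lower-bound $I$.

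First I would rewrite the two quantities in the notation of Theorem~\ref{thm_approx_Def_Intergral}. Since $A_{M}(w)=\frac{1}{M}[\tfrac{1}{2}(w(0)+w(1))+\sum_{i=1}^{M-1}w(i/M)]$, pulling the endpoint terms out of the finite sums gives
\[
\frac{S_{N+1,N}}{N}=A_{N}(w)+\frac{w(0)+w(1)}{2N},\qquad
\frac{S_{T,T+1}}{T+1}=A_{T+1}(w)-\frac{w(0)+w(1)}{2(T+1)}.
\]
Theorem~\ref{thm_approx_Def_Intergral} bounds $|A_{M}(w)-I|\le \frac{1}{8M^{2}}\int_{0}^{1}|w''(t)|\,dt$ for $M\in\{N,T+1\}$, and since $w=Cf$ we have $\int_{0}^{1}|w''|=C\int_{0}^{1}|f^{(2)}|\le C\cdot o(N)$ by hypothesis $(d)$. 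As $T+1\ge N$ (using $\phiw/\epsI\ge 1$), both trapezoidal errors are at most $\tfrac{C}{8N^{2}}\cdot o(N)$, and each boundary term is at most $\tfrac{C}{N}$ because $0\le f\le 1$. Hence both $S_{N+1,N}/N$ and $S_{T,T+1}/(T+1)$ equal $I$ up to an additive error of order $O(C/N)$.

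The hard part is converting these additive bounds into relative ones, which requires a lower bound on $I=C\int_{0}^{1}f$, i.e.\ showing $N\int_{0}^{1}f\to\infty$. Writing $V\triangleq\int_{0}^{1}|f^{(2)}|=o(N)$, the total variation of $f'$ is at most $V$, while its mean value $\int_{0}^{1}f'=f(1)-f(0)$ lies in $[-1,1]$ since $0\le f\le 1$; picking a point where $f'$ attains this mean and adding the variation bound yields $|f'(t)|\le 1+V$ for all $t$, so $f$ is $(1+V)$-Lipschitz. Hypothesis $(b)$ forces $\max\{f(0),f(1)\}\ge\Omega(1)$, so by Lipschitz continuity $f$ stays above a positive constant on an interval of width $\Omega\!\left(1/(1+V)\right)$ adjacent to that endpoint, giving $\int_{0}^{1}f\ge\Omega\!\left(1/(1+V)\right)=\Omega(1/V)$. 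Since $V=o(N)$, this yields $N\int_{0}^{1}f\to\infty$, as needed. This Lipschitz/endpoint estimate is the main obstacle; the rest is routine.

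Combining, the relative error of each approximation is at most $\frac{O(C/N)}{C\cdot\Omega(1/V)}=O(V/N)+O(V^{2}/N^{2})=o(1)$, so both $S_{N+1,N}/N$ and $S_{T,T+1}/(T+1)$ equal $I\,(1+o(1))$. Their ratio is therefore $1+o(1)$ (the denominator stays bounded away from $0$), and consequently $1-\frac{S_{T,T+1}/(T+1)}{S_{N+1,N}/N}=o(1)$, completing the argument. Everything reduces to the trapezoid error estimate of Theorem~\ref{thm_approx_Def_Intergral} together with the structural hypotheses $0\le f\le 1$, $1\le C\le o(N)$, and the integral lower bound obtained above.
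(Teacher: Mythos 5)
Your proof is correct and follows the same trapezoid-rule skeleton as the paper (split each normalized sum into $A_{M}(\cdot)$ plus a boundary correction, invoke Theorem~\ref{thm_approx_Def_Intergral}, then pass from additive to relative error), but it diverges in the one step you call ``the hard part.'' The paper gets the lower bound on the integral for free: since $w=Cf$ is a p.d.f.\ on $[0,1]$, $\int_{0}^{1}w=1$ and hence $\int_{0}^{1}f=C^{-1}\geq 1/o(N)$ directly from hypothesis $(c)$, so the relative error is simply $O(C/N)=o(1)$. You instead re-derive a lower bound $\int_{0}^{1}f\geq\Omega(1/(1+V))$ with $V=\int_{0}^{1}|f^{(2)}|$ via a Lipschitz-plus-endpoint argument from $(b)$ and $(d)$; this is valid (and in fact shows the hypotheses are somewhat redundant, since it forces $C\leq O(1+V)$), but it is an unnecessary detour and yields the slightly weaker relative error $O((1+V)/N)$ — still $o(1)$, so nothing breaks. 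One caveat worth flagging: because you treat numerator and denominator symmetrically as $I(1+o(1))$, you only obtain $|1-\Lambda|=o(1)$ and lose the sign information. The paper's asymmetric bookkeeping of the boundary terms ($-d/(T+1)$ in the numerator versus $+d/N$ in the denominator, with $d=[f(0)+f(1)]/2\geq\Omega(1)$) is what shows $\Lambda\leq 1$, i.e.\ $\delta_{w}\geq 0$, which is needed downstream in Theorem~\ref{thm_AppDscrPDF} for $\BNTap$ to be a legitimate $\MDBD$ with $\sum_{j}\alpha_{j}\leq 1$. For the lemma as literally stated your argument suffices, but you would need to add that sign observation to support the theorem that uses it.
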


\begin{proof}
By Theorem~\ref{thm_approx_Def_Intergral} for the discrete approximator of $f$
\[
A_{M}(f)=\frac{1}{M}\left[\frac{1}{2}\left(f(0)+f(1)\right)+\sum_{i=1}^{M-1}f\left(\frac{i}{M}\right)\right],
\]
it holds that
\[
\left|A_{T+1}(f)-\int_{0}^{1}f(t)dt\right|\leq\frac{1}{8\left(T+1\right)^{2}}\int_{0}^{1}\left|f^{(2)}(t)\right|dt\lesssim\frac{\epsI}{\phiw}\cdot o\left(\frac{1}{N}\right),
\]
and similarly
\[
\left|A_{N}(f)-\int_{0}^{1}f(t)dt\right|\leq\frac{1}{8N^{2}}\int_{0}^{1}\left|f^{(2)}(t)\right|dt\lesssim o\left(\frac{1}{N}\right).
\]
By definition, $\int_{0}^{1}f(t)dt=C^{-1}\in(1/o(N),1]$
and thus
\[
A_{T+1}(f)\in\left[\frac{1}{C}\pm\frac{\epsI}{\phiw}\cdot o\left(\frac{1}{N}\right)\right],\quad\text{and}\quad A_{N}(f)\in\left[\frac{1}{C}\pm o\left(\frac{1}{N}\right)\right].
\]
Let $d\triangleq[f(0)+f(1)]/2$. Straightforward checking shows that
\[
\frac{S_{T,T+1}}{T+1}=C\cdot\left[A_{T+1}(f)-\frac{d}{T+1}\right] \quad\text{and}\quad \frac{S_{N+1,N}}{N}=C\cdot\left[A_{N}(f)+\frac{d}{N}\right].
\]
We prove now the upper bound. By assumption $d\in[\Omega(1),1]$ and since $C\leq o(N)$ we have
\begin{eqnarray*}
 &  & \Lambda \triangleq \frac{S_{T,T+1}/(T+1)}{S_{N+1,N}/N} = \frac{A_{T+1}(f)-\frac{d}{T+1}}{A_{N}(f)+\frac{d}{N}} = \frac{\frac{1}{C}-\frac{d}{T+1}\pm\frac{\epsI}{\phiw}\cdot o\left(\frac{1}{N}\right)}{\frac{1}{C}+\frac{d}{N}\pm o\left(\frac{1}{N}\right)}\\
 & \leq & 1-\frac{\left(1+\frac{1}{2}\sqrt{\frac{\epsI}{k}}\right)\cdot\frac{d}{N} - \left(1+\frac{\epsI}{k}\right)\cdot o\left(\frac{1}{N}\right)}{\frac{1}{C} + \left[\frac{d}{N} - o\left(\frac{1}{N}\right)\right]} \leq 1 - \frac{\Omega(1)}{N}\cdot\frac{1}{\frac{1}{C} + \frac{1}{N}} = 1-o(1).
\end{eqnarray*}
We can prove the lower bound $\Lambda\geq1-o(1)$ using similar arguments.
\end{proof}

We present now the proof of Theorem~\ref{thm_AppDscrPDF}.

\begin{proof}[Proof of Theorem~\ref{thm_AppDscrPDF}]
By definition $S_{T,T+1}=\sum_{k=1}^{T}w(j/[T+1])$, $S_{N+1,N}=\sum_{j=0}^{N}w(j/N)$ and the desired discretized p.d.f. is
\[
\hw(i/N)=\frac{w(i/N)}{\sum_{j=0}^{N}w(j/N)}=\frac{w(i/N)}{S_{N+1,N}}.
\]
By Theorem~\ref{thmMDBD}, it holds for all $i\in\left[3:N-3\right]$ that
\begin{equation}\label{eq:LAD}
\left|\left(1+\eta_{i}\right)\frac{w(i/N)}{N} - \sum_{j=1}^{T}\frac{w(j/[T+1])}{T+1}\cdot B_{N,i}(j/[T+1])\right| \leq \frac{\epsI}{N}.
\end{equation}
We construct now $\MDBD$ $\BNTap$ as follows: for all $j\in[1:T]$ we set
\[
\alpha_{j}=\frac{w(p_{j})\cdot N}{S_{N+1,N}\cdot(T+1)},\quad\text{where}\quad p_{j}=\frac{j}{T+1}.
\]
Moreover, $\BNTap$ induces a vector $\gamma$ that satisfies $\gamma_{i}=\sum_{j=1}^{T}\alpha_{j}\cdot B_{N,i}(p_j)$ for all $i\in[0:N]$. By multiplying Equation \ref{eq:LAD} with $N/S_{N+1,N}$ we obtain
\begin{equation}\label{eq:approxGamma}
\left|\left(1+\eta_{i}\right)\hw(i/N)-\gamma_{i}\right|\leq \epsI/S_{N+1,N}.
\end{equation}
Furthermore, since
\[
\sum_{j=1}^{T}\alpha_{j} = \frac{N}{(T+1)\cdot S_{N+1,N}}\sum_{j=1}^{T}w(p_j)=\frac{S_{T,T+1}/(T+1)}{S_{N+1,N}/N},
\]
by Lemma \ref{lem_RatioLambda} there is a small positive number $\delta_w=o(1)$
such that
\[
\delta_w=1-\frac{S_{T,T+1}/(T+1)}{S_{N+1,N}/N}=1-\sum_{j=1}^{T}\alpha_{j}.
\]
Hence, we have
\begin{eqnarray*}
 &  & \sum_{i=0}^{N}\gamma_{i}=\sum_{i=0}^{N}\sum_{j=1}^{T}\alpha_{j}\cdot B_{N,i}(p_j) = \sum_{j=1}^{T}\alpha_{j}\sum_{i=0}^{N}B_{N,i}(p_j)=\sum_{j=1}^{T}\alpha_{j}=1-\delta_w.
\end{eqnarray*}
Since $\delta_w=o(1)$, by Equation~\ref{eq:approxGamma} it follows that $\gamma/[1-\delta_w]$ is a discretized probability distribution over $[0:N]$ that approximates component-wise the desired discretized p.d.f. $\hw$.

We note that the summations $S_{N+1,N}$ and $S_{T,T+1}$ can be computed in $O(N+T)$ work and $O(\log N+\log T)$ depth. Hence, the statement follows.
\end{proof}

\section{Efficient Parallel Solver for Transpose Bernstein-Vandermonde Systems}\label{sec:StBVls}

\begin{problem}\label{prob_ExN_Bnk}
Suppose a vector $\gamma\in(0,1)^{N+1}$ is
induced by a convex combination of exactly $N+1$ discrete Binomial
distributions $B(p_{i},N)$ such that $0<p_{i}\neq p_{j}<1$ for all $i\neq j$.
Find the unique vector $\alpha\in(0,1)^{N+1}$ such that $\gamma_{i}=\sum_{j=1}^{N+1}\alpha_{j} \cdot B_{N,i}(p_{j})$
for all $i\in[0:N]$.
\end{problem}

The Bernstein basis is a well studied primitive in the literature for polynomial interpolations~\cite{UN03,CUN02}. It is defined by $B_{N,k}(p)={N \choose k}p^{k}(1-p)^{N-k}$
for any $k\in[0:N]$. Let $\BNTap$ be $\MDBD$ with $T=N+1$. Then the Bernstein basis matrix is defined by $[\mB_{N}(p)]_{ji}=B_{N,i}(p_{j}),\,\,\forall i,j\in[1:N+1]$, and it has a full rank (c.f. Appendix \ref{appsec:BBM}). Moreover, the vector $\alpha$ is the unique solution of the linear system $\mB_{N}(p)^{\rot}\alpha=\gamma$.

In this section, we give an efficient parallel algorithm that solves Problem~\ref{prob_ExN_Bnk} and works in nearly linear work and poly-logarithmic depth. Our goal now is to prove Theorem~\ref{thm_my_Inv_Bnp}. We reduce a transpose Bernstein-Vandermonde system to a transpose Vandermonde system that can be solved efficiently and in parallel by a variation of an algorithm proposed by Gohberg and Olshevsky~\cite{GO94}. We present now their main algorithmic result.

\begin{thm}\cite{GO94}\label{thm_inv_VT}
There is an algorithm that on input two vectors
$\alpha,p\in\R^{N+1}$ such that $0<p_{i}\neq p_{j}<1$ for all
$i\neq j$, outputs the vector $\gamma=\Vp^{\rot}\alpha$ in $O(N\log^{2}N)$ time.
\end{thm}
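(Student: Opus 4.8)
The plan is to realize the Gohberg--Olshevsky algorithm through the generating-function viewpoint, which recasts the transposed Vandermonde product as arithmetic on a single rational power series. Here $\Vp$ is the Vandermonde matrix with $[\Vp]_{ik}=p_i^{\,k}$ for $i\in[1:N+1]$ and $k\in[0:N]$, so the target has entries $\gamma_k=(\Vp^{\rot}\alpha)_k=\sum_{i=1}^{N+1}\alpha_i\,p_i^{\,k}$ for $k\in[0:N]$; these are exactly the $\alpha$-weighted power sums of $p$. The key observation is that the formal generating function of $\gamma$ admits a closed rational form:
\[
G(x)\triangleq\sum_{k\geq0}\gamma_k\,x^k=\sum_{i=1}^{N+1}\alpha_i\sum_{k\geq0}(p_i x)^k=\sum_{i=1}^{N+1}\frac{\alpha_i}{1-p_i x}=\frac{P(x)}{Q(x)},
\]
where $Q(x)=\prod_{i=1}^{N+1}(1-p_i x)$ has degree $N+1$ with $Q(0)=1$, and $P(x)=\sum_{i=1}^{N+1}\alpha_i\prod_{l\neq i}(1-p_l x)$ has degree at most $N$. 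This holds as an identity of formal power series, so $\gamma_0,\dots,\gamma_N$ are precisely the first $N+1$ coefficients of $P/Q$.

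First I would compute both $P$ and $Q$ by a single divide-and-conquer pass over a balanced subproduct tree on the index set $[1:N+1]$. Each leaf $i$ stores the pair $(\alpha_i,\,1-p_i x)$ representing the summand $\alpha_i/(1-p_i x)$, and a node merging children that carry $(P_L,Q_L)$ and $(P_R,Q_R)$ stores $Q=Q_L Q_R$ and $P=P_L Q_R+P_R Q_L$, so that it represents $P_L/Q_L+P_R/Q_R$. Each merge uses a constant number of FFT-based polynomial multiplications; at tree level $\ell$ there are $O(N/2^\ell)$ nodes, each multiplying polynomials of degree $O(2^\ell)$ at cost $O(2^\ell\,\ell)$, so the level contributes $O(N\ell)$, and summing over the $O(\log N)$ levels gives $O(N\log^2 N)$ total. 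The root returns $P$ and $Q$.

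Next I would extract the coefficients by power-series division. Since $Q(0)=1$, the inverse $Q(x)^{-1}\bmod x^{N+1}$ exists and is computed by Newton iteration in $O(N\log N)$ time; a single truncated multiplication $P(x)\cdot Q(x)^{-1}\bmod x^{N+1}$ then yields $\gamma_0,\dots,\gamma_N$ in $O(N\log N)$ further time. The overall running time is dominated by the subproduct-tree phase, giving $O(N\log^2 N)$ as claimed.

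I expect the main obstacle to be the complexity bookkeeping for the numerator pass: computed naively, $P(x)=\sum_i\alpha_i\prod_{l\neq i}(1-p_l x)$ costs $\Theta(N^2)$, and the $O(N\log^2 N)$ bound hinges on reusing the partial products $Q_L,Q_R$ already formed for the denominator tree and on amortizing the FFT multiplications geometrically across levels. The remaining ingredients are standard: correctness of Newton-iteration inversion, and the fact that distinctness of the $p_i$ plays no role in the forward direction (only $Q(0)=1$ is needed for power-series invertibility; distinctness matters for the inverse problem). The same recursion parallelizes with poly-logarithmic depth, which is what the downstream reduction in Section~\ref{sec:StBVls} exploits.
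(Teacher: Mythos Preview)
Your argument is correct and self-contained: the identification of $\gamma_k=\sum_i\alpha_i q_i^{\,k}$ with the coefficients of the rational series $P/Q$, the subproduct-tree computation of $(P,Q)$ in $O(N\log^2 N)$, and the Newton-iteration power-series division in $O(N\log N)$ are all standard and correctly costed. One cosmetic point: in this paper $\Vp$ is defined (see Lemma~\ref{lem_BB_DVpD}) with nodes $q_j=p_j/(1-p_j)$ rather than $p_j$, so your leaves should carry $(\alpha_j,\,1-q_j x)$; this changes nothing in the analysis.

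Your route, however, is genuinely different from what the paper invokes. The paper does not prove Theorem~\ref{thm_inv_VT} at all; it simply cites \cite[Algorithm~2.1]{GO94}, which is based on a displacement-structure factorization of $\Vp^{\rot}$ into a short product of structured (Toeplitz-like / Cauchy-like) factors, each of which supports $O(N\log N)$ matrix--vector products, yielding the same $O(N\log^2 N)$ total. Your generating-function/subproduct-tree approach is the classical computer-algebra alternative: it is more elementary, avoids the displacement machinery entirely, and makes the parallelization transparent (the tree has $O(\log N)$ levels and Newton inversion has $O(\log N)$ iterations, each an FFT), which is exactly what the downstream Theorem~\ref{thm_inv_pl_VT} needs. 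The Gohberg--Olshevsky decomposition, on the other hand, is what the paper actually leans on when it says the algorithm ``can be amended'' to compute $[\Vp^{\rot}]^{-1}\gamma$ as well; your approach would handle that inverse direction by the dual recursion (partial-fraction recovery of the $\alpha_i$ from $P$ and the subproduct tree of $Q$), which is equally standard but not the path the cited reference takes.
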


Theorem~\ref{thm_inv_VT} follows by~\cite[Algorithm 2.1]{GO94} which relies on a non-trivial matrix decomposition of $\Vp^{\rot}$ to compute in $O(N\log^{2}N)$ time the desired  matrix-vector product. It can be easily verified that~\cite[Algorithm 2.1]{GO94} can be amended to compute the vector $\alpha=[\Vp^{\rot}]^{-1}\gamma$ in $O(N\log^{2}N)$ time. Furthermore, straightforward checking shows that this modified algorithm can be easily parallelized. We summarize below the resulting parallel algorithm.

\begin{thm}\cite{GO94}\label{thm_inv_pl_VT}
There is a parallel algorithm that on input two vectors $\gamma,p \in\R^{N+1}$ as in Theorem \ref{thm_inv_VT}, outputs the vector $\alpha=[\Vp^{\rot}]^{-1}\gamma$ in $O(N\log^{2}N)$ work and $O(\log^{c}n)$ depth, for some constant $c\in\N_+$.
\end{thm}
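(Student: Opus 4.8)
The plan is to start from the sequential algorithm underlying Theorem~\ref{thm_inv_VT} and show that its amended, inverse-computing variant parallelizes. Recall that Gohberg and Olshevsky~\cite{GO94} obtain their $O(N\log^{2}N)$-time procedure for applying $\Vp^{\rot}$ by factoring this matrix into a short product of structured factors: diagonal scalings, triangular Toeplitz (equivalently, convolution) operators, and a Cauchy-type operator whose action is realized by divide-and-conquer polynomial arithmetic built on the subproduct tree of the distinct nodes $p_{1},\dots,p_{N+1}$ (distinctness, i.e. $0<p_{i}\neq p_{j}<1$, guarantees invertibility of $\Vp$). As noted after Theorem~\ref{thm_inv_VT}, each factor is itself invertible within the same structured class: diagonal scalings invert entrywise, the triangular Toeplitz factors invert by a truncated power-series convolution, and the Cauchy-type factor --- whose forward action is a fast multipoint evaluation on the subproduct tree --- inverts to the transposed fast-interpolation map on that same tree, of identical cost by the transposition principle. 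Hence $[\Vp^{\rot}]^{-1}$ factors into $O(1)$ structured operators, each applied to a vector in $O(N\log^{2}N)$ sequential time, matching the amendment of Algorithm~2.1 of~\cite{GO94} discussed above.

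First I would isolate the atomic computational primitives appearing in these factors: (i) length-$O(N)$ FFTs and inverse FFTs; (ii) polynomial multiplication and truncated power-series inversion via FFT; (iii) construction of the subproduct tree $\{\,\prod_{j\in S}(x-p_{j})\,\}$; and (iv) the up- and down-sweeps implementing fast multipoint evaluation and interpolation. For (i) I would invoke the standard fact that an FFT of length $m$ runs in $O(m\log m)$ work and $O(\log m)$ depth, from which (ii) inherits $O(m\log m)$ work and $O(\log m)$ depth. This already parallelizes every diagonal and Toeplitz factor and their inverses with the desired work and poly-logarithmic depth.

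Next I would parallelize the tree-based primitives (iii) and (iv). The subproduct tree has $O(\log N)$ levels, and at each level the node polynomials are formed independently as pairwise products, so an entire level is a batch of FFT-based multiplications executed in one parallel round. Summing the FFT costs over a level gives $O(N\log N)$ work and $O(\log N)$ depth per level, hence $O(N\log^{2}N)$ work and $O(\log^{2}N)$ depth to build the whole tree; the evaluation and interpolation sweeps perform one FFT-based remainder or combination step at each node and therefore obey the same per-level bounds, yielding $O(N\log^{2}N)$ work and $O(\log^{c}N)$ depth. Composing the $O(1)$ structured factors of $[\Vp^{\rot}]^{-1}$ preserves these bounds, so the total work is $O(N\log^{2}N)$ and the depth is $O(\log^{c}N)$, exactly as claimed.

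The main obstacle is controlling the depth of the recursive tree primitives (iii)--(iv): one must verify that at every level all node computations are genuinely independent, so they run in a single parallel round, and that the per-level combination step is one FFT-based convolution of poly-logarithmic depth. This is what makes the depths across the $O(\log N)$ recursion levels \emph{add} to $O(\log^{2}N)$ rather than compound multiplicatively. The remaining factors (diagonal, Toeplitz, and their inverses) are immediate. Since each of the $O(1)$ structured factors in the Gohberg--Olshevsky decomposition reduces to these primitives, the parallelization of the amended Algorithm~2.1 of~\cite{GO94} follows, establishing Theorem~\ref{thm_inv_pl_VT}.
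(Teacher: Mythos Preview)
Your proposal is correct and follows the same approach the paper takes: the paper simply asserts that \cite[Algorithm~2.1]{GO94} ``can be amended'' to compute $[\Vp^{\rot}]^{-1}\gamma$ and that ``straightforward checking shows that this modified algorithm can be easily parallelized,'' without supplying details. You have spelled out that straightforward checking by identifying the structured factors (diagonal, triangular Toeplitz, subproduct-tree primitives) and bounding their parallel work and depth, which is exactly what the paper leaves implicit.
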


We prove now that the Bernstein basis matrix $\mB_{N}(p)$ admits the following decomposition.

\begin{lem}
\label{lem_BB_DVpD} Suppose $p\in(0,1)^{N+1}$ is vector
such that $0<p_{i}\neq p_{j}<1$ for all $i\neq j$, $\Vp$
is Vandermonde matrix defined by $[\Vp]_{ji}=(\frac{p_{j}}{1-p_{j}})^i$,
$D_{p}=\diag(\{ (1-p_{j})^{N}\} {}_{j=1}^{N+1})$
and $D_{CN}=\diag(\{ {N \choose i}\} _{i=0}^{N})$
are positive diagonal matrices. Then it holds that $\mBNp =D_{p}\cdot\Vp\cdot D_{CN}$.\end{lem}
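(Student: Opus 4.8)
The plan is to establish the factorization $\mBNp = D_p\cdot\Vp\cdot D_{CN}$ by a direct entrywise comparison, since every matrix involved is given by an explicit formula. Because $D_p$ and $D_{CN}$ are diagonal, left-multiplication by $D_p$ rescales each row $j$ of $\Vp$ by the scalar $(1-p_j)^{N}$, and right-multiplication by $D_{CN}$ rescales each column $i$ by the scalar $\binom{N}{i}$. Hence the $(j,i)$-th entry of the product is simply the product of these two scalars with the corresponding Vandermonde entry:
\[
[D_p\cdot\Vp\cdot D_{CN}]_{ji} = (1-p_j)^{N}\cdot\left(\frac{p_j}{1-p_j}\right)^{i}\cdot\binom{N}{i}.
\]

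The key algebraic step is then to recombine the powers on the right-hand side. Using $(1-p_j)^{N}\cdot(1-p_j)^{-i} = (1-p_j)^{N-i}$, the expression collapses to $\binom{N}{i}\,p_j^{\,i}\,(1-p_j)^{N-i}$, which is exactly $B_{N,i}(p_j) = [\mBNp]_{ji}$ by the definition of the Bernstein basis. Since this identity holds for every admissible pair $(j,i)$, the two matrices agree entrywise and are therefore equal.

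The only point that I would treat with a word of care is the index bookkeeping together with the positivity hypotheses. The column index $i$ ranges over $[0:N]$, so that the first column of $\Vp$ is the all-ones vector and $D_{CN}$ opens with $\binom{N}{0}=1$, matching the Bernstein-basis indexing used to define $\mBNp$. The assumption $0<p_j<1$ guarantees $1-p_j\neq 0$, so the ratio $p_j/(1-p_j)$ and the factor $(1-p_j)^{-i}$ are well defined, and both $D_p$ and $D_{CN}$ are genuinely positive diagonal matrices. I do not anticipate any real obstacle here: the lemma is merely a matrix-form packaging of the scalar identity $B_{N,i}(p) = (1-p)^{N}\binom{N}{i}\bigl(p/(1-p)\bigr)^{i}$, and its purpose is to reduce the transpose Bernstein--Vandermonde system $\mBNp^{\rot}\alpha=\gamma$ to a transpose Vandermonde system, which can then be solved in nearly linear work and poly-logarithmic depth via Theorem~\ref{thm_inv_pl_VT}.
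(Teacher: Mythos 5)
Your proof is correct and is essentially identical to the paper's one-line argument: both verify the identity entrywise, using that the diagonal factors rescale rows and columns of $\Vp$ so that the $(j,i)$-th entry collapses to $\binom{N}{i}p_j^{\,i}(1-p_j)^{N-i}=B_{N,i}(p_j)$. Your additional remarks on index ranges and the well-definedness of $p_j/(1-p_j)$ are sound but not needed beyond what the paper states.
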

\begin{proof}
By definition $[D_{p}\cdot\Vp\cdot D_{CN}]_{j,i} = (1-p_{j})^{N}(\frac{p_{j}}{1-p_{j}})^{i}{N \choose i} = B_{N,i}(p_{j}) = [\mB_{N}(p)]_{ji} $.
\end{proof}

We are ready now to prove Theorem \ref{thm_my_Inv_Bnp}.

\begin{proof}[Proof of Theorem~\ref{thm_my_Inv_Bnp}] By Lemma \ref{lem_full_Rank} the Bernstein matrix $\mBNp$ is invertible. Given a vector
$\gamma\in(0,1)^{N+1}$ we want to find the vector $\alpha=[\mBNp^{\rot}]^{-1}\gamma$.
By Lemma \ref{lem_BB_DVpD} we have $[\mBNp^{\rot}]^{-1} = [D_{p}]^{-1}\cdot[\Vp^{\rot}]^{-1}\cdot[D_{CN}]^{-1}$. Moreover, we can compute a vector $\gamma^{\prm}=[D_{CN}]^{-1}\gamma$
in $O(n)$ time. Using Theorem \ref{thm_inv_pl_VT}, we obtain a vector
 $\gamma^{\prm\prm}=[\Vp^{\rot}]^{-1}\gamma^{\prm}$
in $O(N\log^{2}N)$ work and $O(\log^{c}n)$ depth. The desired vector $\alpha=[D_{p}]^{-1}\gamma^{\prm\prm}$
takes further $O(n)$ work and $O(1)$ depth to compute.
\end{proof}

\subparagraph*{Acknowledgements}

We are grateful to Arijit Ghosh and Kunal Dutta for the helpful discussions on mixture of Binomial distributions, and to Arnur Nigmetov and Shay Moran for pointing us to the Bernstein interpolation polynomials. We would also like to thank Kurt Mehlhorn for the insightful comments and suggestions to the early version of the manuscript.

This work has been funded by the Cluster of Excellence ``Multimodal Computing and Interaction'' within the Excellence Initiative of the German Federal Government.



\bibliography{reference}




\appendix

\section{Generalized Escaping Probability}\label{appsec:GenEscProb}

\begin{proof}[Proof of Lemma~\ref{lem_mulApproxEGEP}]
By definition, for every vector $x$ the spectral sparsifier $D-\hA$ preserves approximately the quadratic form
\[
x^{\rot}(D-\hA)x\in[(1\pm\eps)\cdot x^{\rot}(D-D\mGg)x].
\]
Hence, the statement follows by applying the identities
\begin{eqnarray*}
\xi_{S}^{\rot}(D-D\mGg)\xi_{S} & = & \pi_{S}^{\rot}(I-\mGg)\mat{1}_{S} = 1-\pi_{S}^{\rot}\mGg\mat{1}_{S} = \pi_{S}^{\rot}\mGg\mat{1}_{\overline{S}}\\
 & = & \mathbb{E}_{v\sim\pi_{S}}\left[\mathrm{gEsc}(v,S,\mGg)\right].
\end{eqnarray*}
\end{proof}

\section{Spectral Sparsification of $\GL$-Matrices}\label{appsec:GLM}

Our proof of Lemma \ref{lem_SDDM_Closure} is based on the Perron-Frobenius Theorem~\cite{M73} for non-negative matrices.

\begin{thm}\label{thm_PF_NNM}\cite[Perron-Frobenius]{M73}
Suppose $A$ is symmetric nonnegative matrix. Then it has a nonnegative eigenvalue $\lambda$ which is greater than or equal to the modulus of all other eigenvalues.
\end{thm}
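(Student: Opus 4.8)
The plan is to exploit the symmetry of $A$: since $A$ is symmetric (and real), all its eigenvalues are real, so the spectral radius $\rho(A)=\max_i|\lambda_i|$ is attained by a real eigenvalue. The statement then reduces to showing that the largest eigenvalue $\lambda_{\max}(A)$ equals $\rho(A)$ and is nonnegative; once this is established, taking $\lambda=\lambda_{\max}(A)$ settles the claim, because $\lambda\geq0$ and $\lambda$ dominates the modulus of every other (real) eigenvalue.

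First I would invoke the Rayleigh variational characterization $\lambda_{\max}(A)=\max_{x\neq0}\frac{x^{\rot}Ax}{x^{\rot}x}$, which is available precisely because $A$ is symmetric. To obtain a lower bound, I would pick an eigenvalue $\lambda$ with $|\lambda|=\rho(A)$ together with a corresponding real unit eigenvector $v$ (real because the eigenspace of a real symmetric matrix for a real eigenvalue is real), and then test the Rayleigh quotient at the entrywise absolute value $|v|$. Using $A_{ij}\geq0$ and the triangle inequality gives
\[
|v|^{\rot}A|v|=\sum_{i,j}A_{ij}\,|v_i|\,|v_j|\ \geq\ \Big|\sum_{i,j}A_{ij}\,v_i v_j\Big|=|v^{\rot}Av|=|\lambda|=\rho(A),
\]
while $\||v|\|=\|v\|=1$, so the Rayleigh quotient at $|v|$ is at least $\rho(A)$. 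Hence $\lambda_{\max}(A)\geq\rho(A)$.

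The reverse inequality $\lambda_{\max}(A)\leq\rho(A)$ is immediate from the definition of the spectral radius, so $\lambda_{\max}(A)=\rho(A)\geq0$. Setting $\lambda=\lambda_{\max}(A)$ then produces a nonnegative eigenvalue that is at least the modulus of every other eigenvalue, which is exactly the assertion.

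The one place where care is needed — and the sole point at which the hypothesis on $A$ is used — is the entrywise-absolute-value step: it is precisely the nonnegativity of $A$ that lets me bound $|v|^{\rot}A|v|$ from below by $|v^{\rot}Av|$. I do not expect a genuine obstacle here, since symmetry collapses the argument to the Rayleigh quotient; for a general (non-symmetric) nonnegative matrix this conclusion is substantially harder and typically requires a fixed-point or compactness argument, but none of that machinery is needed for the symmetric case stated.
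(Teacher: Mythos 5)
Your proof is correct. Note, however, that the paper does not prove this statement at all: it is quoted verbatim from the reference \cite{M73} and used as a black box (specifically in the proof of Lemma~\ref{lem_SDDM_Closure}, where it is applied to the symmetric nonnegative matrix $X=\Dhm M\Dhm$ to conclude $\rho(X)<1$). So there is no in-paper argument to compare against; what you have supplied is a self-contained justification of the cited fact.

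As a comparison of substance: your argument is the standard variational one for the symmetric case. Symmetry gives real eigenvalues and the Rayleigh characterization $\lambda_{\max}(A)=\max_{\|x\|=1}x^{\rot}Ax$; testing at $|v|$ for a unit eigenvector $v$ of an eigenvalue of maximal modulus, and using $A_{ij}\geq 0$ with the triangle inequality, yields $\lambda_{\max}(A)\geq |v|^{\rot}A|v|\geq |v^{\rot}Av|=\rho(A)$, and the reverse inequality is trivial, so $\lambda_{\max}(A)=\rho(A)\geq 0$. Every step checks out. This is genuinely more elementary than the full Perron--Frobenius theorem for general nonnegative matrices (which needs a Collatz--Wielandt or fixed-point argument and delivers extra conclusions about the eigenvector), but it proves exactly the statement as phrased and covers the only case the paper actually invokes, since the matrix to which Theorem~\ref{thm_PF_NNM} is applied is always symmetric. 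The one hypothesis you rely on beyond symmetry is entrywise nonnegativity, and you correctly isolate it as the step $\sum_{i,j}A_{ij}|v_i||v_j|\geq\bigl|\sum_{i,j}A_{ij}v_iv_j\bigr|$.
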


\begin{lem}\label{lem_SDDM_Closure}
Suppose $D-M$ is $\SDDM$ matrix. Then $D-D(\Di M)^{N}$ is $\SDDM$ matrix $\forall N\in\N_{+}$.
\end{lem}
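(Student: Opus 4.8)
The plan is to reduce everything to the symmetrically normalized matrix $Y \triangleq \Dhm M \Dhm$, which is symmetric and, since $M$ is symmetric non-negative, entry-wise non-negative. A short telescoping computation gives $(\Di M)^{N} = \Dhm Y^{N}\Dhp$, and hence
$D - D(\Di M)^{N} = \Dhp(I - Y^{N})\Dhp$. It then suffices to verify the four defining properties of an $\SDDM$ matrix for $\Dhp(I - Y^{N})\Dhp$: symmetry, non-positive off-diagonal entries, diagonal dominance, and positive definiteness. Symmetry is immediate because $Y$, and therefore $Y^{N}$, is symmetric.

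For the off-diagonal entries I would use that $Y \geq 0$ entry-wise implies $Y^{N}\geq 0$ entry-wise, so the $(i,j)$ off-diagonal entry of $\Dhp(I-Y^{N})\Dhp$ equals $-\,D_{ii}^{1/2}[Y^{N}]_{ij}D_{jj}^{1/2}\leq 0$. For diagonal dominance, since the off-diagonals are already non-positive, it is equivalent to prove the row-sum condition $\bigl(D - D(\Di M)^{N}\bigr)\mathbf{1}\geq 0$, that is $P^{N}\mathbf{1}\leq\mathbf{1}$ for the non-negative matrix $P\triangleq\Di M$. The base case $P\mathbf{1}\leq\mathbf{1}$ is exactly the diagonal dominance of the given $\SDDM$ matrix $D-M$, and because $P$ is non-negative it is monotone for the componentwise order, so $P^{k+1}\mathbf{1}=P(P^{k}\mathbf{1})\leq P\mathbf{1}\leq\mathbf{1}$ propagates the bound by induction.

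The positive-definiteness step is where Theorem~\ref{thm_PF_NNM} enters, and it is the main obstacle. From the hypothesis that $D-M=\Dhp(I-Y)\Dhp$ is positive definite I only obtain directly that $\lambda_{\max}(Y)<1$; this says nothing about the most negative eigenvalue of $Y$, and for even $N$ an eigenvalue $\lambda\leq -1$ would destroy $I-Y^{N}\succ 0$. To rule this out I would invoke Perron--Frobenius: since $Y$ is symmetric and non-negative it has a non-negative eigenvalue that dominates the modulus of every other eigenvalue and coincides with the spectral radius, hence this eigenvalue equals $\lambda_{\max}(Y)<1$. Therefore every eigenvalue of $Y$ lies in $(-1,1)$, so $\|Y^{N}\|<1$, giving $I-Y^{N}\succ 0$ and thus $\Dhp(I-Y^{N})\Dhp\succ 0$.

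Collecting the four verified properties shows that $D-D(\Di M)^{N}$ is $\SDDM$ for every $N\in\N_{+}$. I expect the off-diagonal-sign and diagonal-dominance arguments to be routine (non-negativity of $Y$ and substochasticity of $P$), while the genuine content lies in using the non-negativity of $Y$ together with Perron--Frobenius to upgrade the one-sided bound $\lambda_{\max}(Y)<1$ into the two-sided spectral bound $|\lambda_i(Y)|<1$ needed for all powers $N$.
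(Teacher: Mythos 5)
Your proposal is correct and follows essentially the same route as the paper: both pass to the normalized matrix $Y=\Dhm M\Dhm$, use Perron--Frobenius to upgrade $\lambda_{\max}(Y)<1$ to $\rho(Y)<1$ so that $I-Y^{N}\succ 0$, and obtain diagonal dominance from $\Di M\mathbf{1}\preceq\mathbf{1}$. Your write-up is in fact slightly more explicit than the paper's on the routine parts (the sign of the off-diagonal entries and the induction $P^{k+1}\mathbf{1}\leq\mathbf{1}$), but there is no substantive difference.
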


\begin{proof}
Since $D-M$ is $\SDDM$ we have $D\succ M$. By Fact \ref{fact_fiveProps}.e it holds $I\succ \Dhm M \Dhm\triangleq X$. Hence, the largest eigenvalue $\lambda(X)<1$. By Theorem \ref{thm_PF_NNM} the spectral radius $\rho(X)<1$, i.e. $|\lambda_i(X)|<1$ for all $i$. Since $X$ is symmetric it has the form $X=\sum_{i}\lambda_{i}u_{i}u_{i}^{\rot}$. Moreover, we have $X^k=\sum_{i}\lambda_{i}^{k}u_{i}u_{i}^{\rot}$ for every $k\in\N_{+}$. Thus the spectral radius of $X^k$ satisfies $\rho(X^{k})=\rho(X)^{k}<1$.

Notice that $B_k = D-D(\Di M)^{k}$ is symmetric non-negative matrix for every $k\in\N_{+}$. By definition $D-M$ is diagonally dominant and thus $\Di M\mat{1}\preceq\mat{1}$ component-wise. This implies that $B_k$ is diagonally dominant matrix. Notice that $B_k = \Dhp[ I - X^{k} ]\Dhp$ and since $\rho(X^{k})<1$ it follows that $B_k$ is positive definite and hence $\SDDM$ matrix.
\end{proof}

\paragraph*{Proof of Lemma~\ref{lem_Closure}} We combine Lemma~\ref{lem_SDDM_Closure} with the following two statements.

\begin{fact}\label{fact_Lap_SDDM}
Spielman and Peng~\cite[Proposition 5.6]{PS14} showed that if $D-M$ is $\SDDM$ matrix, then $D-M\Di M$ is $\SDDM$ matrix. Also Cheng et al.~\cite[Proposition 25]{CCLPT15} showed that if $D-M$ is Laplacian matrix then $D-D(\Di M)^N$ is Laplacian matrix for every $N\in\N_{+}$.
\end{fact}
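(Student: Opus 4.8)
The statement packages two closure facts, one for $\SDDM$ and one for Laplacian matrices, and the plan is to handle them separately; in both cases I would pass to the symmetric normalization $X\triangleq\Dhm M\Dhm$ and invoke the Perron--Frobenius Theorem (Theorem~\ref{thm_PF_NNM}) to rule out eigenvalues of $X$ below $-1$. Throughout I use the paper's standing convention that $M$ is symmetric and entrywise non-negative, so that $X$ is symmetric and non-negative.

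For the $\SDDM$ assertion I would first observe that $D-M\Di M=D-D(\Di M)^{2}$, so this is precisely the $N=2$ instance of Lemma~\ref{lem_SDDM_Closure} and is already subsumed by that result. For completeness I would also record the direct four-property check, since it previews the Laplacian argument: writing $D-M\Di M=\Dhp(I-X^{2})\Dhp$, symmetry is immediate, the off-diagonals are non-positive because every entry of $M\Di M$ is non-negative, diagonal dominance follows from $(D-M)\mathbf{1}\geq 0$ (which gives $\Di M\mathbf{1}\leq\mathbf{1}$ and hence $M\Di M\mathbf{1}\leq M\mathbf{1}\leq D\mathbf{1}$), and positive definiteness follows since $D\succ M$ gives $\lambda_{\max}(X)<1$, whence $\rho(X)<1$ by Theorem~\ref{thm_PF_NNM} and thus $I-X^{2}\succ 0$.

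The Laplacian assertion is where the real work lies, and here I would exploit stochasticity. Because $D-M$ is Laplacian, $M\geq 0$ entrywise and $(D-M)\mathbf{1}=0$, so $W\triangleq\Di M$ is entrywise non-negative with $W\mathbf{1}=\mathbf{1}$; hence $W$, and every power $W^{N}$, is row-stochastic. I would then verify the four defining properties of $D-DW^{N}=D(I-W^{N})$ one by one. The only delicate point is \emph{symmetry}: although $D(\Di M)^{N}$ is an asymmetric-looking product, substituting $\Di M=\Dhm X\Dhp$ lets the interior factors $\Dhp\Dhm$ telescope, giving $D(\Di M)^{N}=\Dhp X^{N}\Dhp$, which is symmetric. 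The off-diagonals of $D-DW^{N}$ are non-positive since $W^{N}\geq 0$, and the row sums vanish since $W^{N}\mathbf{1}=\mathbf{1}$; positive semidefiniteness then reduces to $I-X^{N}\succeq 0$.

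The crux, and the step I expect to be the main obstacle, is establishing $\lambda_{i}(X)^{N}\leq 1$ for all $i$, i.e.\ ruling out eigenvalues of $X$ strictly below $-1$. Positive semidefiniteness of the Laplacian gives only $\lambda_{\max}(X)\leq 1$; to upgrade this to a bound on the spectral radius I would again appeal to Theorem~\ref{thm_PF_NNM}, which for the symmetric non-negative matrix $X$ identifies its largest eigenvalue with its spectral radius. Since $\mathbf{1}$ is an eigenvector of $W$ (equivalently $\Dhp\mathbf{1}$ is one of $X$) at eigenvalue $1$, this pins down $\rho(X)=\lambda_{\max}(X)=1$, so that every $|\lambda_{i}(X)|\leq 1$ and hence $\lambda_{i}(X)^{N}\in[-1,1]$. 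That yields $I-X^{N}\succeq 0$ and closes the Laplacian case.
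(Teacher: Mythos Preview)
Your argument is correct. Note, however, that the paper does not supply its own proof of Fact~\ref{fact_Lap_SDDM}: it is recorded purely as a citation to \cite[Proposition~5.6]{PS14} and \cite[Proposition~25]{CCLPT15}, and is invoked only to assemble Lemma~\ref{lem_Closure}. So there is no in-paper proof to compare against; what you have written is a self-contained verification that the cited facts hold.

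Two remarks on your write-up. First, your observation that the $\SDDM$ clause is the $N=2$ case of Lemma~\ref{lem_SDDM_Closure} is exactly right, and in fact the paper's own Lemma~\ref{lem_SDDM_Closure} (proved via the same Perron--Frobenius route you sketch) already subsumes it; the separate citation to \cite{PS14} in Fact~\ref{fact_Lap_SDDM} is redundant in the paper's logical structure. Second, for the Laplacian clause your Perron--Frobenius step establishing $I-X^{N}\succeq 0$ is correct but unnecessary: once you have shown that $D-DW^{N}$ is symmetric, has non-positive off-diagonal entries, and has zero row sums, positive semidefiniteness is automatic (any symmetric matrix with these three properties is a graph Laplacian and hence $\SPSD$). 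The essential content of the Laplacian case is really just the telescoping identity $D(\Di M)^{N}=\Dhp X^{N}\Dhp$ for symmetry together with row-stochasticity of $W^{N}$; the spectral radius argument can be dropped.
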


Based on Lemma~\ref{lem_SS_GL} and Fact~\ref{fact_Lap_SDDM} we establish the following result.

\begin{lem}\label{lem_Lap_SDDM_Sparsifiers}
Suppose $D-M$ is $\GL$-matrix and $D-\hM\approx_{\eps}D-M$ is a spectral sparsifier. Then $D-D(\Di \hM)^{N}$ is $\GL$-matrix for every $N\in\N_{+}$.
\end{lem}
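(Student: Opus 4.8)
The plan is to reduce Lemma~\ref{lem_Lap_SDDM_Sparsifiers} to the closure results already established for $\GL$-matrices, by first upgrading the hypothesis on the sparsifier. The crucial observation is that a spectral sparsifier of a $\GL$-matrix produced by the normalized routines preserves the matrix type: the sparsifier $D-\hM$ is Laplacian whenever $D-M$ is Laplacian, and $\SDDM$ whenever $D-M$ is $\SDDM$. This is precisely the guarantee of Lemma~\ref{lem_SS_GL} (algorithm $\mKLC$), so I would invoke it first to replace the weak hypothesis ``$D-\hM\approx_{\eps}D-M$ is a spectral sparsifier'' with the structural statement ``$D-\hM$ is a $\GL$-matrix of the same type as $D-M$.''

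Once $D-\hM$ is known to be a $\GL$-matrix, I would finish by a case split on its type. If $D-\hM$ is Laplacian, then by Fact~\ref{fact_Lap_SDDM} (the Cheng et al.\ part) the matrix $D-D(\Di\hM)^{N}$ is Laplacian for every $N\in\N_{+}$. If $D-\hM$ is $\SDDM$, then by Lemma~\ref{lem_SDDM_Closure} the matrix $D-D(\Di\hM)^{N}$ is $\SDDM$ for every $N\in\N_{+}$. Since a matrix is a $\GL$-matrix exactly when it is Laplacian or $\SDDM$, both cases yield that $D-D(\Di\hM)^{N}$ is a $\GL$-matrix, which is the claimed conclusion; no further computation is needed beyond citing these two results.

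I expect the only genuinely delicate point to be the first step, namely certifying that $\hM$ enjoys the full structural properties of a $\GL$-matrix (non-positive off-diagonal entries together with the appropriate diagonal-dominance and kernel condition), rather than merely approximating the quadratic form of $D-M$. This is not automatic for an arbitrary spectral sparsifier, but it holds here because $\hM$ is the output of the type-preserving construction of Lemma~\ref{lem_SS_GL}: the non-negativity of $\hM$ is produced explicitly, and the operator $\approx_{\eps}$ preserves the kernel space, so the zero-row-sum structure of a Laplacian (or the strict positive-definiteness of an $\SDDM$ matrix) is retained. After that guarantee is in place, the remainder is a direct appeal to Lemma~\ref{lem_SDDM_Closure} and Fact~\ref{fact_Lap_SDDM}.
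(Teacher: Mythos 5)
Your proposal is correct and follows essentially the same route as the paper, which gives no explicit proof beyond the remark that the lemma is established ``based on Lemma~\ref{lem_SS_GL} and Fact~\ref{fact_Lap_SDDM}'': first certify via the normalized sparsification of Lemma~\ref{lem_SS_GL} that $D-\hM$ is itself a $\GL$-matrix of the same type, then apply the closure results (Fact~\ref{fact_Lap_SDDM} in the Laplacian case, Lemma~\ref{lem_SDDM_Closure} in the $\SDDM$ case). Your observation that the type-preservation of the sparsifier is the only delicate point, and is not automatic for an arbitrary quadratic-form sparsifier, accurately reflects the implicit assumption the paper is making.
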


\paragraph*{Proof of Lemma~\ref{lem_my_DappToD}} We use the following result that appears in Peng's thesis~\cite{PengPhd13}.

\begin{lem}
\cite{PengPhd13}\label{lem_DA}
Suppose $D-A$ is Laplacian matrix
$(\text{possibly }A_{ii}\neq0)$, and $\wD-\wA$
a sparsifier with $\wA_{ii}=0$ for every $i$ such that
$(1-\eps)(D-A)\preceq\wD-\wA\preceq D-A$.
Then the symmetric non-negative matrix $\hA=(D-\wD)+\wA$
satisfies $(1-\eps)(D-A)\preceq D-\hA\preceq(D-A)$.
\end{lem}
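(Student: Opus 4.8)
The plan is to reduce the entire statement to one algebraic identity together with a single coordinate-wise application of the spectral hypothesis. First I would note that $D-\hA$ is nothing other than the given sparsifier: substituting the definition $\hA=(D-\wD)+\wA$ gives
\[
D-\hA = D-(D-\wD)-\wA = \wD-\wA .
\]
Hence the target bound $(1-\eps)(D-A)\preceq D-\hA\preceq D-A$ coincides verbatim with the hypothesis $(1-\eps)(D-A)\preceq\wD-\wA\preceq D-A$, so the spectral part needs no argument at all. The whole proof therefore collapses to showing that the symmetric matrix $\hA$ is non-negative.

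For the non-negativity I would split into diagonal and off-diagonal entries. Symmetry is immediate, since $D-\wD$ is diagonal and $\wA$ is symmetric. For $i\neq j$ we have $(D-\wD)_{ij}=0$, so $\hA_{ij}=\wA_{ij}\geq0$ because $\wA$ is non-negative. The only entries carrying real content are the diagonal ones, $\hA_{ii}=D_{ii}-\wD_{ii}$, whose non-negativity is equivalent to the entrywise comparison $\wD_{ii}\leq D_{ii}$.

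The hard part is exactly this last comparison, because the hypothesis is a \emph{global} positive-semidefinite inequality whereas I need an \emph{entrywise} bound on the diagonals. I would bridge the gap by testing $\wD-\wA\preceq D-A$ against the $i$-th standard basis vector $e_i$, which reads off the $(i,i)$ entries and yields
\[
\wD_{ii}-\wA_{ii} = e_i^\rot(\wD-\wA)e_i \leq e_i^\rot(D-A)e_i = D_{ii}-A_{ii}.
\]
Invoking the hypotheses $\wA_{ii}=0$ and $A_{ii}\geq0$ then gives $\wD_{ii}\leq D_{ii}-A_{ii}\leq D_{ii}$, so that $\hA_{ii}=D_{ii}-\wD_{ii}\geq A_{ii}\geq0$. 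Together with the off-diagonal case this establishes that $\hA$ is symmetric non-negative, which combined with the opening identity $D-\hA=\wD-\wA$ completes the proof.
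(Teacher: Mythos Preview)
Your proof is correct. The key observation $D-\hA=\wD-\wA$ reduces the spectral claim to a tautology, and your entrywise argument via $e_i^{\rot}(\wD-\wA)e_i\leq e_i^{\rot}(D-A)e_i$ cleanly extracts the diagonal comparison $\wD_{ii}\leq D_{ii}$ from the global semidefinite hypothesis, using exactly the assumptions $\wA_{ii}=0$ and $A_{ii}\geq 0$ that the lemma provides.

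Note that the paper itself does not supply a proof of this statement: it is quoted as a known result from Peng's thesis and used as a black box in the proof of Lemma~\ref{lem_my_DappToD}. So there is no paper-internal argument to compare against. Your argument is the natural direct one and would serve perfectly well as an inline proof were one needed.
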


\begin{proof}[Proof of Lemma~\ref{lem_my_DappToD}]
Let $\widetilde{D_{1}}=\frac{1}{1+\eps}\wD$
and $\widetilde{A_{1}}=\frac{1}{1+\eps}\wA$. Then $\frac{1-\eps}{1+\eps}(D-A)\preceq\widetilde{D_{1}}-\widetilde{A_{1}}\preceq D-A$
and by Lemma \ref{lem_DA} the symmetric non-negative matrix $\hA=(D-\widetilde{D_{1}})+\widetilde{A_{1}}$
satisfies $\frac{1-\eps}{1+\eps}(D-A)\preceq D-\hA\preceq D-A.$
Since $\frac{1-\eps}{1+\eps}\geq1-2\eps$ for every $\eps\in(0,\frac{1}{2})$
the statement follows.
\end{proof}

\subsection{Structural Result}\label{appsubsec:SR}

Suppose $D-M$ is $\GL$-matrix. We show that the matrix $D-M\Di M$ can be expressed as a sum of a non-negative main diagonal matrix and a sum of Laplacian matrices.

\begin{proof}[Proof of \lemref{lemGLStruct}]
Let $M\in\mathbb{R}^{n\times n}$ and $nnz(M)=m$. We decompose the entries of matrix $M\Di M$ into three types. We set type $1$ to be the entries $(M\Di M)_{ii}=\sum_{k=1}^{n}M_{ik}^{2}/D_{k}$ for all $i$. We note that all entries of type $1$ can be computed in $O(m)$ time. We consider next the off-diagonal entries
\[
(M\Di M)_{ij}=\begin{array}{c}
\underbrace{(M_{ii}/d_{i}+M_{jj}/d_{j})\cdot M_{ij}}\\
\text{type 2}
\end{array}\begin{array}{c}
+\\
\\
\end{array}\begin{array}{c}
\underbrace{\sum_{k\neq\{ i,j\} }^{n}M_{ik}M_{jk}/D_{kk}}\\
\text{type 3}
\end{array}.
\]
Observe that the number of type $2$ entries is at most $m$. Now for a fixed $k$ we note that the corresponding entries that appear in type $1$ and type $3$ form
a weighted clique (with self-loops) whose adjacency matrix is defined by $\frac{1}{d_{k}}\eta_{k}\eta_{k}^{\rot}$.

Straightforward checking shows that $M\Di M=\mB+\sum_{i}\frac{1}{d_{i}}\eta_{i}\eta_{i}^{\rot}$.
By Lemma~\ref{lem_Closure} $D-M\Di M$ is $\GL$-matrix and thus diagonally dominant. Hence, the Laplacian matrices $\mL_{B}$ and $\mL_{N_i}$ for all $i$ exist. Moreover, we can compute in $O(m)$ time the positive diagonal matrices $\mD_{B}$ and $\mD_{N_i}=(s_{i}/d_{i})\diag(N_i)$ for all $i$. To see this, observe that $s_i$ and $d_i$ can be computed in $O(m)$ time for all $i$, and the number of elements in the disjoint union $|\sqcup_{i}N_i|\leq m$.
\end{proof}

\section{Bernstein Basis Matrix}\label{appsec:BBM}

We prove below that the Bernstein basis matrix in Problem \ref{prob_ExN_Bnk} has full rank.

\begin{lem}
\label{lem_full_Rank}
Suppose a vector $p\in(0,1)^{N+1}$ satisfies
$0<p_{i}\neq p_{j}<1$ for all $i\neq j$. Then the Bernstein
basis matrix $\mB_{N}(p)$ has a full rank.
\end{lem}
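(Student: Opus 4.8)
The plan is to reduce the claim to the nonsingularity of a standard Vandermonde matrix by exploiting the factorization already recorded in Lemma~\ref{lem_BB_DVpD}. That lemma gives $\mBNp = D_{p}\cdot\Vp\cdot D_{CN}$, where $D_{p}=\diag(\{(1-p_{j})^{N}\}_{j=1}^{N+1})$ and $D_{CN}=\diag(\{{N \choose i}\}_{i=0}^{N})$ are positive diagonal (hence invertible) matrices, while $\Vp$ is the Vandermonde matrix with entries $[\Vp]_{ji}=(p_{j}/(1-p_{j}))^{i}$. Since left- and right-multiplication by invertible matrices preserves rank, it suffices to show that $\Vp$ has full rank; equivalently, $\det\mBNp=\det(D_{p})\cdot\det(\Vp)\cdot\det(D_{CN})$, and the two diagonal determinants are manifestly nonzero because every $(1-p_{j})^{N}$ and every binomial coefficient ${N \choose i}$ is strictly positive.

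Next I would identify $\Vp$ as a genuine Vandermonde matrix in the nodes $x_{j}\triangleq p_{j}/(1-p_{j})$, whose determinant is the classical product $\det(\Vp)=\prod_{1\le j<k\le N+1}(x_{k}-x_{j})$. Thus $\Vp$ is nonsingular exactly when the nodes $x_{1},\dots,x_{N+1}$ are pairwise distinct, and the whole argument comes down to transferring the hypothesis $p_{i}\neq p_{j}$ to the transformed nodes.

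The only substantive step — and the point where I would be careful — is verifying that distinctness of the $p_{j}$ is inherited by the $x_{j}$. This holds because the map $g(p)=p/(1-p)$ is strictly increasing on $(0,1)$, since $g^{\prm}(p)=1/(1-p)^{2}>0$, and hence injective there. Consequently $p_{i}\neq p_{j}$ forces $x_{i}\neq x_{j}$, so every factor $x_{k}-x_{j}$ with $j<k$ is nonzero and therefore $\det(\Vp)\neq0$. Combining this with the first paragraph gives $\det\mBNp\neq0$, so the square matrix $\mBNp$ has full rank, as claimed.
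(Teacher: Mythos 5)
Your proof is correct, but it takes a genuinely different route from the paper's. The paper argues by contradiction via polynomial root counting: a nontrivial vanishing linear combination of the columns of $\mBNp$ would produce a polynomial $f_{\lambda}(x)=\sum_{i=0}^{N}\lambda_{i}B_{N,i}(x)$ of degree at most $N$ with $N+1$ distinct roots $p_{1},\dots,p_{N+1}$, forcing $f_{\lambda}\equiv0$ and hence (by linear independence of the Bernstein basis) $\lambda=0$, a contradiction. You instead invoke the factorization $\mBNp=D_{p}\cdot\Vp\cdot D_{CN}$ of Lemma~\ref{lem_BB_DVpD} and reduce everything to the nonsingularity of the Vandermonde matrix in the nodes $x_{j}=p_{j}/(1-p_{j})$, with the key observation that $p\mapsto p/(1-p)$ is injective on $(0,1)$ so the nodes stay distinct. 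There is no circularity, since Lemma~\ref{lem_BB_DVpD} is a direct entrywise computation independent of the rank claim. Your approach has the virtue of unifying the rank statement with the algorithmic reduction in Theorem~\ref{thm_my_Inv_Bnp} (which already relies on the same factorization) and of yielding an explicit nonzero determinant $\det(D_{p})\cdot\prod_{j<k}(x_{k}-x_{j})\cdot\det(D_{CN})$; the paper's argument is more self-contained, requiring only that a nonzero polynomial of degree $N$ has at most $N$ roots, though it leaves implicit the final step that $f_{\lambda}\equiv0$ forces $\lambda=0$.
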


\begin{proof}
Suppose for contradiction that $\mathrm{rank}(\mB_{N}(p))<N+1$.
Then there is a vector $\lambda\in\mathbb{R}^{N+1}$ such that the
linear combination of the columns of $\mB_{N}(p)$
satisfies $\sum_{i=0}^{N}\lambda_{j}[\mB_{N}(p)]_{:,i}=0$.
Let $f_{\lambda}(x)$ be a polynomial defined by $f_{\lambda}(x)\triangleq\sum_{i=0}^{N}\lambda_{i}\cdot B_{N,i}(x)=\sum_{i=0}^{N}\lambda_{i}\cdot{N \choose i}x^{i}(1-x)^{N-i}$.
Notice that $f_{\lambda}(p_{j})=0$ for every $j\in[1:N+1]$,
i.e. $f_{\lambda}(x)$ has $N+1$ roots. However, since
the polynomial $f_{\lambda}(x)$ has degree $N$ it follows
that $f_{\lambda}(x)\equiv0$. Therefore, we obtained the desired
contradiction.
\end{proof}

\section{Approximating Two Canonical PDFs}\label{appsec:AGPD}

Here, we illustrate the representational power of $\MDBD$. We show that there are $\MDBD$ satisfying the hypothesis in Theorem~\ref{thm_AppDscrPDF} and approximate two canonical continuous p.d.f.: the Uniform distribution and the Exponential Families. More precisely, we prove that the Uniform distribution and the Exponential families admit a multiplicative and an additive approximation, respectively.

\subsection{Uniform Distribution}

\begin{lem}[Uniform Distribution]\label{lem_Approx_Bni}
Let $w(x)=1$, $N\in\N_{+}$ and $\eps>0$. If $T\geq\Omega(N\eps^{-1/2})$ then it holds that
\[
\frac{1}{T+1}\sum_{j=1}^{T}B_{N,i}\left(\frac{j}{T+1}\right)\in\left[(1\pm\eps)\frac{1}{N+1}\right], \text{ for all}\,\,i\in[3:N-3].
\]
\end{lem}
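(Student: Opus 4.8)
The plan is to recognize the sum in question as the trapezoidal approximator of the Bernstein basis $B_{N,i}$ on $[0,1]$, and then control it with the error estimate of Theorem~\ref{thm_approx_Def_Intergral} combined with the second-derivative bound of Corollary~\ref{cor_integral_derBB}. First I would note that for $i\in[3:N-3]$ we have $0<i<N$, so both endpoints vanish, $B_{N,i}(0)=B_{N,i}(1)=0$. Hence, instantiating the discrete approximator of Theorem~\ref{thm_approx_Def_Intergral} at $M=T+1$ with $f=B_{N,i}$, the boundary term $\tfrac12(f(0)+f(1))$ drops out and
\[
A_{T+1}(B_{N,i})=\frac{1}{T+1}\sum_{j=1}^{T}B_{N,i}\!\left(\frac{j}{T+1}\right),
\]
which is precisely the quantity to be estimated. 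Since Lemma~\ref{lem_int_n1} gives the exact value $\int_0^1 B_{N,i}(t)\,dt=\tfrac{1}{N+1}$, the task reduces to bounding the discretization error $E_{T+1}(B_{N,i})=\big|A_{T+1}(B_{N,i})-\tfrac{1}{N+1}\big|$.

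Next I would bound this error exactly as in the proof of Lemma~\ref{lem_RatioLambda}. Theorem~\ref{thm_approx_Def_Intergral} yields
\[
E_{T+1}(B_{N,i})\leq\frac{1}{8(T+1)^2}\int_0^1\left|\frac{d^2 B_{N,i}(x)}{dx^2}(t)\right|dt,
\]
and for $i\in[3:N-3]$ the index constraints of Corollary~\ref{cor_integral_derBB} with $p=2$ are met (indeed $i\geq p+1=3$, $i\leq N-(p+1)=N-3$, and $i+2\leq N$), so that corollary gives $\int_0^1|B_{N,i}''(t)|\,dt\leq\tfrac{N!}{(N-2)!}\cdot\tfrac{2^2}{N+1}=\tfrac{4N(N-1)}{N+1}<4N$. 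Combining the two estimates, $E_{T+1}(B_{N,i})\leq\tfrac{N}{2(T+1)^2}$, uniformly over all $i\in[3:N-3]$.

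Finally I would convert this additive bound into the claimed multiplicative one. It suffices that $\tfrac{N}{2(T+1)^2}\leq\tfrac{\eps}{N+1}$, i.e. $(T+1)^2\geq\tfrac{N(N+1)}{2\eps}$, which holds once $T\geq\Omega(N\eps^{-1/2})$; under this hypothesis $\big|A_{T+1}(B_{N,i})-\tfrac{1}{N+1}\big|\leq\tfrac{\eps}{N+1}$, which is exactly the two-sided inclusion $\frac{1}{T+1}\sum_{j=1}^{T}B_{N,i}(j/(T+1))\in[(1\pm\eps)\tfrac{1}{N+1}]$ for every $i\in[3:N-3]$. I do not anticipate a genuine difficulty: the only points requiring care are the vanishing of the endpoints (which is what makes the stated sum coincide with $A_{T+1}$ rather than a shifted variant) and the verification that $[3:N-3]$ is precisely the regime where Corollary~\ref{cor_integral_derBB} applies with $p=2$ — which also explains why the lemma is restricted to that index range. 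I remark that this is the specialization $w\equiv1$ of Theorem~\ref{thmMDBD}, but working directly through Lemma~\ref{lem_int_n1} is cleaner, since it produces the exact target $\tfrac{1}{N+1}$ and avoids any lower bound on $N$.
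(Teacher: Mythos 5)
Your proof is correct and follows essentially the same route as the paper's: instantiate the trapezoid error bound of Theorem~\ref{thm_approx_Def_Intergral}, use Lemma~\ref{lem_int_n1} for the exact integral $\tfrac{1}{N+1}$, and bound $\int_0^1|B_{N,i}''|$ via Corollary~\ref{cor_integral_derBB} with $p=2$ to get the error $\tfrac{N}{2T^2}\leq\tfrac{\eps}{N+1}$. Your version is in fact slightly more careful than the paper's, in that you explicitly note the vanishing of $B_{N,i}$ at the endpoints (so the stated sum really is the discrete approximator $A_{T+1}$) and you keep the $T+1$ bookkeeping consistent.
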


\begin{proof}
By Theorem \ref{thm_approx_Def_Intergral} we have that
\[
\left|\sum_{i=1}^{T}L_{i}\right|\leq\frac{1}{2}\sum_{i=1}^{T}\int_{x_{i-1}}^{x_{i}}\left|\frac{1}{4N^{2}}-(t-c_{i})^{2}\right|\left|\frac{d^{2}B_{N,i}(x)}{dx^{2}}(t)\right|dt\leq\frac{1}{8N^{2}}\int_{0}^{1}\left|\frac{d^{2}B_{N,i}(x)}{dx^{2}}(t)\right|dt.
\]
By combining Lemma~\ref{lem_int_n1} and Corollary \ref{cor_integral_derBB} for every $i\in[3:N-3]$
it holds
\[
\left|\frac{1}{N+1}-\frac{1}{T+1}\sum_{j=1}^{T}B_{N,i}\left(\frac{j}{T+1}\right)\right|\leq\left|\sum_{i=1}^{T}L_{i}\right|\leq\frac{1}{8T^{2}}\int_{0}^{1}\left|\frac{d^{2}B_{N,i}(x)}{dx^{2}}(x)\right|dx\leq\frac{N}{2T^{2}}.
\]
We note that $\frac{N}{2T^{2}}\leq\frac{\eps}{N+1}$ since $T\geq\Omega(N\eps^{-1/2})$, and hence the statement follows.
\end{proof}

\begin{rem}
All conditions of Theorem~\ref{thm_AppDscrPDF} hold. Note that $w(x)=1\cdot 1$, i.e., $f(x)=1$.\\
a) $C=1$, b) $f(x)=1$, c) $\frac{1}{2}[f(0)+f(1)]=1$ and d) $\int_{0}^{1}|f^{(2)}(x)|dx=0$, since $f^{(2)}(x)=0$.
\end{rem}

\subsection{Exponential Families}

\begin{lem}[Exponential Families]\label{lem_approxExpFam}
Let $N\in\N$, $k\in[1,\sqrt{N}]$ and $w(x)=\frac{k}{1-e^{-k}}\cdot\exp\{ -k\cdot x\}$
is a probability density function. For any $\eps>0$ if $T\geq\Omega(N\sqrt{k/\eps})$ then for every $i\in[3:N-3]$ it holds for the function
$F_{i}(x)=w(x)\cdot B_{N,i}(x)$ that
\[
\left|(1+\eta_{i})\frac{w(i/N)}{N}-\frac{1}{T+1}\sum_{j=1}^{T}F_{i}\left(\frac{j}{T+1}\right)\right|\leq\frac{\eps}{N},\quad\text{where}\quad\left|\eta_{i}\right|\leq\frac{1}{4}.
\]
\end{lem}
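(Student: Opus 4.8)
The plan is to invoke Theorem~\ref{thmMDBD} essentially verbatim: the factor $\frac{1}{T+1}$ in front of the sum in the statement exactly reproduces the summand $F_{i}(x)=(w(x)/[T+1])B_{N,i}(x)$ of that theorem (here $F_i=w\cdot B_{N,i}$), so all of the real work is to verify the five hypotheses of Theorem~\ref{thmMDBD} for the specific density $w(x)=\frac{k}{1-e^{-k}}e^{-kx}$ and to choose the parameters $\phiw$, $\mu$, $N_0$ so that the instantiation $\eps_{I}=\eps$, $|\eta_{i}|\le\mu=\frac14$, and $T\ge\Omega(N\sqrt{\phiw/\eps_{I}})=\Omega(N\sqrt{k/\eps})$ all emerge.

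First I would exploit the key structural simplification: because $w$ is a pure exponential, every derivative is a scalar multiple of $w$, namely $w^{(j)}(x)=(-k)^{j}w(x)$. Hence $w^{\prm}/w=-k$, $w^{\prm\prm}/w=k^{2}$, $w^{\prm\prm\prm}/w=-k^{3}$ and $w^{\prm\mathrm{v}}/w=k^{4}$, so the ratios $b_{1},b_{2}$ collapse to explicit polynomials in $x$ and $k$, the prefactor $1/w(x)$ cancelling entirely. In particular
\[
b_{1}(x)=-1-k(1-2x)+\tfrac12 k^{2}x(1-x),
\]
and $b_{2}(x)$ admits the analogous closed form whose highest-order term in $k$ is $\frac18 k^{4}x^{2}(1-x)^{2}$. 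Using $|1-2x|\le1$, $x(1-x)\le\frac14$ and $x^{2}(1-x)^{2}\le\frac1{16}$ on $[0,1]$, these give $\max_{x}|b_{1}(x)|=O(k^{2})$ and $\max_{x}|b_{2}(x)|=O(k^{4})$.

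Next I would fix the parameters. Since $w$ is decreasing on $[0,1]$, $\max_{x}|w(x)|=w(0)=\frac{k}{1-e^{-k}}\le\frac{k}{1-e^{-1}}$, so setting $\phiw=\Theta(k)$ satisfies condition~3 and, as $k\ge1$, keeps $\phiw\ge1$; the derivative conditions~1--2 then reduce to the weak requirement $N_{0}=\Omega(k)$. The binding constraints are conditions~4--5, which with $\mu=\frac14$ read $\max_{x}|b_{2}|\le\frac18 N_{0}^{2}$ and $\max_{x}|b_{1}|\le\frac18 N_{0}$; together with the bounds above they force $N_{0}=\Theta(k^{2})$. With all five conditions verified and $\eps_{I}=\eps$, Theorem~\ref{thmMDBD} then yields the claimed inequality for every $i\in[3:N-3]$ and every $T\ge\Omega(N\sqrt{k/\eps})$, with $\eta_{i}\in[-\frac14,\frac14]$.

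The main obstacle is parameter bookkeeping rather than any hard analysis: once the log-derivative identity reduces $b_{1},b_{2}$ to polynomials the bounds are routine, but one must reconcile the $\Theta(k^{2})$ lower bound on $N_{0}$ coming from condition~5 (driven by the $\frac12 k^{2}x(1-x)$ term in $b_{1}$) with the stated range $k\le\sqrt{N}$, checking that the hidden constant in $N_{0}=\Theta(k^{2})$ is absorbed so that $N\ge N_{0}$ holds and $\mu=\frac14$ remains admissible throughout.
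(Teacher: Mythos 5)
Your proposal is correct and follows essentially the same route as the paper: both exploit $w^{(j)}=(-k)^{j}w$ to reduce $b_{1},b_{2}$ to explicit polynomials in $x$ and $k$, bound them by $O(k^{2})$ and $O(k^{4})$ to get $\mu=\tfrac14$, take $\phiw=\Theta(k)$, and conclude via Theorem~\ref{thmMDBD} that $T\geq\Omega(N\sqrt{k/\eps})$ suffices (the paper merely inlines the trapezoid-error integrals $\int|w^{\prm\prm}B_{N,i}|$, $\int|w^{\prm}B_{N,i}^{\prm}|$, $\int|wB_{N,i}^{\prm\prm}|$ to land on $\lesssim kN/T^{2}$ directly, which is what your black-box invocation with $\phiw=\Theta(k)$ gives anyway). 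The constant-factor tension you flag at the boundary $k=\sqrt{N}$ (where $N_{0}=\Theta(k^{2})$ must still satisfy $N\geq N_{0}$ with $\mu=\tfrac14$) is present, and glossed over, in the paper's own proof as well.
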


\begin{proof}
The $p$th derivative of function $w(x)$ satisfies $w^{(p)}(x)=(-k)^{p}\cdot w(x)$. Let $I=[0,1]$ be an interval. Straightforward checking shows that
\begin{equation}\label{eq:pthDerW}
\max_{x\in I}|w^{(p)}(x)| = k^p \cdot \max_{x\in I}|w(x)| < 2\cdot k^{p+1}.
\end{equation}
By the definition of function $b_1(x)$ (c.f. Theorem~\ref{thm_Hald68}) we have
\[
b_{1}(x)=-\frac{k^2}{2}\cdot x^{2} + \left(2k + \frac{k^2}{2} \right)\cdot x -(1+k),
\]
and we can show that $\max_{x\in I}b_{1}(x)\leq 1+k^2/8 \leq 1+N/8$. The function $b_2(x)$ satisfies
\[
b_{2}(x)=1+3(1-2x)k+(1-6x+6x^{2})k^2-\frac{5}{6}x(1-x)(1-2x)k^3 + \frac{1}{8}x^{2}(1-x)^{2}k^{4},
\]
and we can show that $\max_{x\in I}b_{2}(x)\ll k^{4}/8\leq N^2/8$. By Theorem~\ref{thmMDBD} it suffices to upper bound the following four cases.\\
\\
\textbf{Case 1:} By Theorem \ref{thm_Hald68} $\mu\leq\frac{1}{4}$, since $\max_{x\in I}|b_{1}(x)|\leq 1+N/8$
and $\max_{x\in I}|b_{2}(x)|\ll N^{2}/8$.\\
\\
\textbf{Case 2:} We combine Equation \ref{eq:pthDerW}
and Lemma \ref{lem_int_n1} to obtain
\[
\int_{0}^{1}|w^{\prm\prm}(t)\cdot B_{N,i}(t)|dt < 2\cdot k^{3}\cdot\int_{0}^{1}|B_{N,i}(t)|dt=\frac{2\cdot k^{3}}{N+1}.
\]
\textbf{Case 3:} By combining Equation \ref{eq:pthDerW} and Lemma \ref{cor_integral_derBB} it holds
\[
\int_{0}^{1}|w^{\prm}(t)\cdot B_{N,i}^{\prm}(t)|dt < 2\cdot k^{2}\cdot\int_{0}^{1}|B_{N,i}^{\prm}(t)|dt < 4\cdot k^{2}.
\]
\textbf{Case 4:} We use again Equation \ref{eq:pthDerW} and Lemma \ref{cor_integral_derBB} to obtain
\[
\int_{0}^{1}|w(t)\cdot B_{N,i}^{\prm\prm}(t)|dt < 2\cdot k \cdot \int_{0}^{1}|B_{N,i}^{\prm\prm}(t)|dt
< 8\cdot k\cdot N.
\]
Recall that $F_i(x)=w(x)B_{N,i}(x)$. By combining $|\frac{d^{2}F_i(x)}{dx^{2}}|=w^{\prm\prm}\cdot B_{N,i}+2w^{\prm}\cdot B_{N,i}^{\prm}+w\cdot B_{N,i}^{\prm\prm}$ and Theorem \ref{thm_approx_Def_Intergral} we have
\[
\left|\sum_{i=1}^{T}L_{i}\right| \leq \frac{1}{8T^{2}}\int_{0}^{1}\left|\frac{d^{2}F(x)}{dx^{2}}(t)\right|dt \lesssim \frac{k\cdot N}{T^2}.
\]
Hence, the statement follows.
\end{proof}

\begin{rem}
The hypothesis in Theorem~\ref{thm_AppDscrPDF} holds. Note that $w(x)=\frac{k}{1-e^{-k}}\cdot\exp\{ -k\cdot x\}$ and $\phiw=\frac{k}{1-e^{-k}}$. Thus  $C=\frac{k}{1-e^{-k}}$, $f(x)=\exp\{ -k\cdot x\}$, $f^{(2)}(x)=k^2\cdot f(x)$ and $k\in[1,\sqrt{N}]$. Furthermore,\\
a) $C=o(N)$, b) $0\leq f(x)\leq1$, c) $\frac{1}{2}\leq\frac{1}{2}[f(0)+f(1)]<1$ and for d) we have
\[
\int_{0}^{1}|f^{(2)}(x)|dx = k^{2}\int_{0}^{1}|f(x)|dx = \frac{k^{2}}{C}=(1-e^{-k})k = o(N).
\]
\end{rem}

\section{Schur Complement}\label{appsec:ASC}

In this section we prove \lemref{lemSchurRec}. We use the following result proposed by Peng et al.~\cite{arxivCCLPT15}.

\begin{lem}
\label{lem_DPM}\cite[Lemma 4.3]{arxivCCLPT15} If $M$ is \emph{$\SPSD$}
matrix and $(1-\eps)(D-M)\preceq D-\hM \preceq(1+\eps)(D-M)$
then it holds that $(1-\eps)(D+M)\preceq D+\hM \preceq(1+\eps)(D+M).$
\end{lem}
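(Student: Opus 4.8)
The plan is to rewrite both the hypothesis and the conclusion as one-sided positive semi-definite (PSD) inequalities and then observe that passing from the former to the latter costs only a multiple of $M$, which is PSD precisely because $M$ is $\SPSD$. No reduction or black-box is needed; the entire argument is an elementary manipulation of the Loewner order.

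First I would expand the two-sided hypothesis. From $(1-\eps)(D-M)\preceq D-\hM$ and $D-\hM\preceq(1+\eps)(D-M)$, cancelling the $D$ contributions on both sides, I obtain the equivalent pair
\[
\hM - (1-\eps)M \preceq \eps D \qquad\text{and}\qquad (1+\eps)M - \hM \preceq \eps D.
\]
Symmetrically, the desired conclusion $(1-\eps)(D+M)\preceq D+\hM\preceq(1+\eps)(D+M)$ is, after the same cancellation, equivalent to the pair
\[
\hM - (1+\eps)M \preceq \eps D \qquad\text{and}\qquad (1-\eps)M - \hM \preceq \eps D.
\]

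Then the proof closes via the two algebraic identities $\hM-(1+\eps)M = [\hM-(1-\eps)M] - 2\eps M$ and $(1-\eps)M-\hM = [(1+\eps)M-\hM] - 2\eps M$. Since $M$ is $\SPSD$, we have $2\eps M\succeq0$, so subtracting it can only decrease a matrix in the Loewner order; plugging the two hypothesis inequalities into the bracketed terms yields the two conclusion inequalities at once. The only use of the $\SPSD$ assumption is this PSD-ness of $2\eps M$, and it is genuinely the crux: without $M\succeq0$ the sign of the correction term is uncontrolled and the implication fails. (Equivalently, one may first conjugate everything by $\Dhm$ via Fact~\ref{fact_fiveProps}.e to reduce to the case $D=I$ with $X=\Dhm M\Dhm\succeq0$, but the direct computation above avoids even that step.)
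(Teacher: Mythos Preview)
Your argument is correct: the rearrangement of each side of the hypothesis and conclusion into the form ``$\ldots\preceq\eps D$'' is accurate, and the two identities $\hM-(1+\eps)M=[\hM-(1-\eps)M]-2\eps M$ and $(1-\eps)M-\hM=[(1+\eps)M-\hM]-2\eps M$ indeed reduce the conclusion to the hypothesis plus $2\eps M\succeq0$. Note that the paper does not actually prove this lemma---it is quoted verbatim from~\cite[Lemma~4.3]{arxivCCLPT15} and used as a black box---so there is no in-paper argument to compare against; your self-contained Loewner-order computation is exactly the standard way to establish it.
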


We extend next two technical results on Schur complement that appeared in~\cite{PengPhd13,MP13,CCLPT15}.

\begin{clm}\label{clm_Schur}
Suppose $X\triangleq\left[\begin{array}{cc}
P_{1} & -M\\
-M & P_{2}
\end{array}\right]$ where $P_{1}$ and $P_{2}$ are symmetric positive definite matrices
and $M$ is symmetric matrix. Then $\,$ $v^{\rot}[P_{2}-MP_{1}^{-1}M]v=\min_{u}\left(\begin{array}{c}
u\\
v
\end{array}\right)^{\rot}X\left(\begin{array}{c}
u\\
v
\end{array}\right)$ for every $v$.
\end{clm}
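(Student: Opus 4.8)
The plan is to prove the identity by completing the square in the block variable $u$ (with $v$ held fixed). First I would expand the right-hand quadratic form using the block structure of $X$, writing
\[
\begin{pmatrix} u\\ v\end{pmatrix}^{\rot} X \begin{pmatrix} u\\ v\end{pmatrix} = u^{\rot}P_{1} u - u^{\rot}M v - v^{\rot}M u + v^{\rot}P_{2} v = u^{\rot}P_{1} u - 2\,u^{\rot}M v + v^{\rot}P_{2} v,
\]
where the two cross terms combine because $M$ is symmetric, so $v^{\rot}M u = (v^{\rot}M u)^{\rot} = u^{\rot}M v$. I denote this scalar-valued function of $u$ by $q(u)$.

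Next, since $P_{1}$ is symmetric positive definite, its inverse $P_{1}^{-1}$ exists and is also symmetric, so I can complete the square around the candidate minimizer $u^{\star}=P_{1}^{-1}M v$. The key algebraic claim is the exact decomposition
\[
q(u) = (u - P_{1}^{-1}M v)^{\rot} P_{1} (u - P_{1}^{-1}M v) + v^{\rot}\bigl[P_{2} - M P_{1}^{-1} M\bigr] v,
\]
which I would verify by expanding the first summand and checking that its $u$-dependent pieces reproduce $u^{\rot}P_{1}u - 2\,u^{\rot}M v$ while the residual constant is precisely the Schur-complement quadratic form. Here again the symmetry of both $M$ and $P_{1}^{-1}$ is used to collapse the cross terms, e.g.\ $(P_{1}^{-1}Mv)^{\rot}P_{1} u = v^{\rot}M u = u^{\rot}M v$.

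Finally, because $P_{1} \succ 0$, the first summand is nonnegative and vanishes exactly when $u = P_{1}^{-1}M v$; hence the minimum over $u$ equals the second summand, which gives
\[
\min_{u} \begin{pmatrix} u\\ v\end{pmatrix}^{\rot} X \begin{pmatrix} u\\ v\end{pmatrix} = v^{\rot}\bigl[P_{2} - M P_{1}^{-1} M\bigr] v,
\]
as desired. An equivalent route is to minimize $q$ by calculus: $\nabla_{u} q = 2P_{1}u - 2Mv = 0$ yields $u^{\star}=P_{1}^{-1}Mv$, positive definiteness of $P_{1}$ certifies it is the global minimum, and substituting back produces the same value. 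I do not expect a serious obstacle here; the only place requiring care is the bookkeeping of transposes and the repeated use of the symmetry of $M$ and $P_{1}^{-1}$ so that all cross terms cancel correctly. Note that the hypothesis $P_{1}\succ0$ (rather than merely PSD) is essential both for $P_{1}^{-1}$ to be defined and for the minimizer to be attained.
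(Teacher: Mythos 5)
Your proposal is correct and follows essentially the same route as the paper: both expand the block quadratic form to $u^{\rot}P_{1}u-2u^{\rot}Mv+v^{\rot}P_{2}v$ and identify the minimizer $u^{\star}=P_{1}^{-1}Mv$ (the paper via $\nabla f_{v}(u)=2P_{1}u-2Mv$, you via the equivalent completion of the square, which additionally makes explicit why the critical point is the global minimum). No gaps.
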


\begin{proof}
Suppose $f_{v}(u)\triangleq\left(\begin{array}{c}
u\\
v
\end{array}\right)^{\rot}\left[\begin{array}{cc}
P_{1} & -M\\
-M & P_{2}
\end{array}\right]\left(\begin{array}{c}
u\\
v
\end{array}\right)=u^{\rot}P_{1}u-2u^{\rot}Mv+v^{\rot}P_{2}v$. Notice that $f$ is minimized when $u=P_{1}^{-1}Mv$, since $\nabla f_{v}(u)=2P_{1}u-2Mv$.
Hence, it follows that $\min_{u}\left(\begin{array}{c}
u\\
v
\end{array}\right)^{\rot}X\left(\begin{array}{c}
u\\
v
\end{array}\right)=v^{\rot}P_{2}v-v^{\rot}MP_{1}^{-1}Mv=v^{\rot}[P_{2}-MP_{1}^{-1}M]v.$
\end{proof}

\begin{lem}
\label{lem_Schur}(Schur Complement) Suppose $D_{1},D_{2}$ are positive main diagonal matrices
and $M,Q$ are symmetric matrices. If $X_{M}\triangleq\left[\begin{array}{cc}
D_{1} & -M\\
-M & D_{2}
\end{array}\right]\approx_{\eps}\left[\begin{array}{cc}
D_{1} & -Q\\
-Q & D_{2}
\end{array}\right]\triangleq X_{Q}$ then it holds that $D_{2}-MD_{1}^{-1}M\approx_{\eps}D_{2}-QD_{1}^{-1}Q$.
\end{lem}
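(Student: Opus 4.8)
The plan is to derive the statement from Claim~\ref{clm_Schur}, which identifies each Schur complement with a pointwise minimum of the associated block quadratic form over the ``eliminated'' coordinates. Since $D_{1}$ is positive definite, Claim~\ref{clm_Schur} applies to both $X_{M}$ and $X_{Q}$ (instantiated with $P_{1}=D_{1}$, $P_{2}=D_{2}$), so that for every vector $v$ we have the identities
\[
v^{\rot}[D_{2}-MD_{1}^{-1}M]v=\min_{u} f_{v}^{M}(u),\qquad v^{\rot}[D_{2}-QD_{1}^{-1}Q]v=\min_{u} f_{v}^{Q}(u),
\]
where $f_{v}^{M}(u)\triangleq \binom{u}{v}^{\rot}X_{M}\binom{u}{v}$ and $f_{v}^{Q}$ is defined analogously. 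Each minimum is attained because $f_{v}^{M}$ (resp.\ $f_{v}^{Q}$) is a strictly convex quadratic in $u$ with Hessian $2D_{1}\succ0$; denote the minimizers by $u_{M}^{\star}$ and $u_{Q}^{\star}$.

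First I would unfold the hypothesis $X_{M}\approx_{\eps}X_{Q}$ into its defining quadratic-form inequality $(1-\eps)X_{Q}\preceq X_{M}\preceq(1+\eps)X_{Q}$ and evaluate it on the stacked vector $\binom{u}{v}$, obtaining, for every fixed $v$ and all $u$,
\[
(1-\eps)\,f_{v}^{Q}(u)\;\le\; f_{v}^{M}(u)\;\le\;(1+\eps)\,f_{v}^{Q}(u).
\]
The crux is then to pass from this \emph{pointwise} comparison to a comparison of the two \emph{minima}, keeping in mind that $f_{v}^{M}$ and $f_{v}^{Q}$ are minimized at different points. For the upper bound I would chain
\[
\min_{u} f_{v}^{M}(u)=f_{v}^{M}(u_{M}^{\star})\le f_{v}^{M}(u_{Q}^{\star})\le(1+\eps)f_{v}^{Q}(u_{Q}^{\star})=(1+\eps)\min_{u} f_{v}^{Q}(u),
\]
using minimality of $u_{M}^{\star}$ for $f_{v}^{M}$ in the first step and the pointwise bound in the second. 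Symmetrically, for the lower bound,
\[
\min_{u} f_{v}^{M}(u)=f_{v}^{M}(u_{M}^{\star})\ge(1-\eps)f_{v}^{Q}(u_{M}^{\star})\ge(1-\eps)\min_{u} f_{v}^{Q}(u),
\]
now invoking minimality of $u_{Q}^{\star}$ for $f_{v}^{Q}$. Substituting the two identities from Claim~\ref{clm_Schur} converts these into $(1-\eps)\,v^{\rot}[D_{2}-QD_{1}^{-1}Q]v\le v^{\rot}[D_{2}-MD_{1}^{-1}M]v\le(1+\eps)\,v^{\rot}[D_{2}-QD_{1}^{-1}Q]v$ for every $v$, which is precisely $D_{2}-MD_{1}^{-1}M\approx_{\eps}D_{2}-QD_{1}^{-1}Q$.

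The step I expect to require the most care — and the only genuinely non-routine point — is the two-minimizer argument: one is tempted to compare the minima by plugging a single common $u$, which does not work. Instead the upper and lower bounds must use \emph{different} reference minimizers ($u_{Q}^{\star}$ for the upper bound, $u_{M}^{\star}$ for the lower), exploiting in each case that a minimizer of one form is a feasible (hence suboptimal) point for the other. Everything else, namely the existence of the minima from $D_{1}\succ0$ and the direct applicability of Claim~\ref{clm_Schur}, is immediate.
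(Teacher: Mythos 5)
Your proposal is correct and follows essentially the same route as the paper: both derive the result from the variational characterization of the Schur complement in Claim~\ref{clm_Schur} and compare the two minima by evaluating each quadratic form at the other's minimizer (the paper writes out only the upper bound, using the minimizer $w$ of the $X_{Q}$-form, and declares the lower bound analogous — which is exactly the symmetric argument you spell out). No gaps.
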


\begin{proof}
Here we show the upper bound, but
the lower bound follows by analogy. Let $w$ be a vector such that
$\left(\begin{array}{c}
w\\
v
\end{array}\right)^{\rot}X_{Q}\left(\begin{array}{c}
w\\
v
\end{array}\right)=\min_{u}\left(\begin{array}{c}
u\\
v
\end{array}\right)^{\rot}X_{Q}\left(\begin{array}{c}
u\\
v
\end{array}\right)$. We apply twice Claim \ref{clm_Schur}
to obtain the following chain of inequalities
\begin{eqnarray*}
v^{\rot}[D_{2}-MD_{1}^{-1}M]v&=&\min_{u}\left(\begin{array}{c}
u\\
v
\end{array}\right)^{\rot}X_{M}\left(\begin{array}{c}
u\\
v
\end{array}\right)\leq\left(\begin{array}{c}
w\\
v
\end{array}\right)^{\rot}X_{M}\left(\begin{array}{c}
w\\
v
\end{array}\right)\\&\leq&(1+\eps)\left(\begin{array}{c}
w\\
v
\end{array}\right)^{\rot}X_{Q}\left(\begin{array}{c}
w\\
v
\end{array}\right)=(1+\eps)\min_{u}\left(\begin{array}{c}
u\\
v
\end{array}\right)^{\rot}X_{Q}\left(\begin{array}{c}
u\\
v
\end{array}\right)\\&=&(1+\eps)v^{\rot}[D_{2}-QD_{1}^{-1}Q]v.
\end{eqnarray*}
\end{proof}

We prove \lemref{lemSchurRec} by arguing in a similar manner to in~\cite[Lemma 4.4]{arxivCCLPT15}. We present the proof here for completeness.

\begin{proof}[Proof of \lemref{lemSchurRec}] For any symmetric matrix $X$, we denote
by $\mathcal{P}_{X}=\left[\begin{array}{cc}
D & -X\\
-X & D
\end{array}\right]$. We prove the upper bound, but the lower bound follows by analogy.
Straightforward checking shows that
\begin{eqnarray*}
 &  & \left(\begin{array}{c}
u\\
v
\end{array}\right)^{\rot}\mathcal{P}_{\hM }\left(\begin{array}{c}
u\\
v
\end{array}\right)=u^{\rot}Du-v^{\rot}\hM u-u^{\rot}\hM v+v^{\rot}Dv\\
 & = & \frac{1}{2}\left[(u+v)^{\rot}(D-\hM )(u+v)+(u-v)^{\rot}(D+\hM )(u-v)\right]
\end{eqnarray*}
By Lemma \ref{lem_DPM} it holds that $D+\hM \approx_{\eps}D+M$
and thus we have
\begin{eqnarray*}
\left(\begin{array}{c}
u\\
v
\end{array}\right)^{\rot}\mathcal{P}_{\hM}\left(\begin{array}{c}
u\\
v
\end{array}\right)&=&\frac{1}{2}\left[(u+v)^{\rot}(D-\hM)(u+v)+(u-v)^{\rot}(D+\hM)(u-v)\right]\\&\leq&(1+\eps)\frac{1}{2}\left[(u+v)^{\rot}(D-M)(u+v)+(u-v)^{\rot}(D_{2}+M)(u-v)\right]\\&=&(1+\eps)\left(\begin{array}{c}
u\\
v
\end{array}\right)^{\rot}\mathcal{P}_{M}\left(\begin{array}{c}
u\\
v
\end{array}\right).
\end{eqnarray*}
Hence, it follows that $\mathcal{P}_{\hM }=\left[\begin{array}{cc}
D & -\hM \\
-\hM  & D
\end{array}\right]\approx_{\eps}\left[\begin{array}{cc}
D & -M\\
-M & D
\end{array}\right]=\mathcal{P}_{M}$. Now, by Lemma \ref{lem_Schur} it holds that $D-\hM \Di \hM \approx_{\eps}D-M\Di M$.
\end{proof}

\end{document}